\documentclass[12pt]{elsarticle}

\usepackage[T1]{fontenc}
\usepackage{color}
\usepackage{array}
\usepackage{colortbl}
\usepackage{nomencl}
\makenomenclature
\usepackage{amssymb}
\usepackage{amsmath}
\usepackage{algorithm}
\usepackage{algorithmicx,algpseudocode}
\usepackage{etoolbox}
\usepackage{mathtools}
\usepackage{epsfig}
\usepackage{tabularx}
\usepackage{multirow}
\usepackage{subfigure}
\usepackage{babel}
\usepackage{psfrag}
\usepackage{paralist}
\usepackage{setspace}
\usepackage{comment}
\usepackage{morefloats}
\usepackage{vale}
\usepackage{enumitem}
\usepackage[normalem]{ulem}
\usepackage[hidelinks,pdfencoding=auto, psdextra]{hyperref}
\hypersetup{pdfauthor={Name}}
\usepackage{bm}
\usepackage[a4paper, margin=1in]{geometry}
\PassOptionsToPackage{pdftex}{graphicx}
\usepackage{tikz}
\usepackage{algorithmicx,algpseudocode}
\usepackage[compact]{titlesec}
\usepackage{titlesec}
\usepackage{url}
\usepackage{enumitem}
\usepackage{amsthm}

\usepackage{listings}
\usepackage{xcolor}

\lstdefinestyle{cppcompact}{
  language=C++,
  basicstyle=\ttfamily\small,
  numbers=left,
  numberstyle=\tiny,
  numbersep=6pt,
  frame=single,
  breaklines=true,
  columns=fullflexible,
  keepspaces=true,
  showstringspaces=false,
  tabsize=2
}

\newtheorem{theorem}{Theorem}[section]

\newtheorem{proposition}[theorem]{Proposition}
\newtheorem{definition}[theorem]{Definition}

\newcounter{bla}

\newcommand{\statevec}[1]{\boldsymbol{\mathsf{#1}}}

\journal{Journal of Computational Physics}

\begin{document}

\begin{frontmatter}

\title{On the transmission of floating-point perturbations in flow--dependent filter--width formulations in\\ Large--Eddy Simulation}

\author[ancona]{Valerio D'Alessandro \corref{cor}}
\ead{v.dalessandro@univpm.it}
\cortext[cor]{Corresponding author}
\author[cineca_rm]{Alessio Piccolo}
\author[ancona]{Matteo Falone}
\author[cineca_bo]{Simone Bnà}


\address[ancona]{Dipartimento di Ingegneria Industriale e Scienze Matematiche,\\ Universit\`a Politecnica delle Marche, Via Brecce Bianche 12, 60131, Ancona, Italy}
\address[cineca_rm]{CINECA, Direzione HPC, Piazza dell'Indipendenza 11/B, 00185, Rome, Italy}
\address[cineca_bo]{CINECA, Direzione HPC, Via Magnanelli 2, 40033, Casalecchio di Reno (BO), Italy}

\begin{abstract}
Heterogeneous high--performance computing architectures expose numerical algorithms
to perturbations arising from the non--associativity of floating--point arithmetic. In Large--Eddy Simulation (LES), similar numerical effects may become relevant when they affect
the filter--width entering the subgrid scale (SGS) model. This work investigates this mechanism for the least--squares (LSQ) based filter--width formulation, focusing on how floating--point effects are generated, transmitted, and coupled with the resolved flow. We show that, for fixed resolved kinematics, the LSQ
filter--width is logarithmically non-expansive but not strictly contractive with respect
to perturbations of the mesh metrics. Consequently, small disturbances may be transmitted with little attenuation through strongly directional filter-width responses.  To mitigate this sensitivity, we introduce a scalar max--min compression of the directional mesh scales together with a bounded modulation based on the resolved velocity gradient. The resulting formulation reroutes floating--point perturbations
through the filter-width operator, improving robustness while preserving the
flow--dependent character of the original LSQ construction.\\
The framework is assessed on heterogeneous CPU and GPU architectures for flow past
a circular cylinder at $\mathrm{Re}=3900$ and the Taylor--Green vortex at $\mathrm{Re}=1600$. 
In the former, nearly one--to--one transmission of relative metric disturbances can become relevant when the transmitted perturbations interact with shear-layer transition.
By contrast, on orthogonal Taylor--Green vortex grids, the accumulation
pathway is structurally absent and no comparable macroscopic response develops.
These results suggest floating--point sensitivity matters for LES filter--width formulations in heterogeneous computing environments.
\end{abstract}
\begin{keyword}
Large–Eddy Simulation, Flow–dependent filter width, Floating–point perturbations, GPU computing
\end{keyword}

\end{frontmatter}

\section{Introduction} \label{sec:intro}
Since its introduction, Large--Eddy Simulation (LES) has become an established approach in fluid dynamics, bridging the gap between the high computational cost of direct numerical simulation and the modeling limitations of Reynolds-Averaged Navier-Stokes (RANS) approaches.
One of the main barriers to the widespread industrial adoption of LES has traditionally been its computational cost~\cite{ChoiMoin2012,BosePark2018}. However, the rapid growth of heterogeneous high--performance computing (HPC) systems (based on the integration of CPU and GPU) has substantially alleviated this limitation, enabling simulations at unprecedented spatial and temporal resolutions,~\cite{Roccon2026}. At the same time, massively parallel architectures introduce numerical effects that are less prominent in conventional CPU based implementations. In particular, owing to the non--associativity of floating-point arithmetic, parallel accumulation operations may introduce small round–-off perturbations that, in turn, can propagate through LES filtering and subgrid--scale (SGS) modeling operations.\\
%
%
Discretizations and modeling errors in LES have been extensively investigated,~\cite{Ghosal1996,Sagaut2006}, whereas the numerical implications related to modern heterogeneous hardware remain fairly underexplored.
This issue is particularly relevant because filtering and SGS modeling introduce additional mathematical operators through which small numerical perturbations may enter and, eventually, alter the resolved--scale dynamics.
In addition, recent papers indicate that specific numerical formulations and derived quantities may retain a marked sensitivity to finite--precision arithmetic perturbations in LES,~\cite{KARP2026114600}. Nevertheless, turbulence statistics can remain remarkably robust under reduced floating--point precision,~\cite{KARP2026114600}.
%
%
Against this background, the present work focuses on flow--dependent filter--width formulations.
These approaches are particularly relevant on anisotropic grids, where the definition of an appropriate SGS length scale has been recognized as a non--trivial problem. For example, early attempts to explicitly account for grid anisotropy were proposed by Scotti et al.~\cite{Scotti1993}. 
However, more recent approaches extend this idea by incorporating not only the local grid geometry but also the orientation of the resolved flow.
In particular, Trias et al.,~\cite{trias2017,trias:2025}, introduced the least--squares length (LSQ), which has proved effective for LES on anisotropic grids in both incompressible and compressible flow problems,~\cite{trias2017,dalessandro2024}.
Notably, its dependence on the directional grid metrics provides a mechanism through which small perturbations of these quantities can be transmitted to the SGS closure.\\
In this paper, the aforementioned behavior is investigated from both mathematical and computational perspectives.
First, the response of the standard LSQ operator to perturbations of the directional mesh metrics is characterized for fixed resolved kinematics.
The analysis shows that the LSQ mapping is logarithmically non expansive but not strictly contractive, so that metric perturbations may be transmitted with little attenuation when the resolved kinematics strongly favor one directional contribution.
%
To control this mechanism, we introduce a grid anisotropy aware scalar scale obtained through a max--min compression of the directional mesh metrics. The resulting filter reroutes metric perturbation transmission while preserving a bounded dependence on the resolved kinematics.
Additionally, the theoretical analysis is complemented by numerical experiments on state--of--the--art heterogeneous CPU and GPU architectures, considering two flow cases: the flow past a circular cylinder at $\mathrm{Re}=3900$ and the Taylor--Green vortex (TGV) at $\mathrm{Re}=1600$.
Several diagnostic strategies are also devised and employed to isolate the respective roles of metric assembly and the flow--dependent part of the LSQ formulation.
%
%
%
%
The cylinder flow shows that the nearly one--to--one transmission of  perturbations through the filter--width operator can become relevant when coupled with a receptive transitional shear--layer. On the other hand, the TGV, on the adopted orthogonal grid, provides a case in which the corresponding directional accumulation mechanism is structurally absent.\\
Finally, all the proposed developments are implemented within SPUMA~\cite{bna2025spuma}, a minimally invasive GPU 
oriented framework for OpenFOAM. It enables the execution of LES solvers on modern 
accelerator-based HPC systems while preserving the original algorithmic structure 
of the underlying OpenFOAM implementation.\\
The remainder of this paper is organized as follows.  After introducing the governing equations, SGS modeling, and filter--width formulations in Sec.~\ref{sec:gov}, Sec.~\ref{sec:theory} examines the transmission of metric perturbations and the structural properties of the proposed formulation. The corresponding implementation strategies are then discussed in Sec.~\ref{sec:impl}, before the numerical results are presented in Sec.~\ref{sec:results}. The conclusions are finally summarized in Sec.~\ref{sec:conclusions}. 
\section{Governing equations}\label{sec:gov}
The governing equations are the Favre--filtered compressible Navier--Stokes equations, written in conservative form as
\begin{equation}
\frac{\partial \bar{\uvec}}{\partial t}
+ \NABLA \cdot {\statevec F}_{\mathrm{c}}(\bar{\uvec})
- \NABLA \cdot {\statevec F}_{\mathrm{v}}(\bar{\uvec},\NABLA\bar{\uvec})
+ \NABLA \cdot {\statevec F}_{\mathrm{SGS}}(\bar{\uvec},\NABLA\bar{\uvec})
=  {\bf 0}
\qquad
\text{in } Q~,
\label{eq:ns-les}
\end{equation}
where $Q=\Omega\times[0,T]$, $\Omega\subset\mathbb{R}^3$ denotes the computational domain and $T$ is the final simulation time.
The system is supplemented with suitable initial and boundary conditions
\begin{equation}
\begin{aligned}
\bar{\uvec} &= \bar{\uvec}_D                              && \text{on } \Gamma_D \times [0,T]~, \\
\frac{\partial \bar{\uvec}}{\partial x_j}n_j &= {\bf h}_N && \text{on } \Gamma_N \times [0,T]~, \\
\bar{\uvec}(\mathbf{x},0) &= \bar{\uvec}_0(\mathbf{x})    && \text{in } \Omega~,
\end{aligned}
\end{equation}
where $\Gamma=\Gamma_D\cup\Gamma_N$ is the boundary of the computational domain, $\Gamma_D$ and $\Gamma_N$ refer to the Dirichlet and Neumann boundaries, respectively, and $n_j$ are the components of the outward unit normal vector.\\
Relying on the vector $ \bar{\uvec} = \left( \bar{\rho}, \bar{\rho}\tilde{u}_i,
\bar{\rho}\tilde{E} \right)^T$, the j-th components of the convective, diffusive, and subgrid-scale fluxes read
\begin{equation}
{\statevec f}_{\mathrm{c,j}}
=
\begin{pmatrix}
\bar{\rho}\tilde{u}_j\\
\bar{\rho}\tilde{u}_1\tilde{u}_j+\bar{p}\delta_{1j}\\
\bar{\rho}\tilde{u}_2\tilde{u}_j+\bar{p}\delta_{2j}\\
\bar{\rho}\tilde{u}_3\tilde{u}_j+\bar{p}\delta_{3j}\\
\bar{\rho}\tilde{u}_j\tilde{H}
\end{pmatrix}~,
\qquad
{\statevec f}_{\mathrm{v,j}}
=
\begin{pmatrix}
0\\
\bar{\sigma}_{1j}\\
\bar{\sigma}_{2j}\\
\bar{\sigma}_{3j}\\
\bar{\sigma}_{ji}\tilde{u}_i-\tilde{q}_j
\end{pmatrix}~,
\qquad
{\statevec f}_{\mathrm{SGS,j}}
=
\begin{pmatrix}
0\\
\tau_{1j}\\
\tau_{2j}\\
\tau_{3j}\\
\phi
\end{pmatrix}~.
\label{eq:fsgs}
\end{equation}
%
%
%
The viscous stress tensor and heat flux vector are defined according to the constitutive relation for Newtonian fluids and Fourier's law:
%
\begin{equation}\label{eq:viscous-constitutive}
		{\tilde{\sigma}_{ij}} = 2\mu \left( \tilde{S}_{ij} - \frac{1}{3}  \frac{ \partial \tilde{u}_k}{\partial x_k} \delta_{ij} \right)~,  \quad
		 \tilde{q}_j = -\lambda \frac{\partial \tilde{T}}{\partial x_j}~.\\
\end{equation}
%
In the above equations, $\mu$ is the molecular viscosity and $\tilde{S}_{ij} = \frac{1}{2} \left( \frac{\partial \tilde{u}_i}{\partial x_j}
+
\frac{\partial \tilde{u}_j}{\partial x_i}
\right)
$
is the Favre--filtered strain-rate tensor. The thermal conductivity, $\lambda$, is obtained from the Prandtl number, $\mathrm{Pr}=\mu c_p/\lambda$, where $c_p$ is the specific heat at constant pressure.
The pressure is computed from the ideal gas equation of state, while the temperature is evaluated from the total energy, neglecting the SGS kinetic energy contribution at the low Mach numbers considered here:
%
\begin{equation}\label{eq:pT}
		 \bar{p} = \bar{\rho} \left( \gamma - 1 \right) \left( \tilde{E} - \frac{1}{2}  \tilde{u}_k \tilde{u}_k \right), \quad
		 c_v \tilde{T} = \tilde{E} - \frac{1}{2} \tilde{u}_k \tilde{u}_k~, 
\end{equation}
%
with $\gamma$ heat capacity ratio  and $c_v$ specific heat at constant volume. The anisotropic part of the SGS stresses is parametrized using the eddy--viscosity hypothesis:
\begin{equation}\label{eq:fc}
		{\tau}_{ij} - \frac{1}{3}\tau_{kk} \delta_{ij} = -2\mu_{\mathrm{SGS}} \left( \tilde{S}_{ij} - \frac{1}{3}  \frac{ \partial \tilde{u}_k}{\partial x_k} \delta_{ij} \right)
\end{equation}
where the SGS dynamic viscosity is evaluated as
\begin{equation}\label{eq:bous}
\mu_{\mathrm{SGS}}
=
\bar{\rho} C_s^2 \Delta^2
|\widetilde{\mathbf{S}}|~,
\end{equation}
Here, $|\widetilde{\mathbf{S}}|$
corresponds to the magnitude of the Favre--filtered strain--rate tensor and $\Delta$ is the spatial filter width. The Smagorinsky coefficient, $C_s^2$, is dynamically computed using the Germano--Lilly procedure~\cite{GermanoLilly}, whereas the SGS contribution to the energy equation, $\phi$, is modeled following Pino Mart\'in et al.~\cite{pinomartin2000}.
On the other hand, the terms involving SGS 
turbulent diffusion and SGS viscous diffusion are neglected, as they 
produce a negligible effect at low Mach 
numbers~\cite{pinomartin2000, garnier2009}.\\
To prevent spurious acoustic wave reflections at the far--field boundaries, a sponge--layer technique is employed. 
Hence, eq.~\ref{eq:ns-les} is rewritten as:
\begin{equation}
\frac{\partial \bar{\uvec}}{\partial t}
+ \NABLA \cdot {\statevec F}_c(\bar{\uvec})
- \NABLA \cdot {\statevec F}_v(\bar{\uvec},\NABLA\bar{\uvec})
+ \NABLA \cdot {\statevec F}_{\mathrm{SGS}}(\bar{\uvec},\NABLA\bar{\uvec})
=  
\sigma \left( \bar{\uvec}_{ref} - \bar{\uvec} \right)
\qquad
\text{in } Q~;
\label{eq:ns-les1}
\end{equation}
the unphysical source term added to right--hand side is intended to damp the resolved quantities to a user defined reference value, $\bar{\uvec}_{ref}$, typically selected equal to the undisturbed free--stream conditions.
The damping coefficient $\sigma$ is defined as
%
\begin{equation}\label{eq:sponge}
\sigma = \sigma_0 \left( \frac{L_{sp} - d}{L_{sp}}  \right)^n
\end{equation}
%
where $L_{sp}$ is the sponge--layer width, $d$ corresponds to the minimum distance from the outer boundary, $\sigma_0$ controls the damping strength, and $n$ determines the shape of the sponge profile. In the present work, $n$ is fixed to $2$, while the sponge strength and layer width are selected according to the procedure described by D'Alessandro et al.~\cite{dalessandro2020caafoam}.
\subsection{Baseline filter--width}\label{sec:filter}
\noindent As a baseline, we consider the least--squares (LSQ) 
filter width proposed by Trias et al.~\cite{trias2017}:
\begin{equation}
\Delta_{\mathrm{LSQ}} = \sqrt{\frac{\boldsymbol{\Delta}_D \mathbf{G}^T \mathbf{G} : 
\boldsymbol{\Delta}_D \mathbf{G}^T \mathbf{G}}{\mathbf{G}^T \mathbf{G} : 
\mathbf{G}^T \mathbf{G}}}
\label{eq:delta-lsg}
\end{equation}
where $\mathbf{G}$ is the velocity gradient tensor, whose components are obtained from Green's theorem:
\begin{equation}{\label{eq:les7}}
    \left( \frac{\partial  \Tilde{u}_i}{\partial x_j} \right)_P = \frac{1}{2V_P} 
\sum_{N \in \mathcal{N}\left(P\right)} \left[  \left(\Tilde{ u}_i\right)_P +  \left(\Tilde{u}_i\right)_N \right] S_{PN} 
\left( n_{x_j} \right)_{PN}
 \end{equation}
with $\left( n_{x_j} \right)_{PN} = \left[ n_x, n_y, n_z\right]_{PN}^T$ denoting the outward--pointing
unit normal vector to the interface $PN$ shared by cells $P$ and $N$ and $V_p$ is the cell's volume. 
The quantity $S_{PN}$ is the face area of the interface $PN$, while $\mathcal{N}(P)$ is the set of direct neighbours of cell $P$.
Conversely, 
$\boldsymbol{\Delta}_D=\mathrm{diag}(\Delta_x,\Delta_y,\Delta_z)$
is the tensor collecting the directional grid length scales
computed with the discrete gradient operator:
\begin{equation}
(\Delta_{x_i})_P = \frac{2V_P}{\sum_{N \in \mathcal{N}(P)} S_{PN} |n_{x_i}|_{PN}}~.
\label{eq:deltaxj}
\end{equation}
%
%

\section{Metric perturbation transmission and regularization of flow--dependent filter--widths} \label{sec:theory}
This section examines how perturbations introduced during the assembly of the directional mesh metrics are transmitted through the LSQ based filter--width formulation.
A regularization to prevent singularities in the filter--width and a specific strategy to control the transmission of round--off disturbances are also introduced and discussed.
\subsection{Metric perturbation transmission through the standard LSQ filter formulation}
\label{sec:lsq-sensitivity}
%
%
\noindent Defining $\mathbf{A} = \mathbf{G}^{T}\mathbf{G}$
eq.~\ref{eq:delta-lsg} can be written, for $\mathbf{A}:\mathbf{A}>0$, as
\begin{equation}
\label{eq:lsq-A-form}
\Delta_{\mathrm{LSQ}} = 
\sqrt{
\frac{
\left(
\boldsymbol{\Delta}_{D}\mathbf{A}
\right)
:
\left(
\boldsymbol{\Delta}_{D}\mathbf{A}
\right)
}{
\mathbf{A}:\mathbf{A}
}
}~.
\end{equation}
Since $\boldsymbol{\Delta}_{D}$ is diagonal,
\begin{equation}
\left(
\boldsymbol{\Delta}_{D}\mathbf{A}
\right)
:
\left(
\boldsymbol{\Delta}_{D}\mathbf{A}
\right)
=
\sum_{i,j}
\Delta_{x_i}^{\,2} A_{ij}^{\,2}
=
\sum_i
\Delta_{x_i}^{\,2}
\sum_j A_{ij}^{\,2}
\end{equation}
eq.~\ref{eq:lsq-A-form} admits the equivalent representation
\begin{equation}
\label{eq:lsq-weighted-mean}
\Delta_{\mathrm{LSQ}}^{2}
=
\sum_{i}
w_i(\mathbf{G})\Delta_{x_i}^{2},
\end{equation}
where
\begin{equation}
\label{eq:lsq-kinematic-weights}
w_i(\mathbf{G})
=
\frac{
\displaystyle\sum_{j} A_{ij}^{2}
}{
\mathbf{A}:\mathbf{A}
}~,
\qquad
w_i\geq0~,
\qquad
\sum_{i} w_i=1~.
\end{equation}
Consequently, the LSQ filter is a weighted quadratic mean of the directional grid scales, $\Delta_{x_i}$, with weights determined by the local resolved kinematics. The directional scales depend on geometric quantities assembled through finite--precision face accumulations and therefore inherit the corresponding perturbations. To isolate their transmission through the LSQ operator, the velocity gradient tensor $\mathbf{G}$ is held fixed. For a generic cell $P$, eq.~\ref{eq:deltaxj} can be written as
\begin{equation}
\label{eq:directional-scale-accumulator}
\Delta_{x_i}
=
\frac{2V_P}{\mathcal{A}_i}~,
\qquad
\mathcal{A}_i
=
{\sum_{N \in \mathcal{N}(P)} S_{PN} |n_{x_i}|_{PN}}~.
\end{equation}
Here, $\mathcal{A}_i$ denotes the geometric quantity evaluated through the parallel accumulation. For this reason, we write its perturbed counterpart as $\widehat{\mathcal{A}}_i=\mathcal{A}_i(1+\varepsilon_i)$, with $\varepsilon_i$ denoting the corresponding relative accumulation perturbation.
Assuming that the cell volume is unaffected by the accumulation order, the
perturbed directional scale satisfies
\begin{equation}
\label{eq:directional-scale-ratio}
r_i = \frac{\widehat{\Delta}_{x_i}}{\Delta_{x_i}}
=
\frac{1}{1+\varepsilon_i}~.
\end{equation}
$\widehat{\Delta}_{x_i}$ must remain positive and well defined, requiring $1+\varepsilon_i>0$~. Since $\varepsilon_i$ represents a round--off perturbation arising from floating--point accumulation, $|\varepsilon_i|\ll1$ in the regime of interest, and the above condition is naturally satisfied.\\
Keeping $\mathbf{G}$ fixed,  the LSQ filter evaluated using the perturbed
directional scales is
\begin{equation}
\label{eq:perturbed-lsq-r}
\widehat{\Delta}_{\mathrm{LSQ}}^{\,2}
=
\sum_i
w_i(\mathbf{G})
\Delta_{x_i}^{\,2}
r_i^{2}~.
\end{equation}
Dividing eq.~\ref{eq:perturbed-lsq-r} by
eq.~\ref{eq:lsq-weighted-mean} gives
\begin{equation}
\label{eq:lsq-ratio-convex}
\left(
\frac{
\widehat{\Delta}_{\mathrm{LSQ}}
}{
\Delta_{\mathrm{LSQ}}
}
\right)^2
=
\sum_i
\alpha_i r_i^{2}~,
\end{equation}
where
\begin{equation}
\label{eq:lsq-sensitivity-weights}
\alpha_i
=
\frac{
w_i(\mathbf{G})\Delta_{x_i}^{\,2}
}{
\displaystyle
\sum_k
w_k(\mathbf{G})\Delta_{x_k}^{\,2}
}~,
\qquad
\alpha_i\geq 0~,
\qquad
\sum_i\alpha_i=1~.
\end{equation}
The right-hand side of eq.~\ref{eq:lsq-ratio-convex} is a convex
combination of the $r_i^2$. Therefore,
\begin{equation}
\min_i r_i^{2}
\leq
\sum_i\alpha_i r_i^{2}
\leq
\max_i r_i^{2}~
\Longrightarrow
\min_i r_i
\leq
\frac{
\widehat{\Delta}_{\mathrm{LSQ}}
}{
\Delta_{\mathrm{LSQ}}
}
\leq
\max_i r_i~.
\end{equation}
%
A corresponding bound in terms of the ordinary relative perturbation follows
immediately:
\begin{equation}
\label{eq:lsq-relative-perturbation-bound}
\frac{
\left|
\widehat{\Delta}_{\mathrm{LSQ}}
-
\Delta_{\mathrm{LSQ}}
\right|
}{
\Delta_{\mathrm{LSQ}}
}
\leq
\max_i
\frac{
\left|
\widehat{\Delta}_{x_i}
-
\Delta_{x_i}
\right|
}{
\Delta_{x_i}
}
=
\max_i
\frac{\left| \varepsilon_i\right|}{1+\varepsilon_i}~.
\end{equation}
Therefore, eq.~\ref{eq:lsq-relative-perturbation-bound} provides a global
finite--perturbation bound: for fixed resolved kinematics, the relative
perturbation of the LSQ filter cannot exceed the largest relative perturbation
of the directional scales. This establishes the non--expansive character of
the LSQ mapping. However, the bound does not imply strict contraction, since
it does not provide a uniform attenuation factor smaller than unity for the quantity on the left-hand side of eq. \ref{eq:lsq-ratio-convex}.\\
The local response of the LSQ filter can be further characterized by
introducing the auxiliary relative perturbation
\begin{equation}
\label{eq:xi-definition}
\xi_i
:=
r_i-1
=
\frac{
\widehat{\Delta}_{x_i}-\Delta_{x_i}
}{
\Delta_{x_i}
}~;
\end{equation}
hence, $r_i=1+\xi_i$~,
and the unperturbed configuration corresponds to
$\boldsymbol{\xi}=\mathbf{0}$. From
eq.~\ref{eq:lsq-ratio-convex}, the exact logarithmic response of the
LSQ filter can be written as
\begin{equation}
\label{eq:lsq-log-response-xi}
\log
\left(
\frac{
\widehat{\Delta}_{\mathrm{LSQ}}
}{
\Delta_{\mathrm{LSQ}}
}
\right)
=
\frac{1}{2}
\log \left[ \sum_i \alpha_i \left( 1+\xi_i \right)^2\right]~.
\end{equation}
%
For $\|\boldsymbol{\xi}\|_{\infty}\ll 1$, a Taylor expansion of
eq.~\ref{eq:lsq-log-response-xi} gives
\begin{equation}
\label{eq:lsq-xi-first-order}
\log
\left(
\frac{
\widehat{\Delta}_{\mathrm{LSQ}}
}{
\Delta_{\mathrm{LSQ}}
}
\right)
=
\sum_i
\alpha_i\xi_i
+
\mathcal{O}
\left(
\|\boldsymbol{\xi}\|_{\infty}^{2}
\right)~.
\end{equation}
%
Accordingly, at first order, the logarithmic response of the LSQ filter is a
weighted average of the directional perturbations. The local transmission
depends on both the sensitivity weights $\alpha_i$ and the directional
coherence of the perturbations. In particular, if one directional contribution
dominates, $\alpha_k \simeq 1$, then
\begin{equation}
    \log\left(
    \frac{\widehat{\Delta}_{\mathrm{LSQ}}}
         {\Delta_{\mathrm{LSQ}}}
    \right)
    \simeq \xi_k~.
    \label{eq:one_to_one}
\end{equation}
%
The unit coefficient on the right-hand side of eq.~\ref{eq:one_to_one}
corresponds to the limiting non--expansive response: the perturbation is
transmitted without first-order amplification or attenuation.
Differently, a strictly contractive mapping would instead require a
local gain uniformly bounded by some $\kappa<1$, namely
\begin{equation}
\left|
\log\left(
\frac{\widehat{\Delta}_{\mathrm{LSQ}}}
     {\Delta_{\mathrm{LSQ}}}
\right)
\right|
\leq
\kappa |\xi_k|~,
\qquad \kappa<1~,
\end{equation}
so that the perturbation of the active directional scale is strictly
attenuated. 
It is important to note that eq.~\ref{eq:one_to_one} shows that the LSQ mapping does not admit a uniform contraction factor strictly smaller than unity, since its local gain can approach
unity when the weighting becomes strongly localized.
In terms of the ordinary relative perturbation, eq.~\ref{eq:one_to_one} can be also written as
\begin{equation}
\frac{\widehat{\Delta}_{\mathrm{LSQ}}-\Delta_{\mathrm{LSQ}}}
     {\Delta_{\mathrm{LSQ}}}
\simeq \xi_k~,
\end{equation}
showing that the relative perturbation of the dominant directional scale is
preserved at first order. Therefore, when the kinematic weighting becomes
strongly localized, the LSQ mapping approaches the limiting non--expansive
response, with a local transmission gain approaching unity. Hence, for fixed
$\mathbf{G}$, the first--order LSQ mapping is non--expansive but not strictly
contractive.
%
%

\subsubsection{Kinematic regularization of the LSQ filter}
\label{sec:lsq-kinematic-regularization}
\noindent To examine the role of the kinematic weighting independently of the
metric-assembly procedure, we introduce a Tikhonov inspired isotropic
diagonal loading of the velocity gradient tensor:
\begin{equation}
    \mathbf{G}_c
    =
    \mathbf{G}
    +
    \lambda_{\eta}\mathbf{I}~,
    \qquad
    \lambda_{\eta}
    =
    \eta \lVert\mathbf{G}\rVert_F~,
\label{eq:G_tikhonov}
\end{equation}
where $\eta$ is a dimensionless regularization parameter. The corresponding
kinematic tensor is
\begin{equation}
    \mathbf{A}_c
    =
    \mathbf{G}_c^{T}\mathbf{G}_c
    =
    \mathbf{A}
    +
    \lambda_{\eta}
    \left(\mathbf{G}^{T}+\mathbf{G}\right)
    +
    \lambda_{\eta}^{2}\mathbf{I}~.
\label{eq:A_tikhonov}
\end{equation}
When the filter appearing in eq.~\ref{eq:lsq-weighted-mean} is evaluated using $\mathbf{A}_c$ instead of $\mathbf{A}$, the
kinematic weights become
\begin{equation}
    w_i(\mathbf{G}_c)
    =
    \frac{
    \displaystyle\sum_j (A_c)_{ij}^{2}
    }{
    \mathbf{A}_c:\mathbf{A}_c
    }~,
\label{eq:w_tikhonov}
\end{equation}
and the standard LSQ formulation is recovered as $\eta\rightarrow0$.\\
The diagonal loading leaves the directional mesh scale assembly unchanged and acts exclusively on the kinematic weights entering the LSQ operator.
It redistributes the LSQ weights toward kinematically weak directions, thereby reducing, if present, the strong directional concentration responsible for near one--to--one metric perturbation transmission.
Note that the loading is used only as a diagnostic perturbation, with $\eta$ kept sufficiently small to preserve the dominant resolved kinematic structure. Larger values would instead amount to a modification of the filter-width model.
\subsection{Logarithmic stability and branchwise sensitivity of the max--min metric mapping}
\label{sec:sub-maxmim}
\noindent In order to control metric perturbation transmission from concentrated kinematic weighting, the directional mesh scales are compressed through a scalar max--min mapping:
\begin{equation}
\label{eq:maxmin-map-definition}
F(\Delta_x,\Delta_y,\Delta_z)
=
\max
\left[
\left(
\Delta_x\Delta_y\Delta_z
\right)^{1/3},
C_\Delta
\min
\left(
\Delta_x,\Delta_y,\Delta_z
\right)
\right].
\end{equation}
%
%
As shown in Appendix~A, for any positive directional scales $\Delta_{x_i}$ and corresponding perturbed values $\widehat{\Delta}_{x_i}$, the mapping in eq.~\ref{eq:maxmin-map-definition} satisfies
\begin{equation}
\label{eq:maxmin-log-nonexpansive-main}
\left|
\log
\left[
\frac{
F(\widehat{\Delta}_x,\widehat{\Delta}_y,\widehat{\Delta}_z)
}{
F(\Delta_x,\Delta_y,\Delta_z)
}
\right]
\right|
\leq
\max_{i\in{x,y,z}}
\left|
\log
\left(
\frac{\widehat{\Delta}_{i}}{\Delta_{i}}
\right)
\right|~.
\end{equation}
Hence, the complete mapping is globally non-expansive in the componentwise
logarithmic metric.
However, the mapping consists of two contributions: the geometric term,
$ \Delta_g = \left(\Delta_x\Delta_y\Delta_z\right)^{1/3}$, and the minimum--scale term,
$C_\Delta \Delta_{\min}$. They exhibit different local sensitivities.
When the geometric branch is active, ($\Delta=\Delta_g$), at first order
\begin{equation}
\frac{\delta \Delta_g}{\Delta_g}
\simeq
\frac{1}{3}
\left(
\frac{\delta \Delta_x}{\Delta_x}
+
\frac{\delta \Delta_y}{\Delta_y}
+
\frac{\delta \Delta_z}{\Delta_z}
\right)~.
\end{equation}
Thus, the geometric branch combines the directional perturbations through their arithmetic mean in logarithmic variables. A perturbation affecting only one directional scale is therefore reduced by a factor of $1/3$, whereas perturbations affecting all three directions are transmitted through their mean.\\ 
In the lower--bound branch, ~\emph{i.e.} $\Delta=C_\Delta\Delta_{\min}$, and provided that the direction attaining $\Delta_{\min}$ remains unchanged, its logarithmic variation is $\delta\log\Delta  = \delta\log\Delta_{\min}$.  At first order:
\begin{equation}
\frac{\delta\Delta}{\Delta}
\simeq
\frac{\delta\Delta_{\min}}{\Delta_{\min}}~.
\end{equation}
Consequently, the lower--bound branch transmits the relative perturbation of the active minimum scale one--to--one, while the complete max--min mapping remains globally non--expansive.\\
Finally, we want to remark that the coefficient $C_\Delta$  in eq.~\ref{eq:maxmin-map-definition} sets the
magnitude of the lower bound but does not alter the local relative
sensitivity within this branch. Its role is instead to control the
switching condition, and hence the regions in which the geometric or
lower-bound response is active. 
%
%
%
%
%
\subsection{LSQ filter formulation with compressed spatial metric and bounded kinematic modulation}
\label{sec:kinematic_regularization}
\noindent Under isotropic metric compression, the directional metric tensor entering
the standard LSQ construction is replaced by
$\boldsymbol{\Delta}_D=\Delta\mathbf{I}$ with $\Delta$ obtained from eq.~\ref{eq:maxmin-map-definition}. This modification provides the
perturbation control properties discussed above, but it has an immediate
structural consequence deriving from eq.~\ref{eq:lsq-A-form}:
\begin{equation}
\Delta_{\mathrm{LSQ}}
=
\Delta
\sqrt{
\frac{\mathbf{A}:\mathbf{A}}
     {\mathbf{A}:\mathbf{A}}
}
=
\Delta~,
\qquad
\mathbf{A}:\mathbf{A}>0~.
\label{eq:isotropic_lsq}
\end{equation}
Hence, metric compression alone removes the dependence of the LSQ filter--width on the local resolved kinematics. The resulting scale is robust with
respect to perturbations of the directional mesh metrics, but is purely
geometric.
This observation identifies two distinct requirements for the compressed formulation. First, the LSQ normalization must remain well defined as $\mathbf{A}\rightarrow\mathbf{0}$. Second, a non-trivial dependence on the resolved velocity gradient structure must be retained to preserve the flow-dependent character of the original LSQ construction. \\
In this context, a generic denominator regularization does not necessarily satisfy both requirements.
Indeed, a first possibility is a scalar safeguard of the contracted denominator: $\max\left(\mathbf{A}:\mathbf{A},\varepsilon_G^2\right)$~, where $\varepsilon_G>0$ is a prescribed regularization threshold having the same
physical dimensions as the entries of $\mathbf{A}$. 
The resulting filter width is
\begin{equation}
    \Delta_{\mathrm{LSQ}}
    =
    \Delta
    \sqrt{
        \frac{\mathbf{A}:\mathbf{A}}
        {\max\left(\mathbf{A}:\mathbf{A},\varepsilon_G^2 \right)}
    }~.
    \label{eq:scalar_regularized_filter}
\end{equation}
For $\mathbf{A}:\mathbf{A}>\varepsilon_G^2$, the regularized expression reduces to $\Delta_{\mathrm{LSQ}}=\Delta$. It follows that, away from regions where the velocity gradient magnitude is comparable to the regularization threshold, the scalar safeguard reduces to the purely metric filter. The local kinematic modulation is therefore lost.
A second possibility is a componentwise magnitude based regularization:
\begin{equation}
    \left[
        \mathcal{S}^{\mathrm{abs}}_{\varepsilon_G}(\mathbf{A})
    \right]_{ij}
    =
    \max\left(\left|\mathbf{A}\right|_{ij},\varepsilon_G\right),
    \label{eq:absolute_regularization}
\end{equation}
leading to
\begin{equation}
    \Delta_{\mathrm{LSQ}}
    =
    \Delta
    \sqrt{
        \frac{\mathbf{A}:\mathbf{A}}
        {
        \mathcal{S}^{\mathrm{abs}}_{\varepsilon_G}(\mathbf{A})
        :
        \mathcal{S}^{\mathrm{abs}}_{\varepsilon_G}(\mathbf{A})
        }
    }~.
    \label{eq:absolute_regularized_filter}
\end{equation}
%
Therefore, when all the entries of $\mathbf{A}$ are sufficiently larger in magnitude
than $\varepsilon_G$, this regularization recovers exactly
\begin{equation}
\mathcal{S}^{\mathrm{abs}}_{\varepsilon_G}(\mathbf{A})
:
\mathcal{S}^{\mathrm{abs}}_{\varepsilon_G}(\mathbf{A})
=
\mathbf{A}:\mathbf{A}~,
\end{equation}
and $\Delta_{\mathrm{LSQ}}=\Delta$.
Thus, the magnitude based regularization removes the singularity but, similarly to
the scalar safeguard, provides only a weak kinematic modulation over most of the domain.\\
%
The above constructions regularize the LSQ normalization but do not satisfy the second requirement for the filter--width. For this reason, we seek a regularized normalization that remains strictly positive while preserving a non-trivial dependence on the local component structure of $\mathbf{A}$:
%
%
\begin{equation}
    \left[
        \mathcal{S}_{\varepsilon_G}(\mathbf{A})
    \right]_{ij}
    =
    \max\left(A_{ij},\varepsilon_G\right)~.
    \label{eq:sign_selective_regularization}
\end{equation}
%
%
Although $\mathbf{A}$ is symmetric positive semidefinite, its off--diagonal entries may
be negative.  
In particular, $A_{ij}= \boldsymbol{g}_i\cdot\boldsymbol{g}_j$~,
where $\boldsymbol{g}_i$ and $\boldsymbol{g}_j$ denote directional velocity gradient
vectors. Therefore, negative off-diagonal entries correspond to local directional anti--correlations. The mapping
$\mathcal{S}_{\varepsilon_G}$ replaces these negative cross-correlations by the
positive floor $\varepsilon_G$ in the denominator, while their complete squared
contribution is retained in the numerator through $\mathbf{A}:\mathbf{A}$~.
It is important to remark that the operation in eq.~\ref{eq:sign_selective_regularization} is performed
componentwise in the Cartesian frame used to define the directional mesh scales
and to assemble the discrete velocity gradient tensor.\\
Thus, the resulting regularized compressed LSQ filter is defined as
\begin{equation}
    \Delta_{\mathrm{LSQ,c}}
    =
    \sqrt{
        \frac{
            (\Delta \mathbf{A}):(\Delta \mathbf{A})
        }{
            \mathcal{S}_{\varepsilon_G}(\mathbf{A})
            :
            \mathcal{S}_{\varepsilon_G}(\mathbf{A})
        }
    }~.
    \label{eq:regularized_compressed_lsq}
\end{equation}
Since $\Delta$ is a scalar quantity, eq.~\ref{eq:regularized_compressed_lsq}
admits the exact factorization
\begin{equation}
    \Delta_{\mathrm{LSQ,c}}
    =
    \Delta~\mathrm{R}(  \mathbf{G})~,
    \label{eq:factorized_filter}
\end{equation}
where
\begin{equation}
    \mathrm{R}(\mathbf{G})
    =
    \sqrt{
        \frac{
            \mathbf{A}:\mathbf{A}
        }{
            \mathcal{S}_{\varepsilon_G}(\mathbf{A})
            :
            \mathcal{S}_{\varepsilon_G}(\mathbf{A})
        }
    }~.
    \label{eq:kinematic_factor}
\end{equation}
The two factors in eq.~\ref{eq:factorized_filter} perform distinct and
complementary functions. The max--min metric scale $\Delta$ controls the
transmission of perturbations affecting the directional mesh metrics,
whereas $\mathrm{R}(\mathbf{G})$ provides a residual dependence on the local
resolved kinematics. The regularized denominator simultaneously prevents
singular behaviour.
It is also very important to note that the kinematic factor is globally bounded. Indeed, recalling the positive-semidefinite character of $\mathbf{A}$, its entries satisfy $A_{ij}^2 \leq A_{ii}A_{jj}$.
It follows that
\begin{equation}
    \mathbf{A}:\mathbf{A}
    \leq
    \left(\operatorname{tr}\mathbf{A}\right)^2
    \leq
    3\sum_{i=1}^{3}A_{ii}^2~;
    \label{eq:numerator_bound}
\end{equation}
moreover, since $A_{ii}\geq 0$,
\begin{equation}
    \mathcal{S}_{\varepsilon_G}(\mathbf{A})
    :
    \mathcal{S}_{\varepsilon_G}(\mathbf{A})
    \geq
    \sum_{i=1}^{3}A_{ii}^2~.
    \label{eq:denominator_bound}
\end{equation}
Combining eqs.~\ref{eq:numerator_bound} and
\ref{eq:denominator_bound} gives
\begin{equation}
    0\leq \mathrm{R}(\mathbf{G})\leq\sqrt{3}~\Longrightarrow
    0
    \leq
    \Delta_{\mathrm{LSQ,c}}
    \leq
    \sqrt{3}\,\Delta~.
\end{equation}
Thus, the sign-selective regularization satisfies both requirements identified in this subsection.
It is also important to note that the factorized form
in eq.~\ref{eq:factorized_filter} provides a direct
estimate of the sensitivity of the complete formulation to perturbations of the
directional mesh metrics. For a fixed resolved velocity gradient tensor, the
factor $\mathrm{R}(\mathbf{G})$ is unchanged by such perturbations. Hence,
\begin{equation}
    \frac{
        \widehat{\Delta}_{\mathrm{LSQ,c}}
    }{
        \Delta_{\mathrm{LSQ,c}}
    }
    =
    \frac{
        \widehat{\Delta}
    }{
        \Delta
    }~,
    \qquad \mathrm{R}(\mathbf{G})>0~,
    \label{eq:complete_filter_perturbation}
\end{equation}
and therefore
\begin{equation}
    \log
    \left(
        \frac{
            \widehat{\Delta}_{\mathrm{LSQ,c}}
        }{
            \Delta_{\mathrm{LSQ,c}}
        }
    \right)
    =
    \log
    \left(
        \frac{
            \widehat{\Delta}
        }{
            \Delta
        }
    \right)~.
    \label{eq:complete_filter_log_perturbation}
\end{equation}
Thus, for frozen resolved kinematics, the transmission of metric perturbations is
governed entirely by the max--min metric mapping, whereas $\mathrm{R}(\mathbf{G})$ provides the
bounded kinematic modulation required to retain the physical effectiveness of the
flow-dependent filter. The complete formulation defined by
eqs.~\ref{eq:factorized_filter}--\ref{eq:kinematic_factor} is hereafter referred
to as the max--min compressed LSQ filter, or LSQ--maxmin for brevity.\\
Finally, in all the computations presented below, the regularization threshold is defined as $\varepsilon_G  =\varepsilon_G^{*} \left({U_{\mathrm{ref}}}/{L_{\mathrm{ref}}}\right)^2$~, where $\varepsilon_G^{*}=10^{-8}$ is kept fixed for all flow configurations and computing architectures. $U_{\mathrm{ref}}$ and $L_{\mathrm{ref}}$ denote the characteristic velocity and length scales of the considered configuration.
%
\subsection{Componentwise standard LSQ filter regularization}
\noindent A relevant implementation detail concerns the denominator regularization adopted in the standard LSQ computations. Specifically, all the standard LSQ implementations considered in this work employ the
componentwise regularization introduced in eq.~\ref{eq:sign_selective_regularization}~:
\begin{equation}
    \Delta_{\mathrm{LSQ},\varepsilon}
    =
    \sqrt{
        \frac{
            \left(\boldsymbol{\Delta}_{D}\mathbf{A}\right)
            :
            \left(\boldsymbol{\Delta}_{D}\mathbf{A}\right)
        }{
            \mathcal{S}_{\varepsilon_G}(\mathbf{A})
            :
            \mathcal{S}_{\varepsilon_G}(\mathbf{A})
        }
    }~.
    \label{eq:lsq_regularized_baseline}
\end{equation}
This regularization procedure modifies the absolute value of the standard filter  width, but it does not alter the transmission of directional metric
perturbations analyzed above when the resolved kinematics are held fixed.
Indeed, if $\mathbf{A} : \mathbf{A}>0$, eq.~\ref{eq:lsq_regularized_baseline} can be factorized exactly
as
\begin{equation}
    \Delta_{\mathrm{LSQ},\varepsilon}
    =
    \Delta_{\mathrm{LSQ}}~
    \mathrm{R}_{\varepsilon_G}(\mathbf{G})~,
    \label{eq:lsq_regularized_factorization}
\end{equation}
where
\begin{equation}
    \mathrm{R}_{\varepsilon_G}(\mathbf{G})
    =
    \sqrt{
        \frac{
            \mathbf{A}:\mathbf{A}
        }{
            \mathcal{S}_{\varepsilon_G}(\mathbf{A})
            :
            \mathcal{S}_{\varepsilon_G}(\mathbf{A})
        }
    }~.
    \label{eq:R_regularized_baseline}
\end{equation}
The factor \(\mathrm{R}_{\varepsilon_G}(\mathbf{G})\) depends exclusively on the
resolved velocity gradient tensor and is therefore unchanged by
perturbations affecting only the directional mesh scales. Hence,
for fixed \(\mathbf{G}\),
\begin{equation}
    \log\left(
        \frac{
            \widehat{\Delta}_{\mathrm{LSQ},\varepsilon}
        }{
            \Delta_{\mathrm{LSQ},\varepsilon}
        }
    \right)
    =
    \log\left(
        \frac{
            \widehat{\Delta}_{\mathrm{LSQ}}
        }{
            \Delta_{\mathrm{LSQ}}
        }
    \right).
    \label{eq:regularized_same_log_transmission}
\end{equation}
Consequently, all the metric perturbation bounds derived above for the
standard LSQ formulation carry over unchanged to the
componentwise regularized implementation. 
%
\section{Filter--width implementation strategies}
\label{sec:impl}

\tikzset{
	implbox/.style={draw, rounded corners=2pt, align=center, minimum height=0.85cm,
		inner sep=5pt},
	implflow/.style={-stealth, line width=0.7pt},
	implnote/.style={font=\scriptsize, align=center}
}
The filter--width formulations discussed in Sec.~\ref{sec:gov} were implemented
in SPUMA~\cite{bna2025spuma}, which preserves the OpenFOAM finite--volume
data layout and exposes parallel execution through a portable programming model.
Consistently with its minimally invasive design, computational loops in SPUMA
retain the face--based structure of the corresponding
OpenFOAM implementations. This aspect is particularly relevant for the evaluation
of the directional length scales in eq.~\ref{eq:deltaxj}, whose cell--centred
values are obtained by accumulating the projected area contributions of the
surrounding faces.\\
In the following, the standard face--based GPU implementation is denoted as
\textit{LSQ--GPU (face--based)}. Moreover, to separate the effect of the filter
formulation from that of the accumulation strategy, a deterministic cell--based
variant of the same standard LSQ formulation is also considered and denoted as
\textit{LSQ--GPU (cell--based)}. Finally, the face--based implementation of the
LSQ--maxmin formulation is denoted as \textit{LSQ--maxmin--GPU}. When the same
bounded formulation is executed on the CPU reference backend, it is denoted as
\textit{LSQ--maxmin--CPU}.
\subsection{Face--based accumulation}
In the standard SPUMA implementation, the directional metric accumulation retains the face--based structure of the corresponding OpenFOAM operator.
The loop is parallelized over mesh faces. For each face, the absolute Cartesian
components of the face area vector are added to the directional sums of the owner
cell and, for internal faces, to those of the neighbour cell. On GPUs, different
threads can update the same cell simultaneously resulting in a race condition; the conflict is removed by
using \texttt{foamAtomic::AtomicAdd}. The data flow and the two possible cell
updates are expressed in the pseudocode of Algorithm~\ref{alg:face-based}.\\
%
%
\begin{algorithm}[htbp]
\caption{Face-based accumulation used by LSQ--GPU (face--based).}
\label{alg:face-based}
\begin{algorithmic}[1]
\Require Face vector areas $\mathbf{S}_f$, owner and neighbour addressing,
	number of internal faces $n_{\mathrm{intf}}$
\Ensure Cell-centred projected-area sums $\mathcal{A}_{c,i}$
\ForAll{$f \in \{0,\ldots,n_{\mathrm{faces}}-1\}$}
	\State $(a_x,a_y,a_z) \gets (|S_{f,x}|,|S_{f,y}|,|S_{f,z}|)$
	\State $P \gets \mathrm{owner}[f]$
	\For{$i \in \{x,y,z\}$}
		\State \Call{foamAtomic::AtomicAdd}{$\mathcal{A}_{P,i},a_i$}
	\EndFor
	\If{$f < n_{\mathrm{intf}}$}
		\State $N \gets \mathrm{neighbour}[f]$
		\For{$i \in \{x,y,z\}$}
			\State \Call{foamAtomic::AtomicAdd}{$\mathcal{A}_{N,i},a_i$}
		\EndFor
	\EndIf
\EndFor
\end{algorithmic}
\end{algorithm}
This implementation is compact and directly mirrors the original finite--volume
face loop. Its drawback is that the effective ordering of the atomic updates is not prescribed. Since floating--point addition is not associative, different accumulation orders may introduce small round--off variations in the projected face--area sums entering eq.~\ref{eq:deltaxj}. These perturbations remain at the level of the metric evaluation, but may subsequently be transmitted through the filter--width formulation.
The \textit{LSQ--maxmin--GPU} implementation keeps this same face--based accumulation
but computes a bounded scalar filter width after the projected face--area sums
have been accumulated. The operations are given in Algorithm~\ref{alg:maxmin}.

\begin{algorithm}[htbp]
\caption{Spatial metric compression used by LSQ--maxmin--GPU.}
\label{alg:maxmin}
\begin{algorithmic}[1]
\Require Cell volume $V_c$, cell-centred projected-area sums $\mathcal{A}_{c,x},\mathcal{A}_{c,y},
	\mathcal{A}_{c,z}$, $C_\Delta$ and number of cells $n_c$
\Ensure Bounded scalar filter width $\Delta_c$ and
	$\boldsymbol{\Delta}_{c,D}=\Delta_c\mathbf{I}$
\ForAll{$c \in \{0,\ldots,n_{\mathrm{c}}-1\}$}
\For{$i \in \{x,y,z\}$}
	\State $\Delta_{c,i} \gets 2V_c/\mathcal{A}_{c,i}$
\EndFor
\State $\Delta_g \gets (\Delta_{c,x}\Delta_{c,y}\Delta_{c,z})^{1/3}$
\State $\Delta_{\min} \gets \min(\Delta_{c,x},\Delta_{c,y},\Delta_{c,z})$
\State $\Delta_c \gets \max(\Delta_g,C_\Delta\Delta_{\min})$
\State $\boldsymbol{\Delta}_{c,D} \gets \Delta_c\mathbf{I}$
\EndFor
\end{algorithmic}
\end{algorithm}
\subsection{Cell--based CSR accumulation}
To separate the effect of the filter formulation from that of the accumulation
strategy, a cell--based implementation was introduced. This version uses the same
standard LSQ formulation as \textit{LSQ--GPU (face--based)}, but replaces the
atomic face loop by a cell--based summation over the faces surrounding each cell.
It relies on a Compressed Sparse Row (CSR) representation of the mesh
cell--to--face connectivity, implemented in SPUMA through the
\texttt{fvMeshCsrAddressing} class.
%
The CSR structure is built from the cell--to--face connectivity associated with
OpenFOAM's \texttt{lduAddressing}. Let $n_c$ denote the number of faces incident
to cell $c$. A prefix sum of these counts defines the CSR offsets, with
$o_0=0$ and $o_{c+1}=o_c+n_c$; consequently, the total number of stored
cell--face entries is $o_{n_{\mathrm{cells}}}$. The flat face-index array
$I_k$ and orientation-sign array $\sigma_k$ are then filled by traversing the
cell--to--face connectivity. For each face entry of cell $c$, the corresponding
face index is stored in the row $k\in[o_c,o_{c+1})$, and the orientation sign
is assigned according to
\begin{equation}
\sigma_k =
\begin{cases}
+1, & I_k\in\mathcal{F}_{\mathrm{int}} \text{ and } \operatorname{owner}(I_k)=c,\\
-1, & I_k\in\mathcal{F}_{\mathrm{int}} \text{ and } \operatorname{owner}(I_k)\neq c,\\
+1, & I_k\in\mathcal{F}_{\mathrm{bnd}}.
\end{cases}
\end{equation}
Thus, the faces associated with any cell are available as one contiguous CSR
row, together with their orientation relative to that cell.

\begin{figure}[htbp]
\centering
\includegraphics[width=0.95\textwidth]{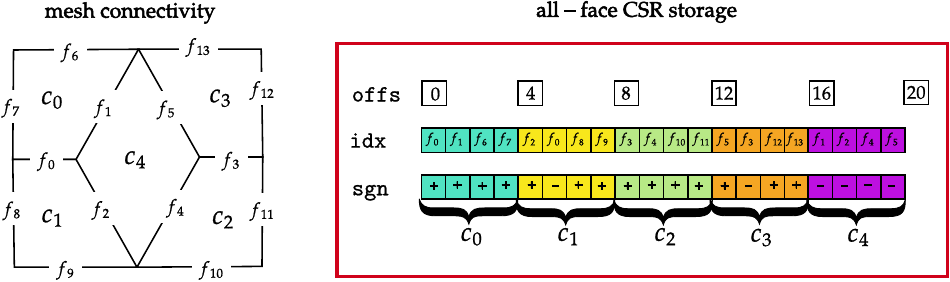}
\caption{Example of the cell--to--face CSR layout. The flat face-index and sign
arrays are partitioned according to the cell offsets.}
\label{fig:mesh5csr}
\end{figure}

The all-face CSR representation is used because boundary faces contribute to
eq.~\ref{eq:deltaxj}. The orientation signs are retained for operators that
require them, although the present accumulation involves absolute face-area
components and therefore does not depend on face orientation.
%
With this addressing, the GPU kernel is launched over cells rather than faces.
Each thread owns one cell, traverses its local CSR row in a fixed order, and
accumulates the three projected face--area sums in private registers before
writing the final values to memory. No more than one thread writes to the same cell entry,
so no atomic operation is required. Consequently, the non--deterministic
accumulation order associated with concurrent atomic updates in the face--based
implementation is eliminated. The corresponding pseudocode is given in
Algorithm~\ref{alg:cell-based}.
%
\begin{algorithm}[htbp]
\caption{Cell--based deterministic accumulation used by LSQ--GPU (cell--based).}
\label{alg:cell-based}
\begin{algorithmic}[1]
\Require CSR offsets $o_c$, face indices $I_k$, face vector areas $\mathbf{S}_f$ and number of cells $n_{\mathrm{c}}$
\Ensure Cell-centred projected-area sums $\mathcal{A}_{c,i}$
\ForAll{$c \in \{0,\ldots,n_{\mathrm{c}}-1\}$}
\State $(\mathcal{A}_{c,x},\mathcal{A}_{c,y},\mathcal{A}_{c,z})
\gets (0,0,0)$
\For{$k \in {o_c,\ldots,o_{c+1}-1}$}
\State $f \gets I_k$
\For{$i \in {x,y,z}$}
\State $\mathcal{A}_{c,i} \gets \mathcal{A}_{c,i}+|S_{f,i}|$
\EndFor
\EndFor
\EndFor
\end{algorithmic}
\end{algorithm}
The comparison between the two strategies is summarized in
Table~\ref{tab:implementation}. The face--based algorithm preserves the native
OpenFOAM ownership addressing and can therefore be directly applied to many
existing operators in SPUMA. On GPUs, however, shared cell updates require atomic
accumulations, whose effective ordering is not prescribed. In contrast, the
cell--based CSR algorithm removes this source of perturbation from the evaluation
of the directional length scales, at the cost of an additional mesh-addressing
structure and a less direct mapping to the original face--loop formulation.
Hence, in the present work, the cell--based implementation of the standard LSQ
filter is used as a numerical control to isolate the effect of the accumulation
strategy while leaving the filter formulation unchanged.
%
\begin{table}[htbp]
\centering
\caption{Filter--width implementation strategies compared in
\texttt{caaspuma}.}
\label{tab:implementation}
\small
\renewcommand{\arraystretch}{1.15}
\begin{tabularx}{0.98\textwidth}{
    @{}
    >{\raggedright\arraybackslash}p{0.29\textwidth}
    >{\raggedright\arraybackslash}p{0.18\textwidth}
    >{\raggedright\arraybackslash}X
    @{}
}
\hline
Implementation & Metric assembly & Role in this work \\
\hline
LSQ--GPU (face--based)
& Face--based with \texttt{AtomicAdd}
& Standard GPU baseline \\

LSQ--GPU (cell--based)
& Cell--based, CSR
& Diagnostic control for accumulation--order effects \\

LSQ--maxmin--GPU
& Face--based with \texttt{AtomicAdd}
& Perturbation--aware formulation \\

LSQ--maxmin--CPU
& Face--based, serial
& Cross--architecture assessment of perturbation--aware formulation \\
\hline
\end{tabularx}
\end{table}

\section{Results}\label{sec:results} 
The numerical computations were carried out using two very different flow configurations: the flow past a circular cylinder at $\mathrm{Re} = 3900$ and  the Taylor--Green Vortex at $\mathrm{Re} = 1600$.\\ 
The flow past a circular cylinder was computed using \texttt{caaspuma}, a GPU enabled implementation (developed within the SPUMA framework) of our existing low--dissipative compressible density--based solver,~\cite{dalessandro2020caafoam}.
The Taylor--Green vortex was simulated using a SPUMA version of
\texttt{rhoPimpleFoam} adopting the numerical settings of the
well established benchmark configuration reported by Zirwes et al.,~\cite{Zirwes2023}. This choice was made to provide a reproducible and
well characterized reference for the present computations.\\
%
The GPU computations were performed on four EuroHPC supercomputers:
LEONARDO at CINECA (Italy), using NVIDIA A100; JUPITER at the
JSC (Germany), using NVIDIA GH200; MareNostrum~5 at BSC (Spain), 
using NVIDIA H100; and LUMI at CSC
(Finland), using AMD Instinct MI250X.
On the NVIDIA based platforms, the code was compiled using the
\texttt{nvc++} compiler provided with the CUDA Toolkit and OpenMPI.
On LUMI, the HIP/ROCm programming environment was used together
with the AMD Clang compiler and Cray MPICH.
CPU computations were carried out using Intel Xeon
Platinum 8952+ processors on the CRESCO8 system at the ENEA site
in Portici (Italy). The codes were built using the 
Intel software stack, including the Intel
compiler suite and IntelMPI.
\subsection{Flow past a circular cylinder at $\mathrm{Re} = 3900$}
\noindent The present analysis focuses on the flow past a circular cylinder at a Reynolds number based on the cylinder diameter, $\mathrm{Re}=3900$, and a free--stream Mach number of $0.2$. The Prandtl number is set to $\mathrm{Pr}=0.71$, while the specific heat ratio is $\gamma=1.4$.
 This configuration represents a canonical benchmark for the assessment of LES methodologies owing to its pronounced sensitivity to both numerical and modeling errors,~\cite{kravchenko2000,parnaudeau2008}.  The flow is characterized by the laminar separation of the shear layers from the cylinder surface, followed by transition to turbulence in the near wake, see Fig.~\ref{fig:qcriterion} for a representation. 
 A broad selection of LES results available in the literature
 is summarized in Tab.~\ref{tab:literature_forces}.\\
\begin{figure}[htbp]
\centering
{\includegraphics[width=0.85\textwidth]{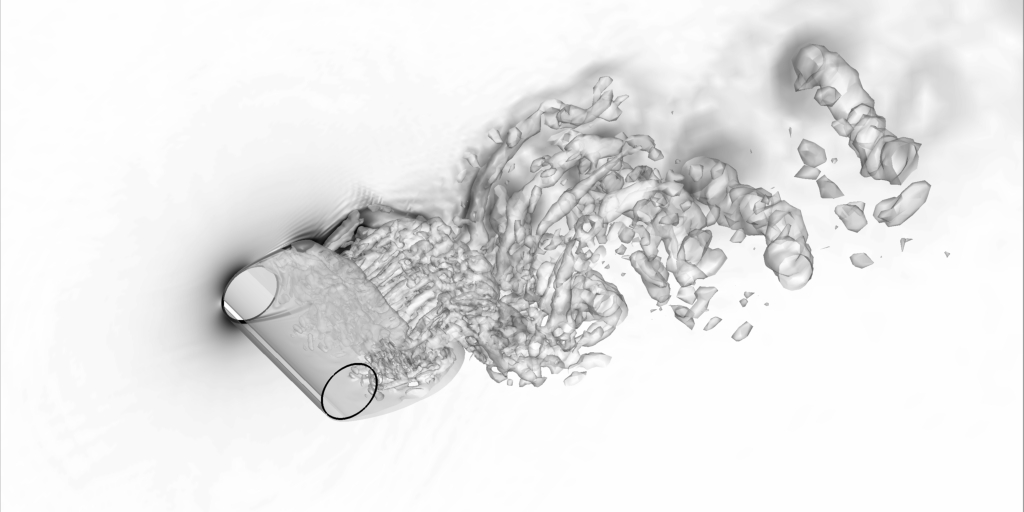} \label{fig:qcriterion-image}}
\caption{Vortical structures visualized by the Q-criterion, with the numerical Schlieren field shown on the vertical plane. The solution was obtained on the MareNostrum5 supercomputer using NVIDIA H100 accelerators. LSQ--maxmin filter width, $C_\Delta = 1.2$~.}
\label{fig:qcriterion}
\end{figure}
The computational domain extends $38D$ in the radial direction from the cylinder centre and $\pi D$ in the spanwise direction. The fully structured O--type grid was inherited directly from our previous study,~\cite{dalessandro2024}, with $N_\theta=300$, $N_r=330$, and $N_z=48$ cells in the circumferential, radial, and spanwise directions, respectively. This spanwise resolution was deliberately retained because our previous assessment identified $N_z=48$ as a discriminating configuration for the LSQ filter,~\cite{dalessandro2024}. At finer resolutions, the sensitivity to the filter--width definition decreases, and the conventional cube--root filter also recovers the correct flow behaviour. 
Computational cells were clustered near the cylinder surface, and the dimensionless height of the cell next to the wall was set to $10^{-3}$.
In addition, a sponge--layer was adopted in the far--field region as a non--reflecting boundary treatment. 
Following our previous investigations,~\cite{dalessandro2024}, the sponge layer extends $13D$ from the outer boundary, corresponding to a dimensionless thickness $L_{sp}f/c_\infty \simeq 0.5$, where $f$ is the vortex-shedding frequency and $c_\infty$ is the free--stream speed of sound. 
The target damping level was set to $\eta_{target}=40~\mathrm{dB}$ as suggested by Mani et al.~\cite{mani2012}.\\
Importantly, a well documented feature of this benchmark is the sensitivity of the near-wake  topology to numerical and modeling parameters,~\cite{beaudan1994}. The mean streamwise velocity  profile in the very near wake (at $x/D = 1.06$) can exhibit either a 
U--shaped or V--shaped character,~\cite{parnaudeau2008,Lourenco:1993}. The U--shaped profile has been confirmed as the correct physical behavior, while the V--shaped profile has  been associated with premature laminar-to-turbulent transition in the separated shear layers, either due to upstream disturbances in experiments or excessive numerical dissipation in simulations,~\cite{parnaudeau2008}.\\
The numerical simulations were performed using \texttt{caaspuma}, as introduced above. Convective terms were discretized using Pirozzoli's second-order accurate, energy conserving scheme,~\cite{pirozzoli2011}, while standard central differences were employed for the diffusive contributions. Time integration was carried out using a compact storage $(2N)$ explicit Runge--Kutta scheme. Specifically, a five-stage, fourth-order accurate method was adopted,~\cite{kennedy2000}. This combination provides a low--dissipation numerical framework suitable for scale resolving simulations, while limiting the contamination of the resolved turbulent scales by spurious numerical dissipation.
All the computations were performed using two GPU accelerators on the GPU based platforms and 384 CPU cores on CRESCO8, thereby retaining a highly parallel execution environment also for the CPU reference.\\
The numerical simulations were first advanced for $t^* \simeq 220$,
where $t^*=tU_{\mathrm{ref}}/L_{\mathrm{ref}}$, before statistical sampling was initiated. They were then continued for an additional $\Delta t^* \simeq 235$, over which the flow statistics were accumulated. Considering the radial domain extent, the sampling interval spans approximately 6.2 flow-through times.
Moreover, the sampling interval corresponds to approximately 50 vortex--shedding cycles.\\
\begin{figure}[htbp]
 \centering
 \subfigure[Mean streamwise velocity.]
 {\includegraphics[width=0.485\textwidth]{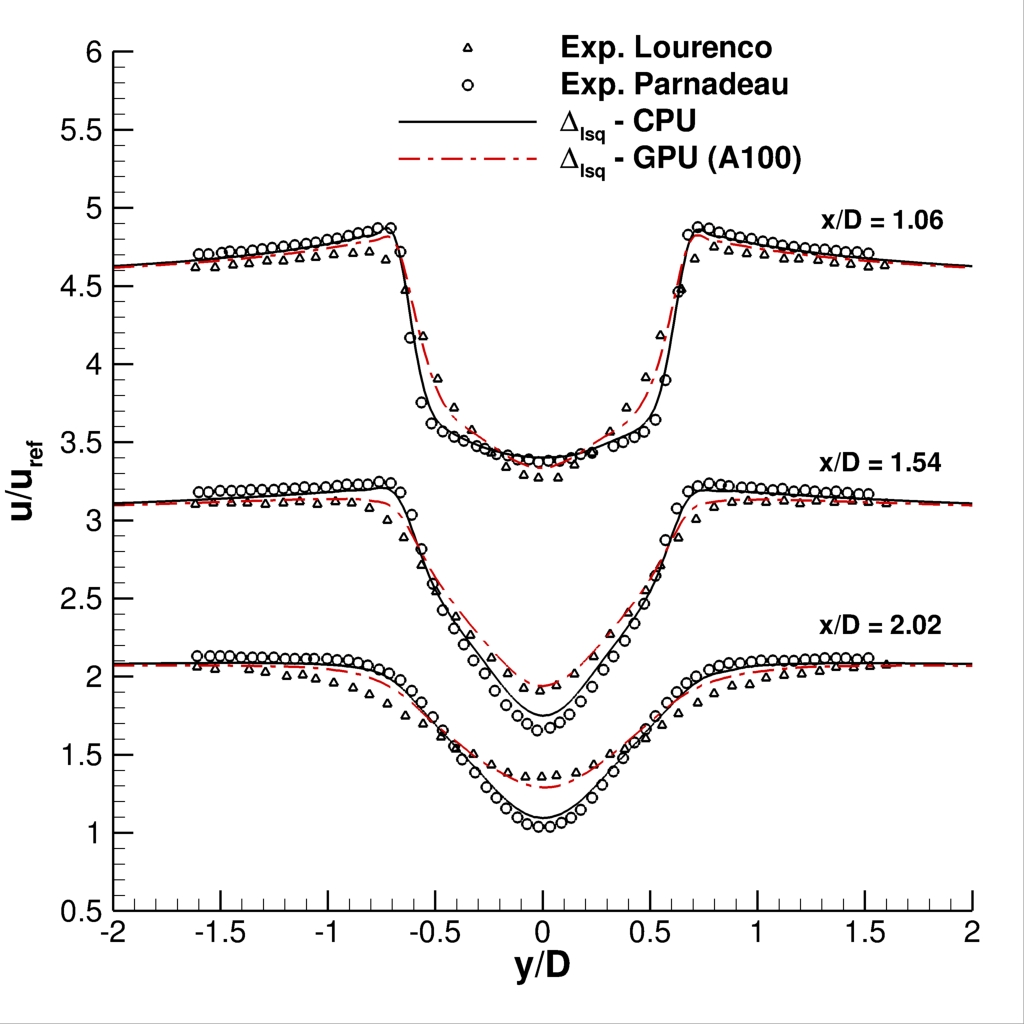} \label{fig:CPU-GPU-A100-umean}}
 \subfigure[Mean resolved streamwise Reynolds stress.]
 {\includegraphics[width=0.485\textwidth]{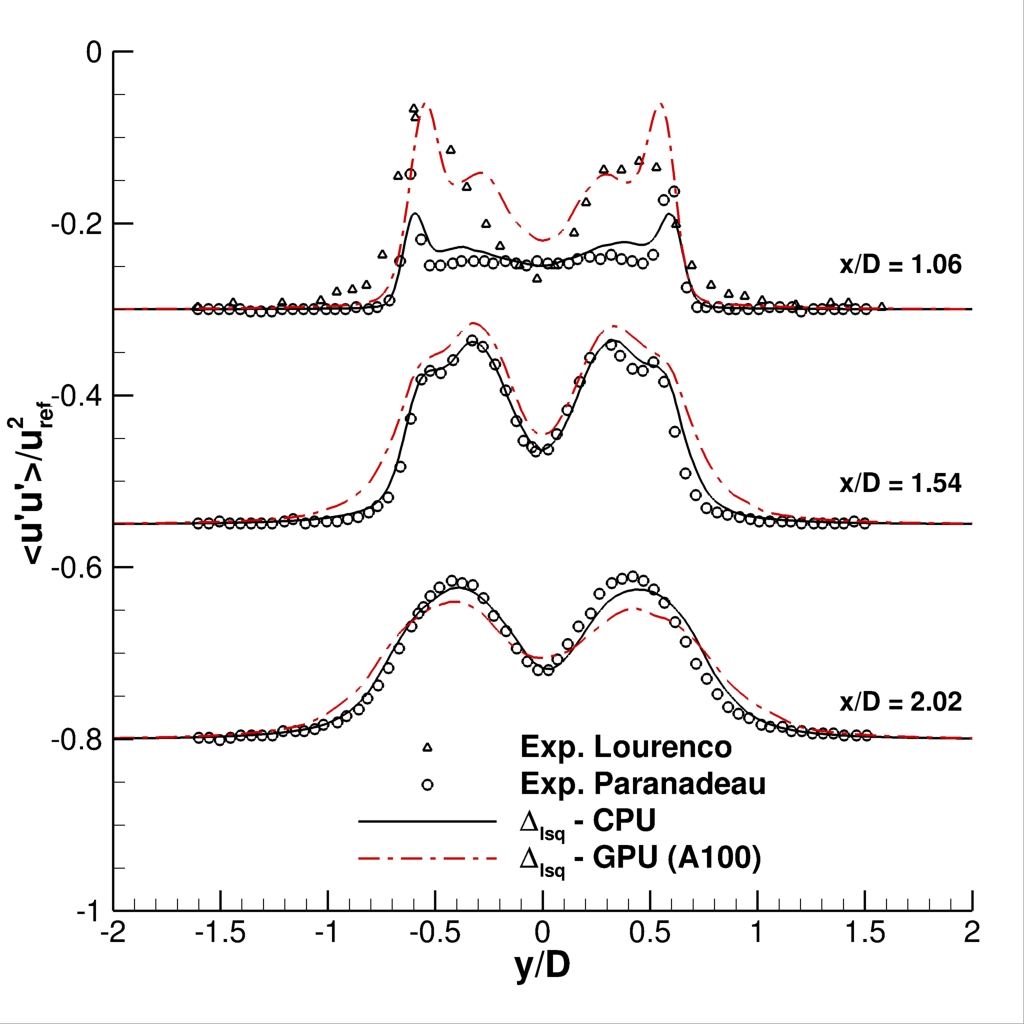}
 \label{fig:CPU-GPU-A100-uprime}}
\caption{Comparison of wake--region flow statistics obtained with the standard LSQ filter on CPU and GPU architectures. The GPU results use the face--based loop implementation.}
\label{fig:CPU-GPU-A100}
\end{figure}
We first examine the cross-architecture behaviour of the standard LSQ formulation, using the CPU solution as a reference and retaining the face-based implementation on the GPUs.
As shown in Fig.~\ref{fig:CPU-GPU-A100-umean}, the standard LSQ filter evaluated on the NVIDIA A100 produces a marked departure from the
corresponding CPU solution in the immediate near wake. In particular, at
$x/D=1.06$, the mean streamwise velocity exhibits a pronounced V--shaped
profile, which is characteristic of a premature transition of the separated
shear layers. The same behavior can be observed in Fig.~\ref{fig:vort-CPU-GPU}, where the GPU solution shows a substantially modified recirculation topology compared with the CPU result. Consistently, the resolved streamwise normal Reynolds stress reported in Fig.~\ref{fig:CPU-GPU-A100-uprime} is considerably larger for the GPU computation, especially at $x/D=1.06$.
These higher fluctuation levels indicate an earlier
breakdown of the separated shear layers.
It is important to underline that these results are consistent with the perturbations transmission mechanism identified in Sec.~\ref{sec:theory}. Indeed, in shear--dominated regions, directional metric perturbations introduced
by face--based atomic accumulations can be transmitted through the standard LSQ
operator with near-unit relative gain.
The resulting variation of the SGS activity
is consistent with the premature transition indicated by the V--shaped
velocity profile, the increased resolved Reynolds stress, and the modified
wake topology.\\
\begin{figure}[htbp]
\centering
{\includegraphics[width=0.65\textwidth]{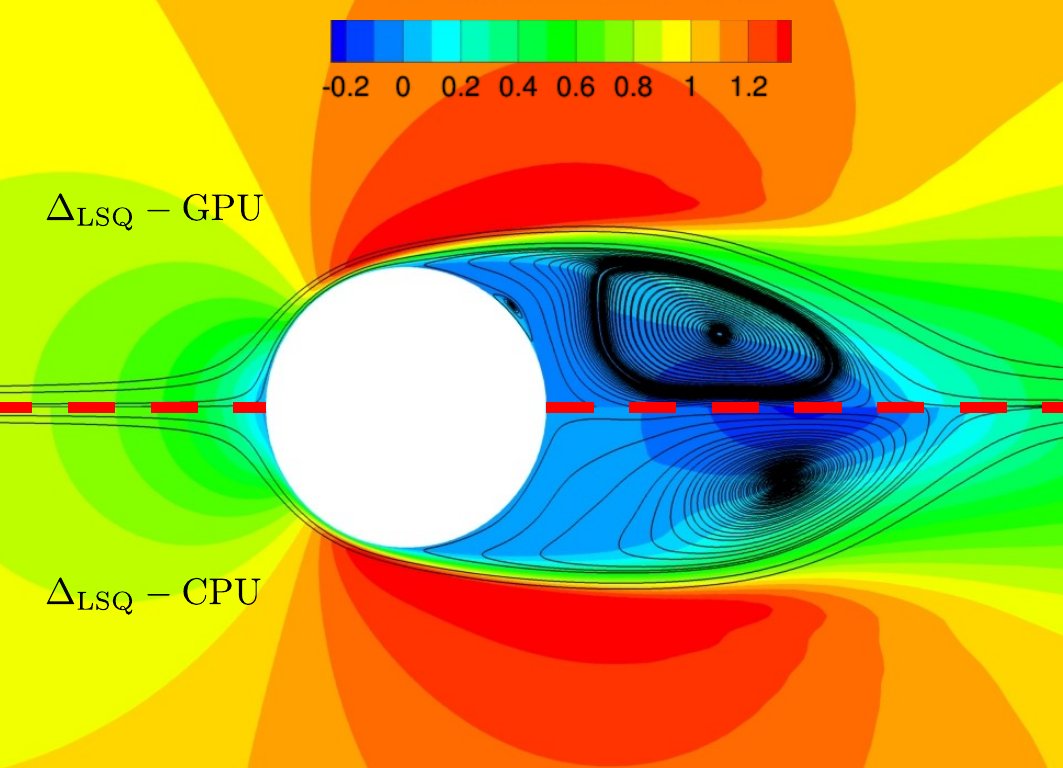} \label{fig:qcriterion-image}}
\caption{Mean velocity streamlines overlaid on the mean streamwise velocity field obtained with the standard LSQ filter. The upper and lower halves show the NVIDIA A100 GPU face--based solution and the corresponding CPU solution, respectively.}
\label{fig:vort-CPU-GPU}
\end{figure}
\begin{figure}[htbp]
 \centering
 \subfigure[Mean streamwise velocity.]
 {\includegraphics[width=0.485\textwidth]{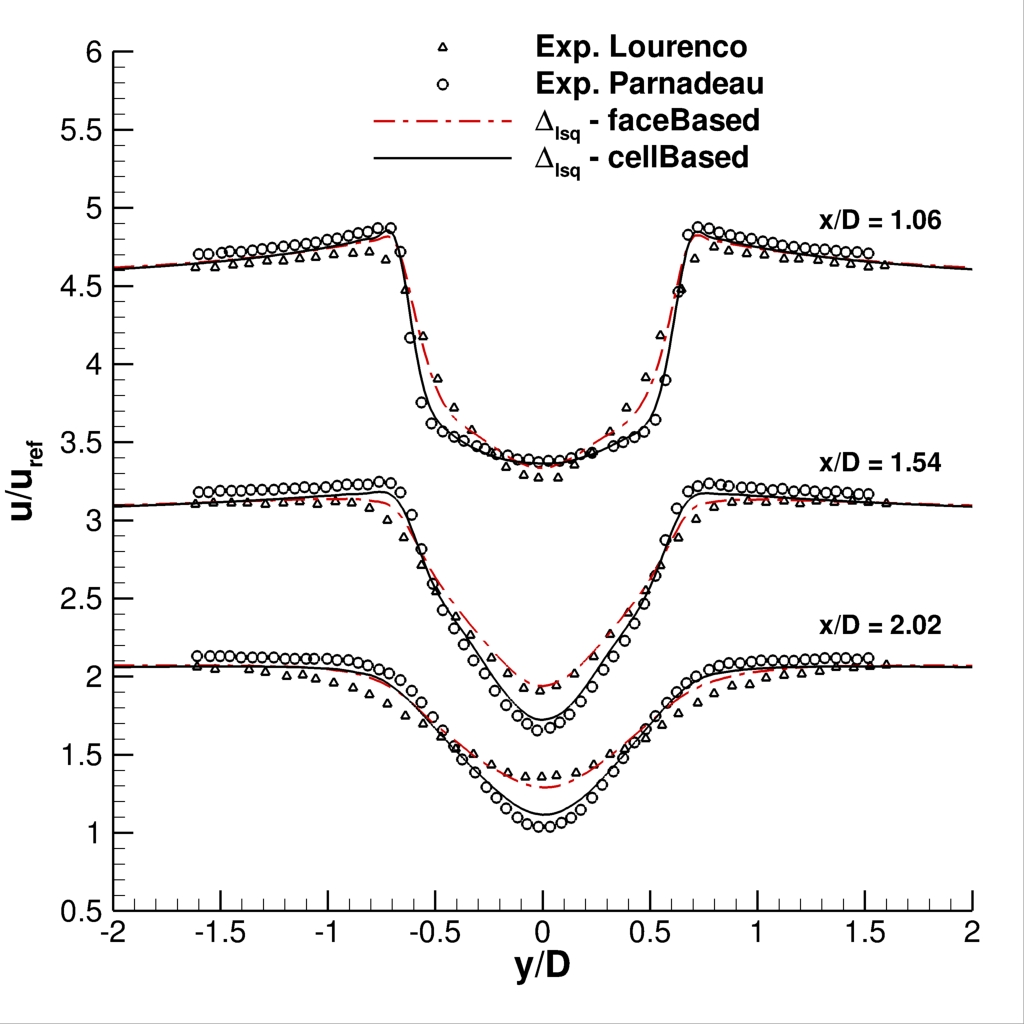} \label{fig:umean-1}}
 \subfigure[Mean resolved streamwise Reynolds stress.]
 {\includegraphics[width=0.485\textwidth]{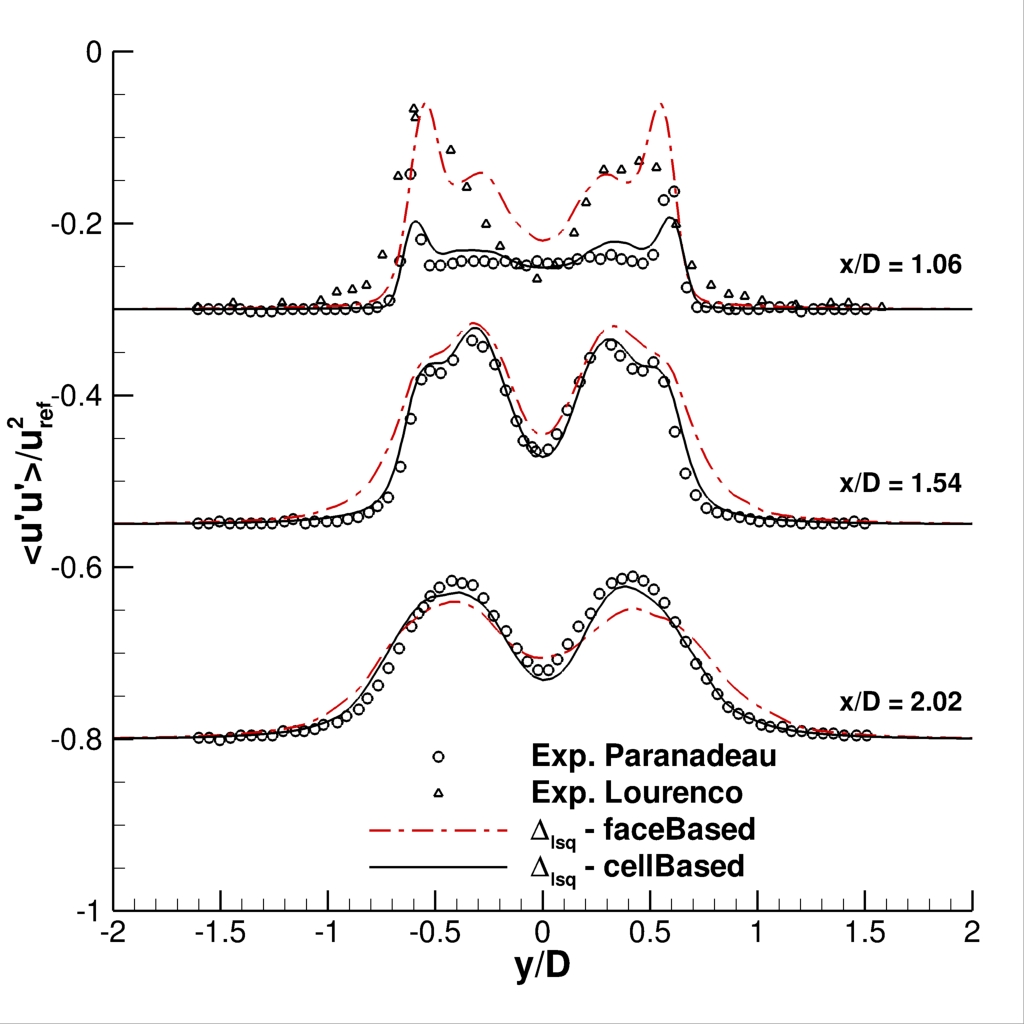}}
\caption{Comparison of wake--region flow statistics obtained with the standard LSQ filter on GPU NVIDIA A100 with different loop implementations.}
\label{fig:cell-face-A100}
\end{figure}
By contrast, Fig.~\ref{fig:cell-face-A100} isolates the effect of the metric--assembly strategy while keeping
both the GPU architecture and the standard LSQ formulation unchanged. Replacing
the face--based atomic accumulation with the deterministic cell--based loop
restores the U--shaped mean velocity profile and substantially reduces the
overprediction of the resolved streamwise Reynolds stress in the immediate
near wake. The agreement with the reference data is consequently recovered. These results further support the interpretation developed in
Sec.~\ref{sec:theory}, according to which the observed sensitivity
results from the interaction between perturbations introduced during
the assembly of the directional mesh metrics and their transmission
through the kinematic weighting of the LSQ operator.
Equivalent face--based/cell--based comparisons were carried out on NVIDIA GH200, NVIDIA H100, and AMD MI250X accelerators. All platforms reproduced the same qualitative and quantitative behaviour observed on the NVIDIA A100, with differences limited to negligible statistical variations; the corresponding results are omitted for the sake of compactness.\\
This comparison also shows that reproducibility and robustness are distinct requirements. A deterministic implementation removes perturbations caused by non-deterministic accumulation, while a robust filter-width formulation should limit their impact on the resolved flow.
%
\begin{figure}[htbp]
 \centering
 \subfigure[Mean streamwise velocity. NVIDIA A100, NVIDIA GH200, AMD MI250X; $\eta = 10^{-5}$.]
 {\includegraphics[width=0.485\textwidth]{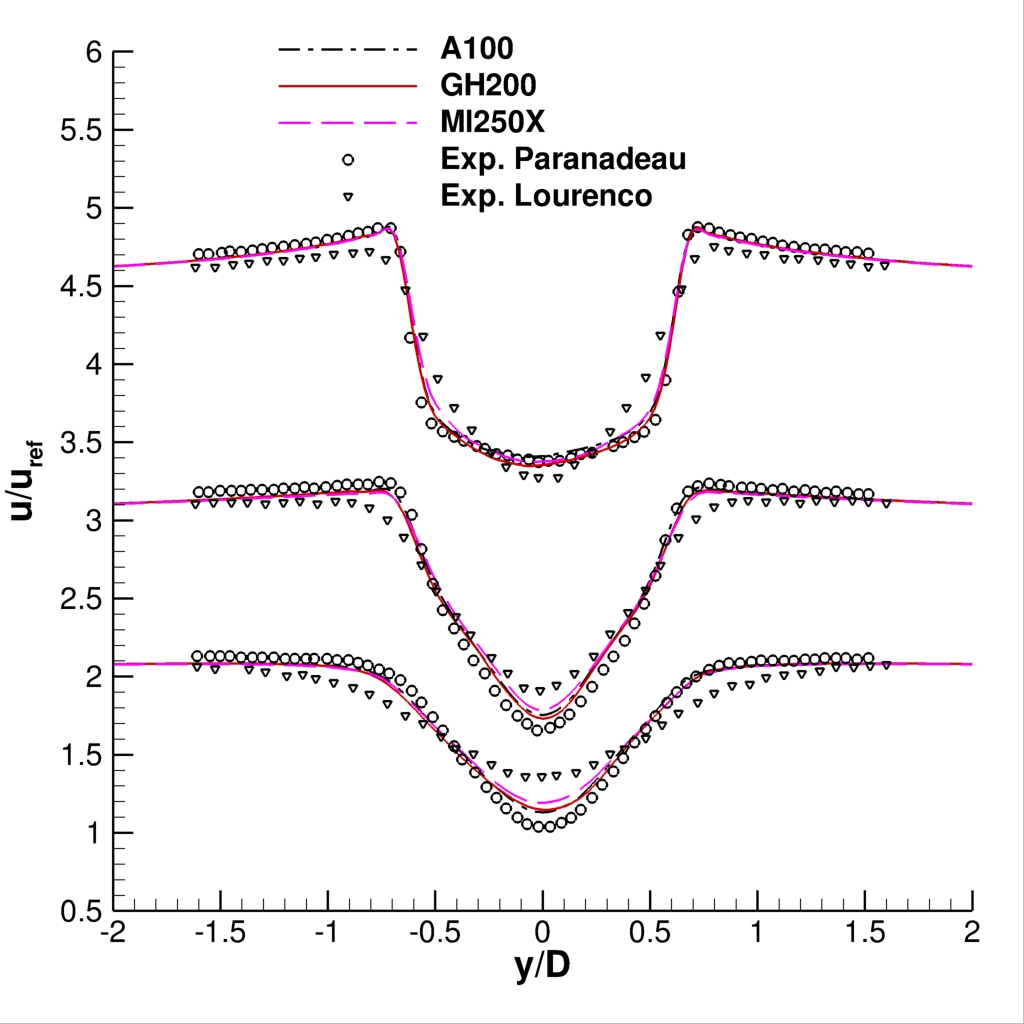} \label{fig:umean-1}}
 \subfigure[Mean streamwise velocity. NVIDIA H100.]
 {\includegraphics[width=0.485\textwidth]{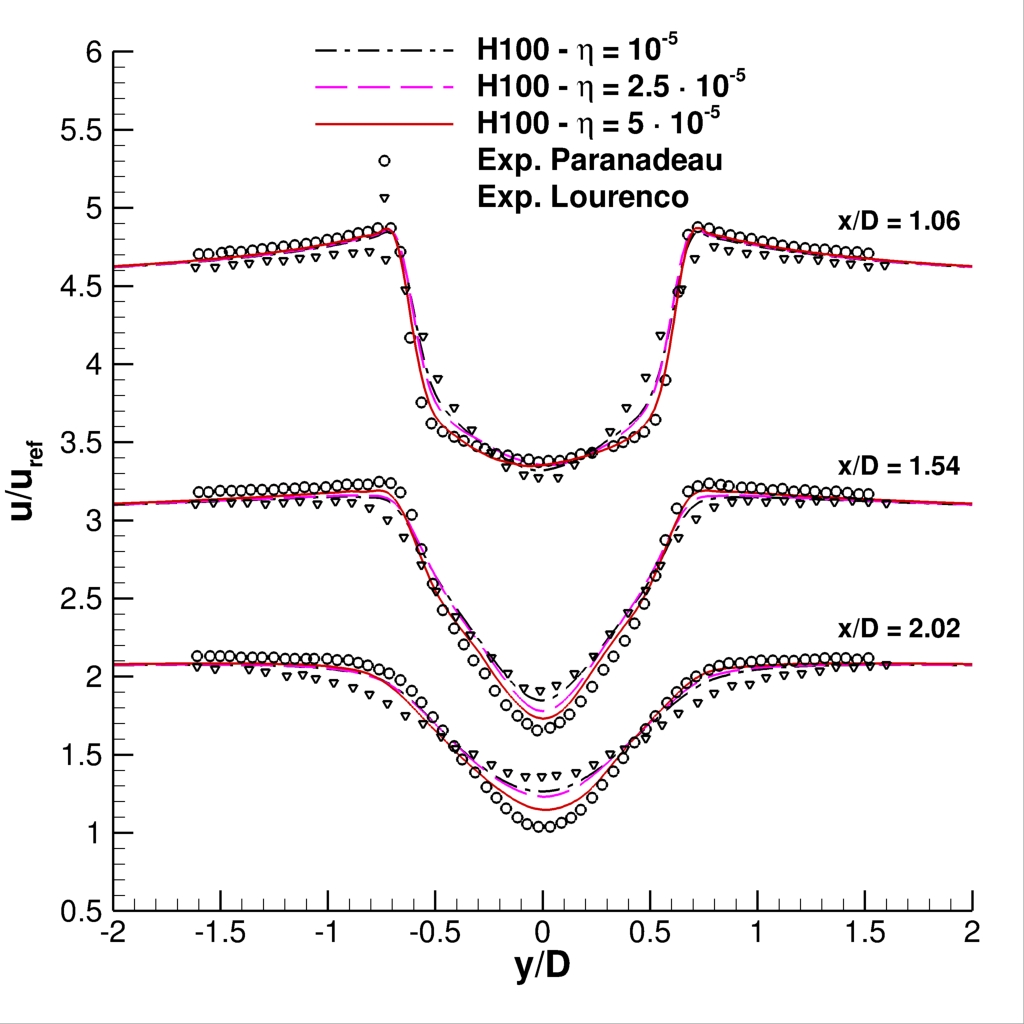}}
\caption{Effect of the Tikhonov-inspired kinematic diagonal loading
on the mean streamwise velocity profiles. Comparison among the considered architectures using standard LSQ filter with face--based implementation.}
\label{fig:kinematic-loading}
\end{figure}
Fig.~\ref{fig:kinematic-loading} shows the response of the standard face--based LSQ formulation to a controlled kinematic loading. A common diagnostic value, $\eta=10^{-5}$, was applied to all GPU architectures. This perturbation is sufficient to recover the physically consistent U-shaped near-wake profile on the NVIDIA A100, NVIDIA GH200, and AMD MI250X architectures, whereas the H100 solution remains in the V-shaped state. For the latter architecture, increasing the loading to $\eta=5\cdot10^{-5}$ recovers the U-shaped profile.
By modifying the local kinematic contribution while leaving the directional mesh metrics and their face-based assembly unchanged, the kinematic loading isolates the role of the LSQ weighting. The recovery of the U-shaped state then clearly indicates that this weighting mediates the transmission of architecture-dependent metric perturbations to the shear-layer dynamics.\\
%
%
We now examine the behaviour of the proposed LSQ--maxmin formulation across all the considered architectures in the presence of floating--point perturbations.
%
The filter--width formulation is first assessed by comparing CPU and
NVIDIA A100 computations performed using the same value
$C_\Delta=1.3$. 
%
As shown in Fig.~\ref{fig:maxmin-CPU-A100}, the CPU and GPU predictions obtained
with LSQ--maxmin are virtually indistinguishable. In particular, both
computations recover the physically consistent U-shaped mean velocity
profile at $x/D=1.06$ and provide closely matching resolved Reynolds
stresses throughout the near wake. 
In contrast to the cell--based strategy, the filter-width formulation introduced in this paper does
not eliminate the perturbation source, but controls the way in which
metric perturbations are transmitted to the SGS closure.
\begin{figure}[htbp]
 \centering
 \subfigure[Mean streamwise velocity.]
 {\includegraphics[width=0.485\textwidth]{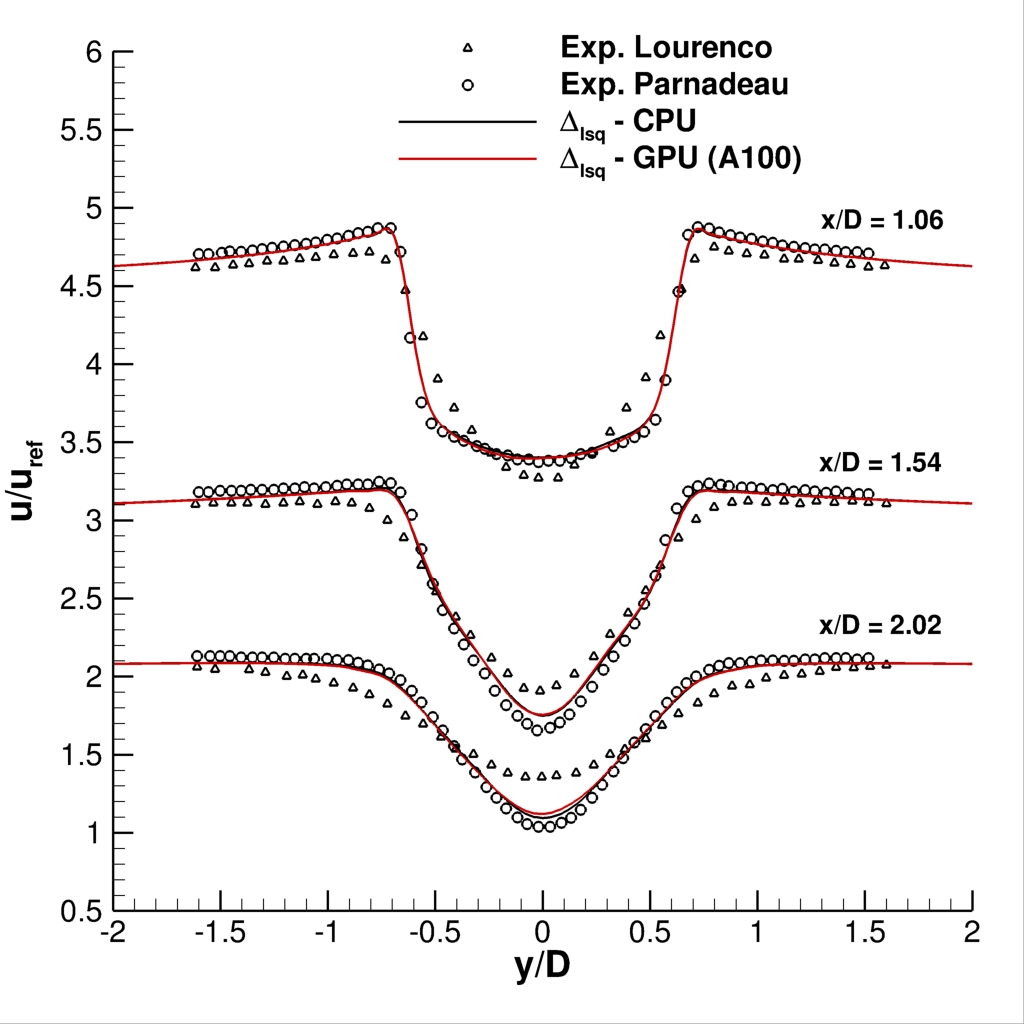} \label{fig:umean-1}}
 \subfigure[Mean resolved streamwise Reynolds stress.]
 {\includegraphics[width=0.485\textwidth]{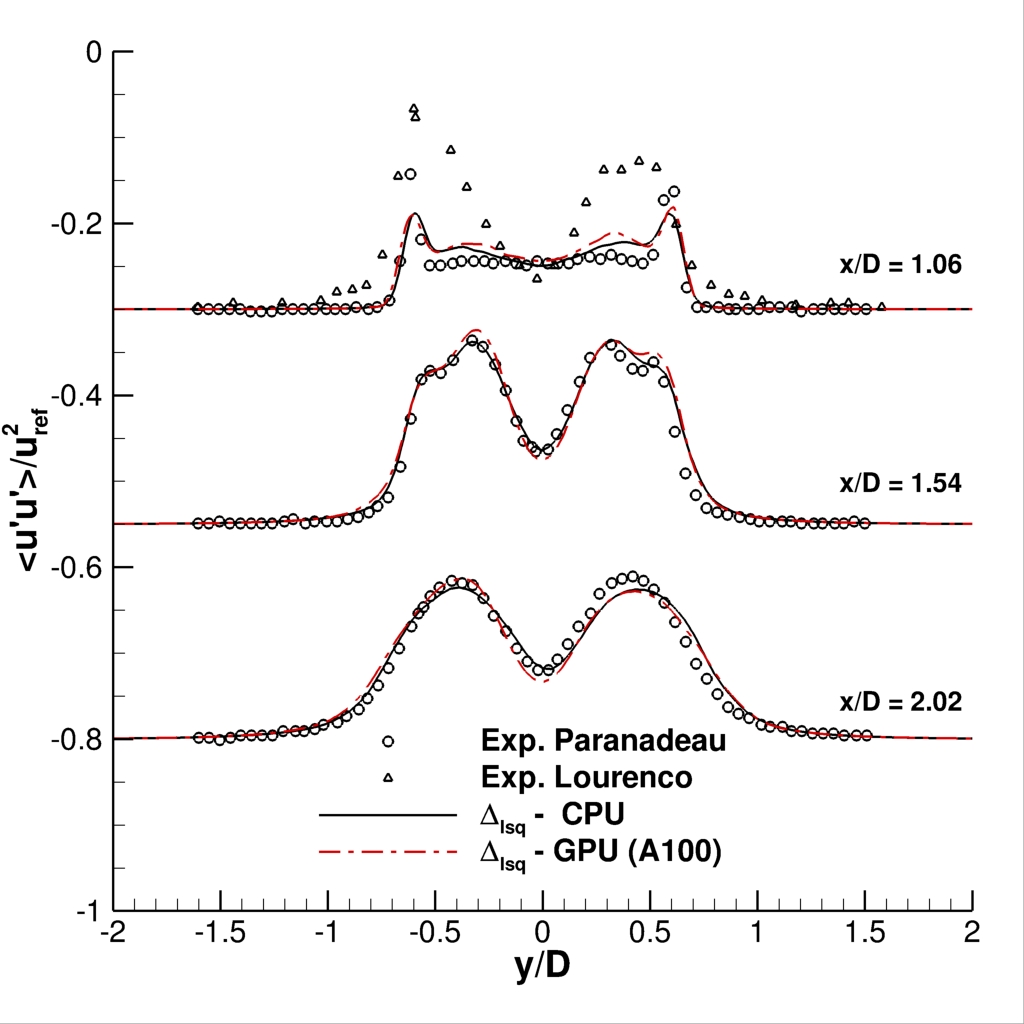}}
\caption{Comparison of wake-region flow statistics obtained with the proposed LSQ-maxmin filter on CPU and GPU architectures. The same value $C_\Delta = 1.3$ is used on both architectures.}
\label{fig:maxmin-CPU-A100}
\end{figure}
To assess the sensitivity of the proposed LSQ--maxmin formulation to the parameter $C_\Delta$, computations were performed over the range
$1.0 \leq C_\Delta \leq 1.5$ on all the architectures considered.
Only the mean streamwise velocity profile at $x/D=1.06$ is reported,
since this location provides the clearest diagnostic of the transition
state of the separated shear layers.
The limiting case $C_\Delta=1$ has a direct interpretation. Since
$\Delta_g \geq \Delta_{\min}$ for positive directional scales, the
metric contribution reduces identically to $\Delta=\Delta_g$, while
the complete filter retains the bounded kinematic modulation
$\mathrm{R}(\mathbf{G})$. 
Additional computations performed with the purely geometric compressed scale showed that removing this modulation does not recover the correct resolved-flow behaviour.
As shown in Fig.~\ref{fig:Cdelta-sweep},
the geometric mean compression is already sufficient to suppress the
architecture dependent transition toward the unphysical V-shaped
near-wake state. The resulting U-shaped profiles, however, still show
appreciable differences in depth and curvature. Thus, controlling the
transmission of metric perturbations is sufficient to recover the
correct qualitative wake state, but does not by itself ensure the
appropriate local level of SGS activity.
For $C_\Delta>1$, the minimum-scale branch becomes active wherever
$\Delta_g<C_\Delta\Delta_{\min}$, so that the metric scale is locally
set by $C_\Delta\Delta_{\min}$. Since the complete filter--width enters
quadratically in the SGS viscosity, activation of this branch directly
modifies the filter--width contribution to the local SGS activity and
can consequently alter the resolved to subgrid energy transfer in the
separated shear layers. Therefore, the lower--bound branch complements the perturbation robust geometric compression by modifying the local
filter scale in those regions where the switching condition is
satisfied.
It is important to remark that the dependence of the near-wake solution on $C_\Delta$ remains
moderate and does not exhibit a systematic monotonic trend.
The values
providing the closest agreement with the experimental profiles vary
slightly across architectures: values close to $1.3$ are preferred on
the CPU and NVIDIA A100, approximately $1.1$ on NVIDIA GH200,
approximately $1.2$ on NVIDIA H100, while values between $1.1$ and
$1.2$ are nearly indistinguishable on AMD MI250X.
These variations are
small compared with the qualitative discrepancy produced by the
standard LSQ formulation and, over part of the profile, are comparable
with the spread between the available experimental datasets.
For this reason, the present results do not support the identification of a
single universal best performing value of $C_\Delta$. They instead
identify a narrow robust interval,
$1.1 \leq C_\Delta \leq 1.3$, over which the physically consistent
U-shaped wake state is preserved on all the architectures considered.
Thus, the moderate architecture-to-architecture variation within this
interval is consistent with the role of $C_\Delta$ as a branch-selection
parameter. Furthermore, the modest variation of the best-performing $C_\Delta$ across architectures is consistent with small shifts in the local branch-switching pattern, while the robust interval remains essentially unchanged.\\
%
%
\begin{figure}[p]
    \centering
    \begin{minipage}[t]{0.46\textwidth}
        \centering
        \includegraphics[
            width=\linewidth,
            trim=2mm 2mm 2mm 2mm,
            clip
        ]{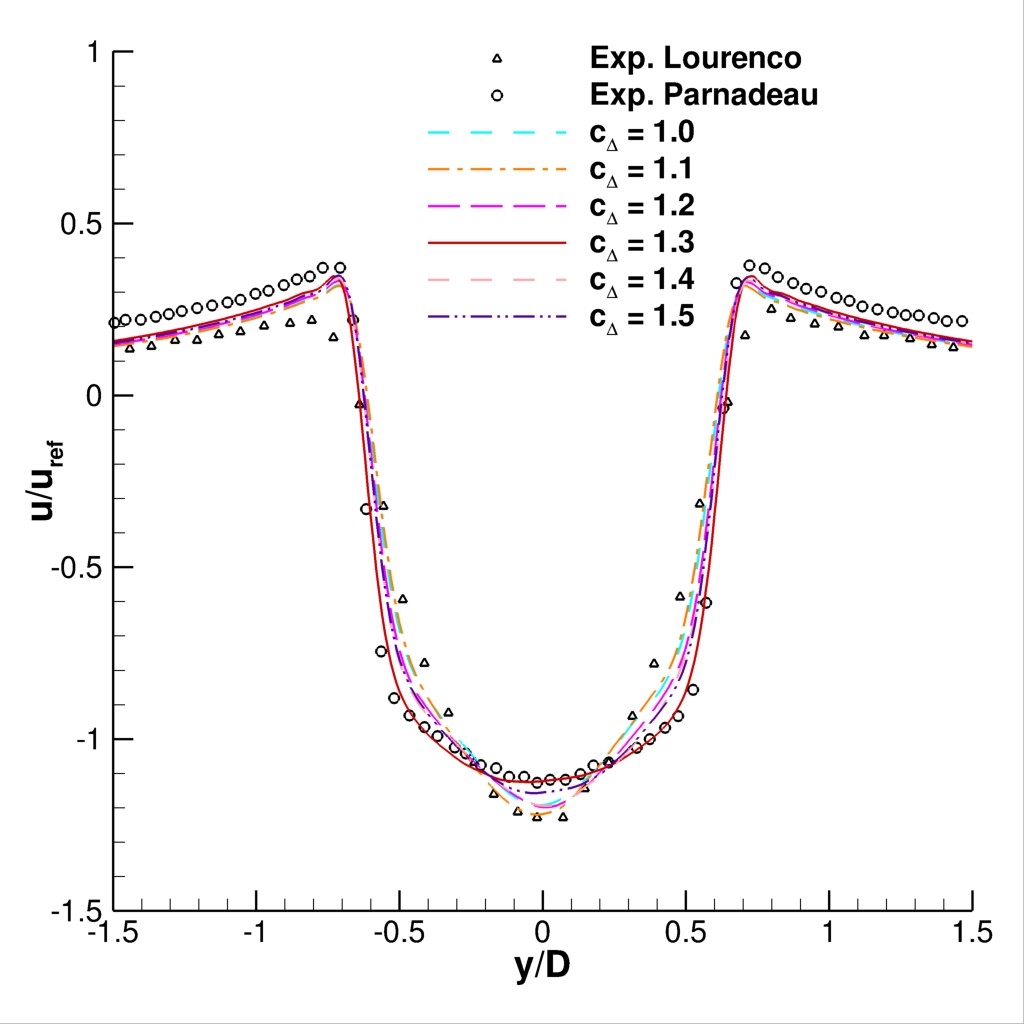}
        \vspace{-3mm}
        {\small (a) NVIDIA A100\par}
    \end{minipage}
    \hfill
    \begin{minipage}[t]{0.46\textwidth}
        \centering
        \includegraphics[
            width=\linewidth,
            trim=2mm 2mm 2mm 2mm,
            clip
        ]{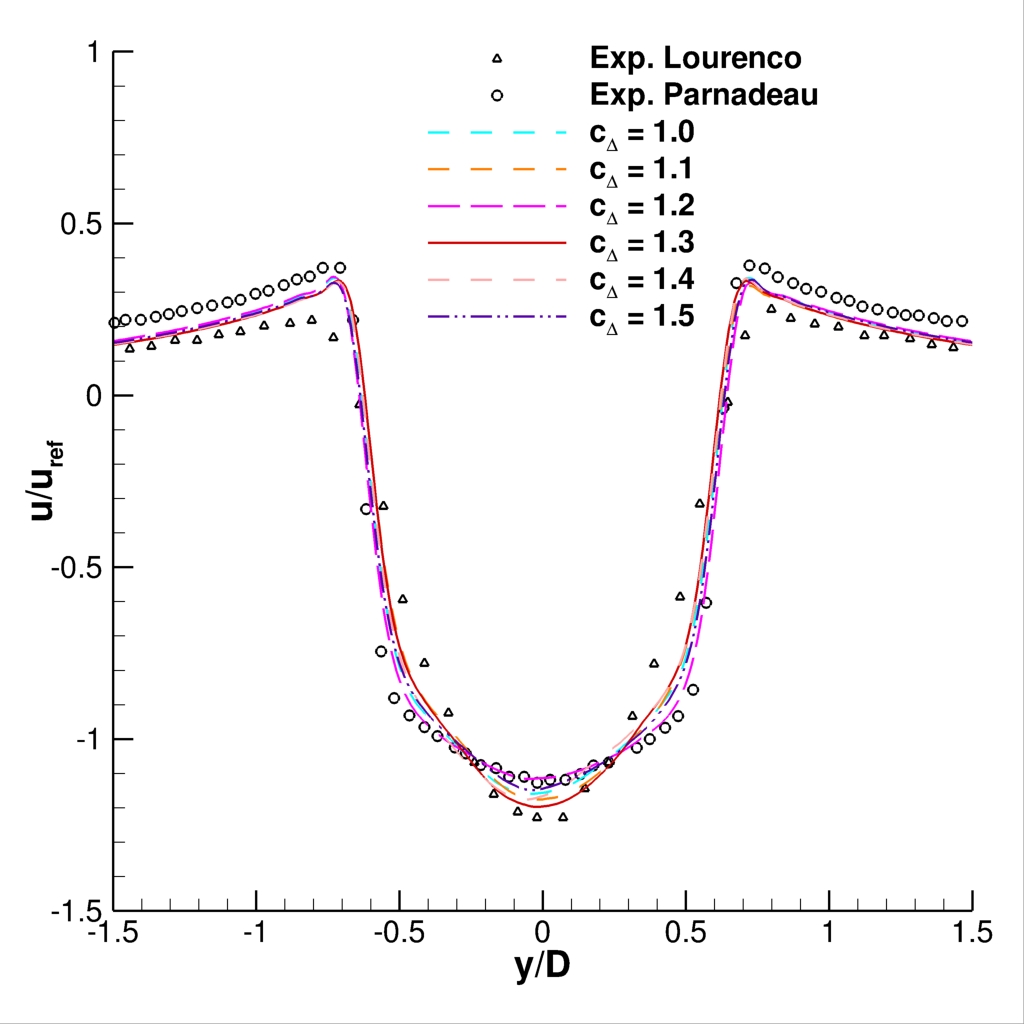}
        \vspace{-3mm}

        {\small (b) NVIDIA H100\par}
    \end{minipage}

    \vspace{-1mm}
    \begin{minipage}[t]{0.46\textwidth}
        \centering
        \includegraphics[
            width=\linewidth,
            trim=2mm 2mm 2mm 2mm,
            clip
        ]{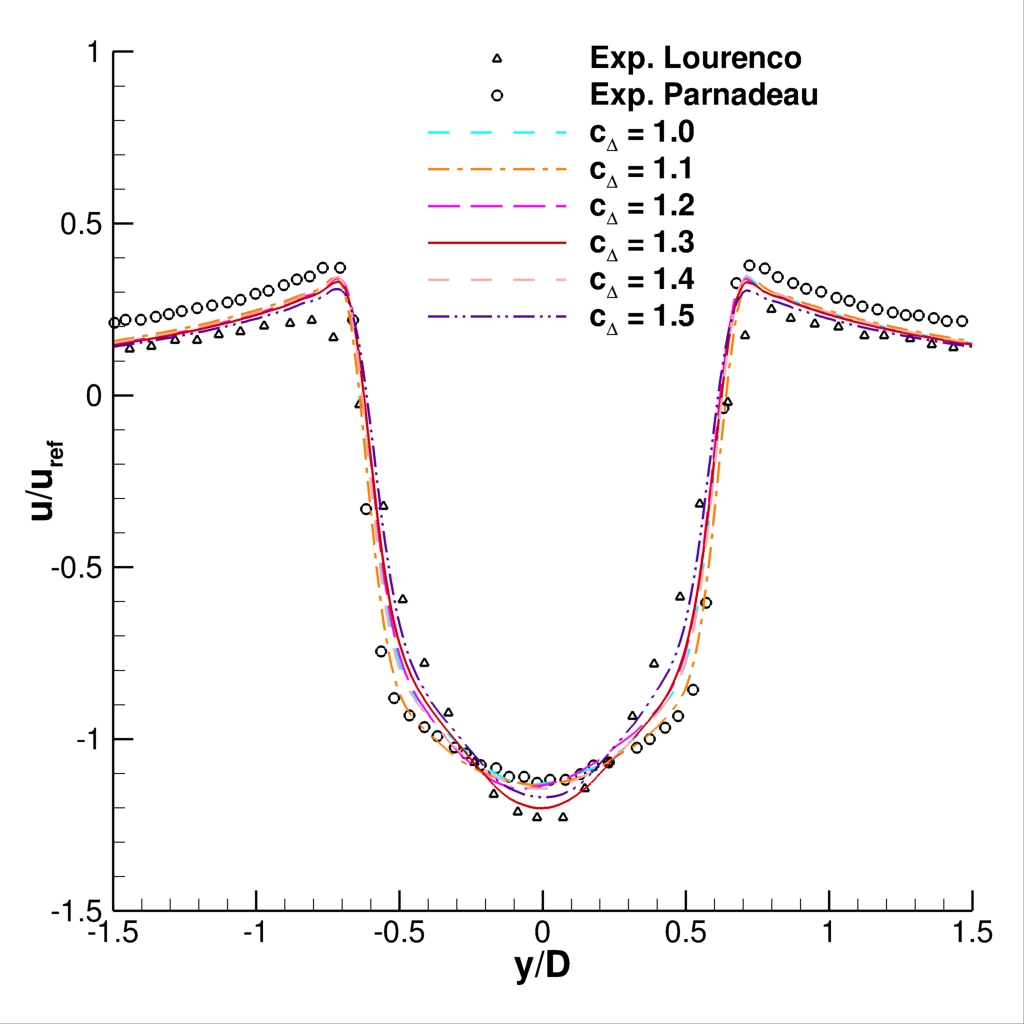}
        \vspace{-3mm}

        {\small (c) NVIDIA GH200\par}
    \end{minipage}
    \hfill
    \begin{minipage}[t]{0.46\textwidth}
        \centering
        \includegraphics[
            width=\linewidth,
            trim=2mm 2mm 2mm 2mm,
            clip
        ]{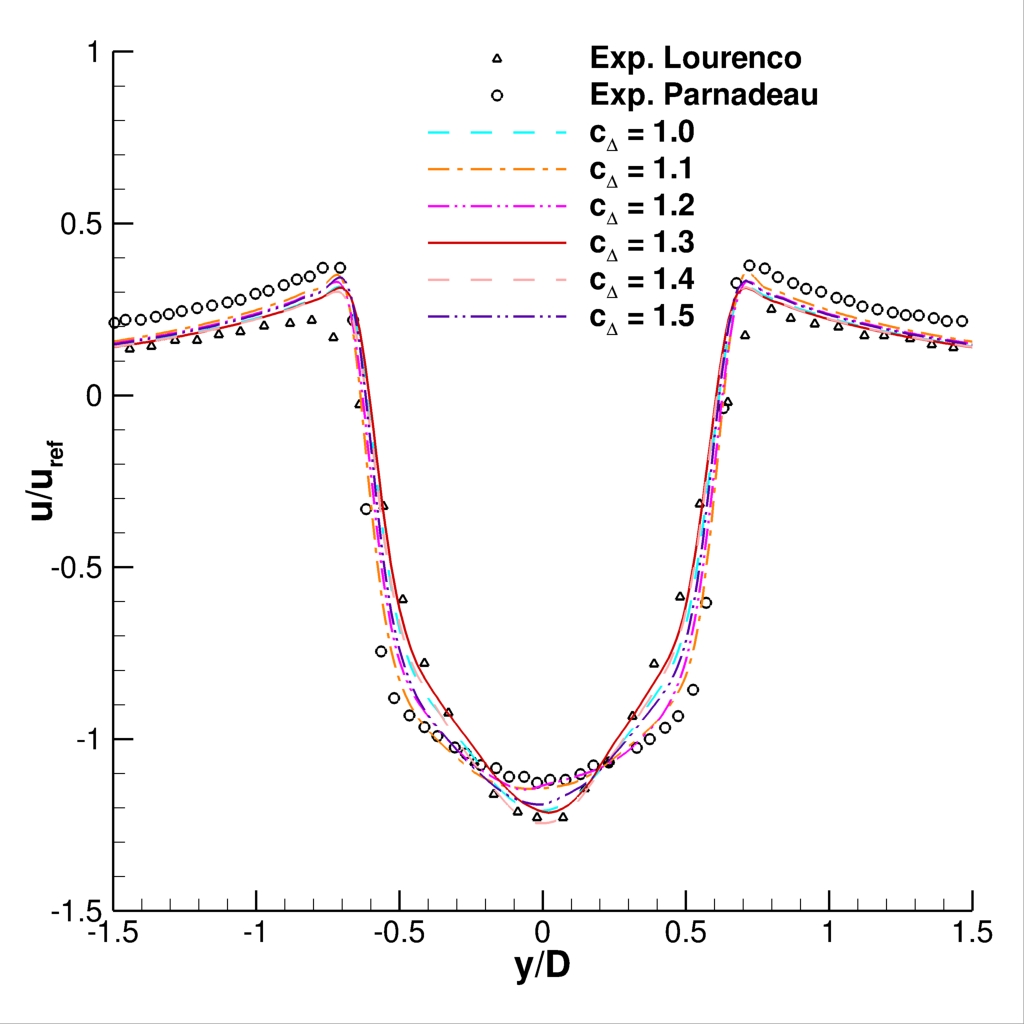}
        \vspace{-3mm}

        {\small (d) AMD MI250X\par}
    \end{minipage}

    \vspace{-1mm}

    \begin{minipage}[t]{0.46\textwidth}
        \centering
        \includegraphics[
            width=\linewidth,
            trim=2mm 2mm 2mm 2mm,
            clip
        ]{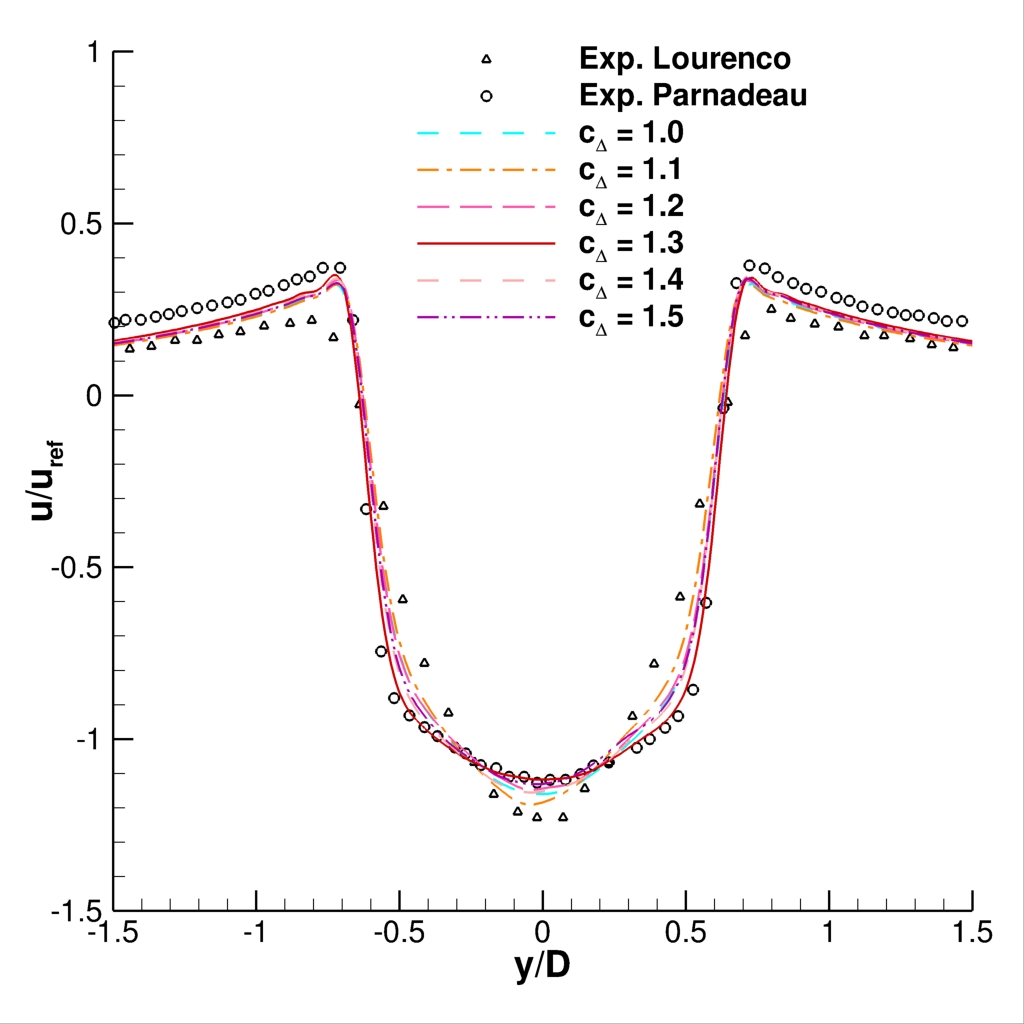}
        \vspace{-3mm}

        {\small (e) CPU\par}
    \end{minipage}

    \vspace{0mm}

    \caption{Sensitivity of the mean streamwise velocity profile at
    $x/D=1.06$ to $C_\Delta$ on the five computing architectures
    considered.}
    \label{fig:Cdelta-sweep}
\end{figure}
 \noindent Tabs.~\ref{tab:marenostrum}--\ref{tab:cresco_cpu} report the time-averaged drag coefficient, \(\langle C_D\rangle\), the root-mean-square lift coefficient, \(C_{L,\mathrm{rms}}\), and the Strouhal number, \(St\), for all the considered computing architectures. For the LSQ–maxmin formulation, the architecture-specific values of \(C_\Delta\) identified from the near-wake velocity profiles yield tightly clustered integral quantities across the five HPC systems, with \(\langle C_D\rangle \simeq 1.00\), \(C_{L,\mathrm{rms}}\simeq 0.157\), and \(St\simeq 0.214\).
%
The agreement with the available literature LES data is particularly good for $\langle C_D\rangle$ and $\mathrm{St}$, as shown in Tab.~\ref{tab:literature_forces}. A direct comparison based on $C_{L,\mathrm{rms}}$ is less conclusive, since this quantity is less systematically reported and exhibits a larger scatter among previous studies; nevertheless, the present values remain within the range of the available LES results. 
Although the integral quantities vary non-monotonically with \(C_\Delta\), this clustering provides an independent quantitative confirmation that the selected values of \(C_\Delta\) recover a common macroscopic wake state across the considered architectures.
%
Finally, it is interesting to note that our results highlight that the computations showing the incorrect V--shaped near--wake profile generally exhibit increased drag and fluctuation levels, although the trend is not uniform for all integral quantities and architectures.
%
 %
%
%
\begin{table}[htbp]
    \centering
    \caption{Flow quantities obtained on NVIDIA H100}
    \label{tab:marenostrum}

    \renewcommand{\arraystretch}{1.35}
    \setlength{\tabcolsep}{12pt}

    \begin{tabular}{|l|c|c|c|}
        \hline
        \text{Case}
        & $\langle C_D \rangle$
        & ${C_{L,\mathrm{rms}}}$
        & $\mathrm{St}$ \\
        \hline
        \noalign{\vskip 4pt}
        \hline

        Cell-based
        & 1.0187
        & 0.1882
        & 0.2185 \\
        \hline

        Face-based
        & 1.0293
        & 0.1991
        & 0.2157 \\
        \hline

        Face-based $\eta =10^{-5}$
        & 1.0392
        & 0.2236
        & 0.2130 \\
        \hline

        Face-based $\eta =2.5\cdot10^{-5}$
        & 1.0267
        & 0.1954
        & 0.2170 \\
        \hline

        Face-based $\eta = 5\cdot10^{-5}$
        & 1.0152
        & 0.1755
        & 0.2145 \\
        \hline

        $C_{\Delta}=1.0$
        & 1.0146
        & 0.1788
        & 0.2152 \\
        \hline

        $C_{\Delta}=1.1$
        & 1.0394
        & 0.2218
        & 0.2186 \\
        \hline

        $C_{\Delta}=1.2$
        & 0.9994
        & 0.1609
        & 0.2145 \\
        \hline

        $C_{\Delta}=1.3$
        & 1.0398
        & 0.2139
        & 0.2164 \\
        \hline

        $C_{\Delta}=1.4$
        & 1.0290
        & 0.2111
        & 0.2131 \\
        \hline

        $C_{\Delta}=1.5$
        & 1.0294
        & 0.2097
        & 0.2158 \\
        \hline
    \end{tabular}
\end{table}

\begin{table}[htbp]
    \centering
    \caption{Flow quantities obtained on NVIDIA GH200}
    \label{tab:jupiter}

    \renewcommand{\arraystretch}{1.35}
    \setlength{\tabcolsep}{12pt}

    \begin{tabular}{|l|c|c|c|}
        \hline
        \text{Case}
        & $\langle C_D \rangle$
        & ${C_{L,\mathrm{rms}}}$
        & $\mathrm{St}$ \\
        \hline
        \noalign{\vskip 4pt}
        \hline

        Cell-based
        & 0.9946
        & 0.1451
        & 0.2134 \\
        \hline

        Face-based
        & 1.0195
        & 0.1825
        & 0.2166 \\
        \hline

        Face-based $\eta=10^{-5}$
        & 1.0063
        & 0.1623
        & 0.2132 \\
        \hline

        $C_{\Delta}=1.0$
        & 1.0229
        & 0.1955
        & 0.2138 \\
        \hline

        $C_{\Delta}=1.1$
        & 0.9988
        & 0.1565
        & 0.2134 \\
        \hline

        $C_{\Delta}=1.2$
        & 1.0318
        & 0.2003
        & 0.2184 \\
        \hline

        $C_{\Delta}=1.3$
        & 1.0309
        & 0.2035
        & 0.2154 \\
        \hline

        $C_{\Delta}=1.4$
        & 1.0178
        & 0.1788
        & 0.2137 \\
        \hline

        $C_{\Delta}=1.5$
        & 1.0551
        & 0.2429
        & 0.2134 \\
        \hline
    \end{tabular}
\end{table}
\begin{table}[htbp]
    \centering
    \caption{Flow quantities obtained on NVIDIA A100}
    \label{tab:leonardo}

    \renewcommand{\arraystretch}{1.35}
    \setlength{\tabcolsep}{12pt}

    \begin{tabular}{|l|c|c|c|}
        \hline
        \text{Case}
        & $\langle C_D \rangle$
        & ${C_{L,\mathrm{rms}}}$
        & $\mathrm{St}$ \\
        \hline
        \noalign{\vskip 4pt}
        \hline

        Cell-based
        & 1.0012
        & 0.1587
        & 0.2137 \\
        \hline

        Face-based
        & 1.0400
        & 0.22394
        & 0.2177 \\
        \hline

        Face-based $\eta=10^{-5}$
        & 1.0285
        & 0.1971
        & 0.2182 \\
        \hline

        $C_{\Delta}=1.0$
        & 1.0569
        & 0.2464
        & 0.2156 \\
        \hline

        $C_{\Delta}=1.1$
        & 1.0639
        & 0.2579
        & 0.2185 \\
        \hline

        $C_{\Delta}=1.2$
        & 1.0383
        & 0.2141
        & 0.2126 \\
        \hline

        $C_{\Delta}=1.3$
        & 0.9999
        & 0.1538
        & 0.2137 \\
        \hline

        $C_{\Delta}=1.4$
        & 1.0247
        & 0.1933
        & 0.2144 \\
        \hline

        $C_{\Delta}=1.5$
        & 1.0218
        & 0.1891
        & 0.2137 \\
        \hline
    \end{tabular}
\end{table}
\begin{table}[htbp]
    \centering
    \caption{Flow quantities obtained on AMD MI250X}
    \label{tab:lumi}

    \renewcommand{\arraystretch}{1.35}
    \setlength{\tabcolsep}{12pt}

    \begin{tabular}{|l|c|c|c|}
        \hline
        \text{Case}
        & $\langle C_D \rangle$
        & ${C_{L,\mathrm{rms}}}$
        & $\mathrm{St}$ \\
        \hline
        \noalign{\vskip 4pt}
        \hline

        Cell-based
        & 1.0219
        & 0.1931
        & 0.2145 \\
        \hline

        Face-based
        & 1.08184
        & 0.1832
        & 0.2125 \\
        \hline

        Face-based $\eta=10^{-5}$
        & 1.0386
        & 0.2169
        & 0.2168 \\
        \hline

        $C_{\Delta}=1.0$
        & 1.0596
        & 0.2511
        & 0.2152 \\
        \hline

        $C_{\Delta}=1.1$
        & 1.0079
        & 0.1588
        & 0.2159 \\
        \hline

        $C_{\Delta}=1.2$
        & 1.0105
        & 0.1667
        & 0.2165 \\
        \hline

        $C_{\Delta}=1.3$
        & 1.0421
        & 0.2243
        & 0.217 \\
        \hline

        $C_{\Delta}=1.4$
        & 1.0724
        & 0.2142
        & 0.2142 \\
        \hline

        $C_{\Delta}=1.5$
        & 1.0302
        & 0.1986
        & 0.2145 \\
        \hline
    \end{tabular}
\end{table}
\begin{table}[htbp]
    \centering
    \caption{Flow quantities obtained on CPU}
    \label{tab:cresco_cpu}

    \renewcommand{\arraystretch}{1.35}
    \setlength{\tabcolsep}{12pt}

    \begin{tabular}{|l|c|c|c|}
        \hline
        \text{Case}
        & $\langle C_D \rangle$
        & ${C_{L,\mathrm{rms}}}$
        & $\mathrm{St}$ \\
        \hline
        \noalign{\vskip 4pt}
        \hline

        $C_{\Delta}=1.0$
        & 1.0309
        & 0.2153
        & 0.2158 \\
        \hline

        $C_{\Delta}=1.1$
        & 1.025
        & 0.1952
        & 0.2141 \\
        \hline

        $C_{\Delta}=1.2$
        & 1.0304
        & 0.2064
        & 0.2136 \\
        \hline

        $C_{\Delta}=1.3$
        & 0.998
        & 0.1544
        & 0.2137 \\
        \hline

        $C_{\Delta}=1.4$
        & 1.018
        & 0.1816
        & 0.218 \\
        \hline

        $C_{\Delta}=1.5$
        & 1.025
        & 0.1942
        & 0.2142 \\
        \hline
    \end{tabular}
\end{table}
\begin{table}[t]
\centering
    \renewcommand{\arraystretch}{1.35}
    \setlength{\tabcolsep}{12pt}
\caption{Flow past a circular cylinder at $\mathrm{Re}=3900$. LES literature data.}
\label{tab:literature_forces}
\begin{tabular}{|l|c|c|c|}
\hline 
 & $\langle C_D \rangle$ & $C_{L,\mathrm{rms}}$ & $\mathrm{St}$ \\
\hline\hline
Kravchenko and Moin,~\cite{kravchenko2000}
& 1.04  & --     & 0.210 \\
\hline
Lysenko et al., LES--SMAG~\cite{lysenko2012}
& 1.18  & 0.444  & 0.190 \\
\hline
Lysenko et al., LES--TKE~\cite{lysenko2012}
& 0.97  & 0.090  & 0.209 \\
\hline
Mittal and Moin,~\cite{Mittal:1997}
& 1.00  & --     & 0.220 \\
\hline
Meyer et al.,~\cite{meyer2010}
& 1.05  & --     & 0.210 \\
\hline
Wornom et al.,~\cite{wornom2011}
& 0.99  & 0.110  & 0.210 \\
\hline
Mani et al.,~\cite{mani2009}
& 0.99  & --     & 0.210 \\
\hline
Prsic et al.,~\cite{prsic2014}
& 1.078 & 0.1954 & 0.2154 \\
\hline
Zhang et al.,~\cite{zhang2015}
& 1.001 & 0.125  & 0.220 \\
\hline
Ouvrard et al.,~\cite{ouvrard2010}
& 0.94  & 0.092  & 0.220 \\
\hline
Franke and Frank,~\cite{franke2002}
& 0.99  & --     & 0.210 \\
\hline
Alkishriwi et al.,~\cite{alkishriwi2006}
& 1.05  & --     & 0.220 \\
\hline
Sirkar et al.,~\cite{sircar2020}
& 1.06  & --     & 0.13 \\
\hline
Tian and Xiao,~\cite{tian2020}
& 0.97  & --     & 0.1 \\
\hline
Tian and Xiao,~\cite{tian2022}
& 1.01  & --     & 0.12 \\
\hline
Shui et al.,~\cite{shui2025}
& 1.053 & 0.091 & 0.215 \\
\hline
Hao et al.,~\cite{hao2024}
& 0.995 & 0.085 & 0.203 \\
\hline
\end{tabular}
\end{table}
%

%

\subsection{Taylor--Green Vortex at $\mathrm{Re} = 1600$}
\label{sec:tgv}
\noindent The three-dimensional Taylor--Green vortex is considered as a canonical configuration for assessing LES formulations in a smooth, wall free flow condition,~\cite{Hasslberger:2021}. The initially regular array of counter-rotating vortices progressively loses coherence through vortex stretching and breakdown, generating increasingly finer scale motions and a characteristic decay of kinetic energy.
The computational domain is the periodic cube
\begin{equation}
\Omega = [-\pi L,\pi L]\times[-\pi L,\pi L]\times[-\pi L,\pi L]~,
\end{equation}
where $L$ denotes the characteristic length scale of the reference configuration. 
The corresponding velocity field is initialized as
\begin{equation}
\label{eq:tgv-initial-velocity}
\begin{aligned}
u_x
&=
u_0
\sin\left(\frac{x}{L}\right)
\cos\left(\frac{y}{L}\right)
\cos\left(\frac{z}{L}\right)~,
\\
u_y
&=
-u_0
\cos\left(\frac{x}{L}\right)
\sin\left(\frac{y}{L}\right)
\cos\left(\frac{z}{L}\right)~,
\\
u_z
&=0~,
\end{aligned}
\end{equation}
with $u_0$ denoting the characteristic velocity. This field is
divergence free by construction. By contrast, the corresponding initial pressure
distribution is
\begin{equation}
\label{eq:tgv-initial-pressure}
p 
=
p_0
+
\frac{\rho_0 u_0^2}{16}
\left[
\cos\left(\frac{2x}{L}\right)
+
\cos\left(\frac{2y}{L}\right)
\right]
\left[
\cos\left(\frac{2z}{L}\right)+2
\right]~,
\end{equation}
with $p_0$ and $\rho_0$ representing the reference pressure and density,
respectively. The Reynolds number is defined as
$\mathrm{Re}=\rho_0 u_0 L/\mu_0$,
$\mu_0$ being the reference dynamic viscosity, while the corresponding
convective time scale is $t_{\mathrm{ref}}=L/u_0$.
%
%
%
It is worth noting that in the present computations, $\mathrm{Re}=1600$ is considered, consistently with the standard incompressible TGV benchmark configuration,~\cite{Zirwes2023}. Since the governing equations are solved in compressible form, the benchmark is approached in the low Mach number limit by setting $\mathrm{M}_0=0.1$, with $\gamma=1.4$ and $\mathrm{Pr}=0.71$.\\
\begin{figure}[htbp]
 \centering
 \subfigure[$t/t_{\mathrm{ref}} = 0$]
 {\includegraphics[width=0.485\textwidth]{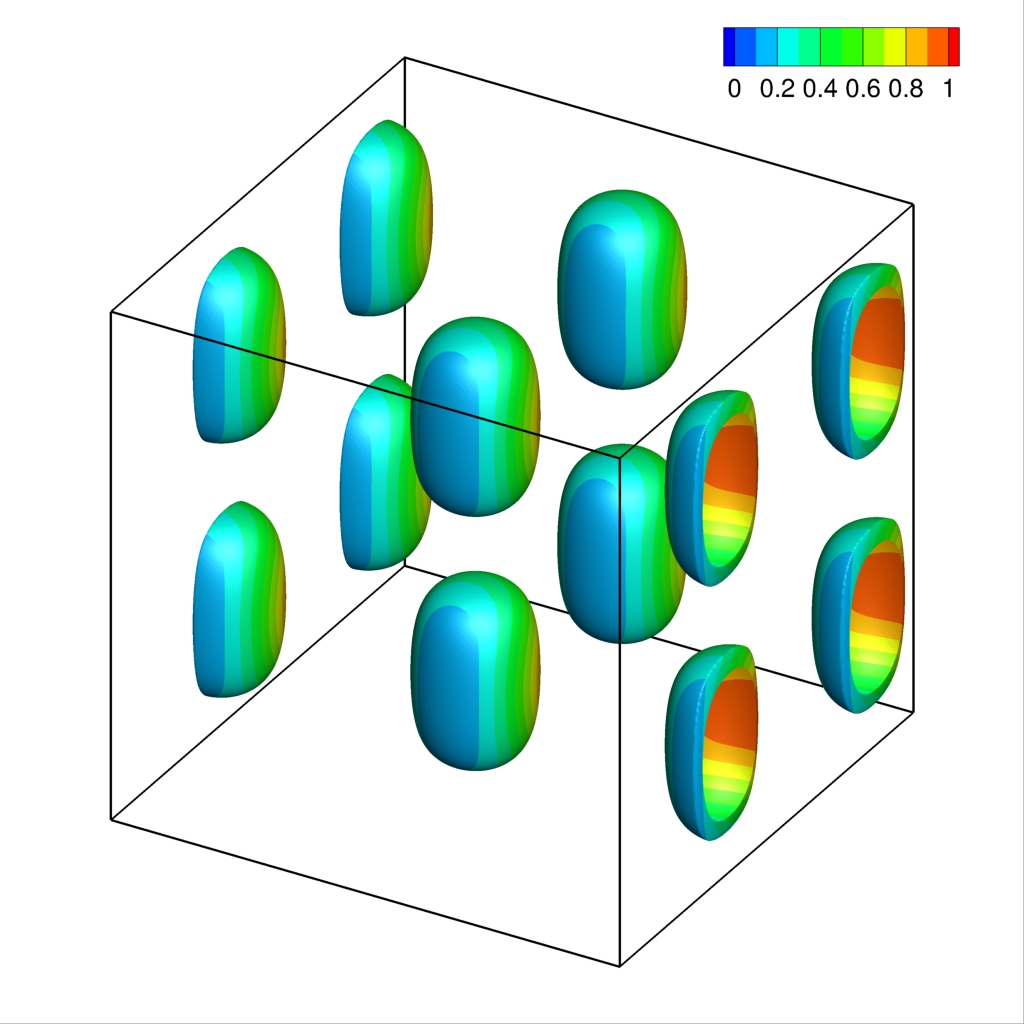}}
 \subfigure[$t/t_{\mathrm{ref}} = 20$]
 {\includegraphics[width=0.485\textwidth]{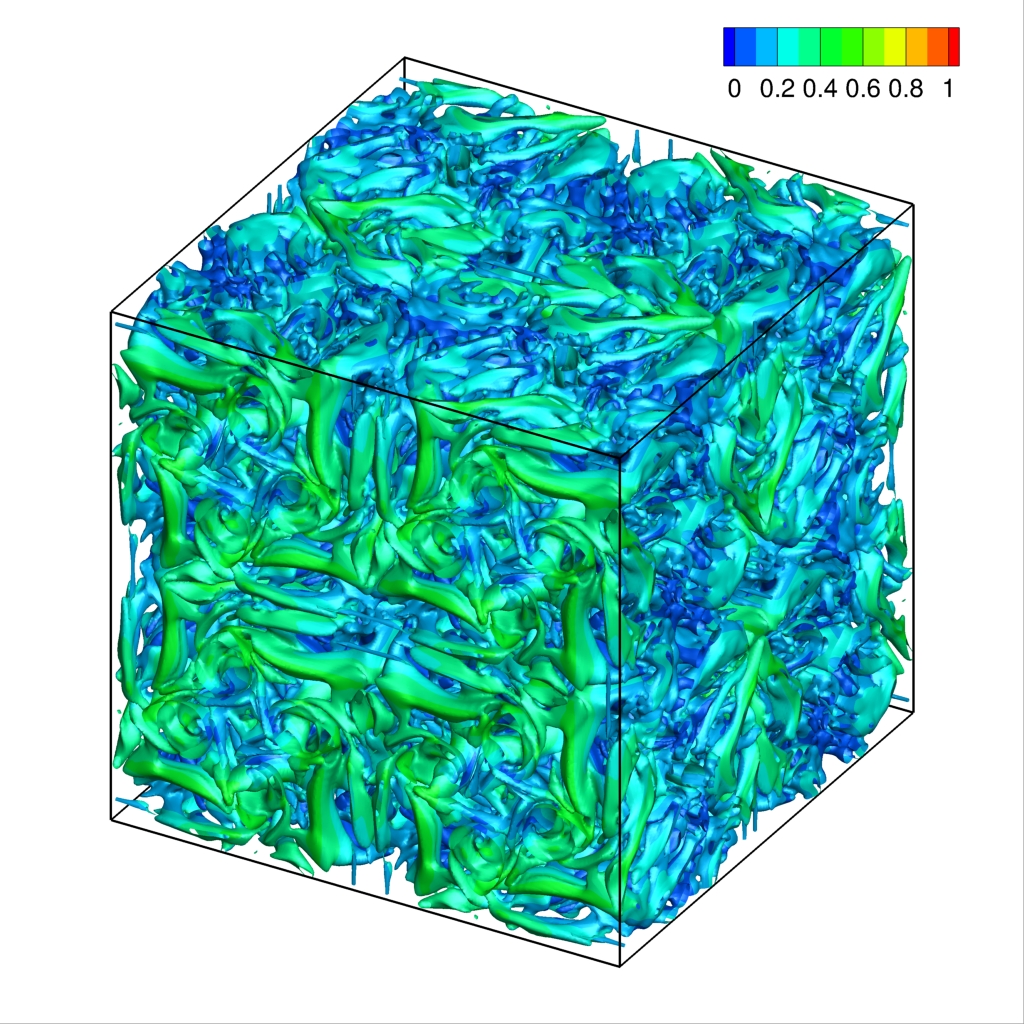}}
\caption{Taylor--Green Vortex at $\mathrm{Re} = 1600$, vortical structures visualized through Q criterion iso-surfaces and coloured by the dimensionless velocity magnitude. LES with standard LSQ filter, $256^3$ cells, NVIDIA GH200 based solution.}
\label{fig:tgv-q}
\end{figure}
%
%
The numerical setup follows the well established configuration  for the Taylor--Green vortex documented by Zirwes et al.,~\cite{Zirwes2023}. 
Thus, the computations were performed with the \texttt{rhoPimpleFoam} solver using the
 second--order backward scheme for time--integration.
As regards the spatial discretization, convective terms were approximated 
using a cubic interpolation scheme,~\cite{Zirwes2023}, whereas diffusive 
terms were treated with a cubic orthogonal scheme,~\cite{Zirwes2023}. 
Since the meshes are fully orthogonal, non-orthogonal corrections were unnecessary; similarly, the periodic computational domain and the absence of far--field boundaries obviated the need for any non--reflecting treatment.
The pressure--velocity coupling was handled through the PIMPLE algorithm, 
using two outer correctors and six pressure correctors per time step, with 
the momentum predictor enabled. Regarding the solution of the 
linear systems, pressure and density were treated using a preconditioned 
conjugate--gradient method with diagonal incomplete--Cholesky preconditioning, 
whereas velocity and energy were solved using PBiCGStab with diagonal 
incomplete-LU preconditioning. For all linear solvers, an absolute tolerance 
of $10^{-13}$ and zero relative tolerance were employed. A representative solution is shown in Fig.~\ref{fig:tgv-q}.\\
%
The global evolution of the TGV is characterized in terms of kinetic energy and kinetic energy dissipation rate. 
A distinction is made between the unmodelled and LES computations, since in the latter only the
resolved part of the flow field is directly available.
For the unmodelled computations, the kinetic energy is defined as
\begin{equation}
K(t)
=
\frac{1}{\rho_0 |\Omega|}
\int_{\Omega}
\left(
\frac{1}{2}\rho
u_i u_i
\right)
\,d\Omega~.
\label{eq:tgv-K}
\end{equation}
Following de Wiart et al.~\cite{deWiart:2014}, the corresponding kinetic energy dissipation is evaluated as
\begin{equation}
\varepsilon
=
\varepsilon_1+\varepsilon_2~,
\label{eq:tgv-eps}
\end{equation}
where
\begin{equation}
\varepsilon_1
=
2\frac{\mu_0}{\rho_0}
\frac{1}{|\Omega|}
\int_{\Omega}
\mathbf{S}^{d}:\mathbf{S}^{d}
\,d\Omega~,
\label{eq:tgv-eps1}
\end{equation}
and
\begin{equation}
\varepsilon_2
=
-\frac{1}{\rho_0|\Omega|}
\int_{\Omega}
p \frac{\partial{u}_i}{\partial x_i}
\,d\Omega~.
\label{eq:tgv-eps2}
\end{equation}
Here, $\mathbf{S}^{d}$ represents the deviatoric 
part of the strain-rate tensor.
As regards LES computations, the same quantities are 
evaluated from the resolved flow variables,~\emph{i.e.} 
by replacing $\rho$, $\mathbf{u}$ and $p$   with their corresponding resolved filtered quantities.
In this case, $\varepsilon$ represents the kinetic energy
dissipation associated with the resolved flow field and does not include the SGS energy transfer contribution.
For compactness, the same notation is retained in the figures for both unmodelled and LES computations.\\
%
%
%
A first set of computations was performed on uniformly spaced Cartesian grids, with the face area vectors aligned with the Cartesian coordinate directions. The time step was adjusted dynamically by imposing a maximum Courant number $\mathrm{Co}_{\max}=0.2$. To assess the suitability of this numerical setup for the TGV, a CPU based grid refinement study was carried out using $128^3$, $256^3$, and $512^3$ cells, with the finest computation run on 4096 CPU cores. For each computational grid, both unmodelled simulations and LES using the standard LSQ filter--width formulation were considered.\\
Fig.~\ref{fig:tgv1} compares the time--history of kinetic energy
and its dissipation rate with the spectral reference
solution,~\cite{deWiart:2014}. The kinetic energy decay, Fig.~\ref{fig:tgv1}(a),
is only weakly affected by grid resolution and remains in close
agreement with the reference solution over the complete evolution. This is why the kinetic energy plots are omitted in the following for compactness.
By contrast, the dissipation rate, shown in Fig.~\ref{fig:tgv1}(b), provides a more
discriminating assessment of the numerical resolution. It is readily observed as the unmodelled solutions exhibit a clear convergence toward the spectral reference as
the grid is refined, with the $512^3$ computation closely reproducing
both the magnitude and timing of the dissipation peak. The largest
discrepancies are observed on the coarser $128^3$ grid, particularly
around the peak dissipation stage, where the smallest dynamically
generated scales are only marginally resolved.
The unmodelled computations show that the adopted spatial and temporal discretization, together with the iterative solvers settings described above, is able to reproduce the expected high resolution evolution of the TGV. On the other hand, the LES solutions obtained with the standard LSQ filter-width exhibit the expected additional SGS dissipation, while progressively approaching the unmodelled high-resolution behaviour as the mesh is refined. Therefore, these results support the suitability of the numerical setup for the present TGV simulations and provide a consistent reference for assessing the behaviour of the LES filter-width formulations.\\
%
%
%
\begin{figure}[htbp]
 \centering
 \subfigure[Time history of the kinetic energy.]
 {\includegraphics[width=0.485\textwidth]{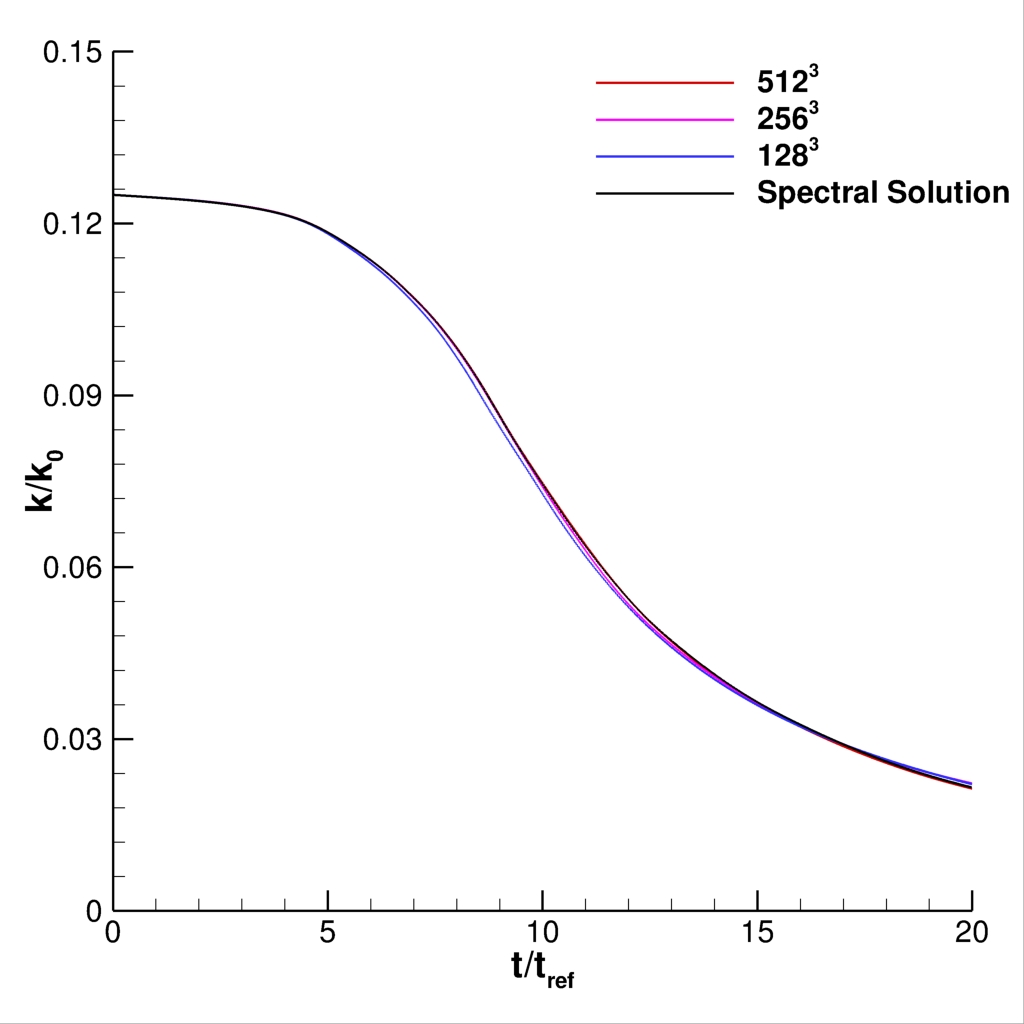}}
 \subfigure[Time history of the kinetic energy dissipation rate.]
 {\includegraphics[width=0.485\textwidth]{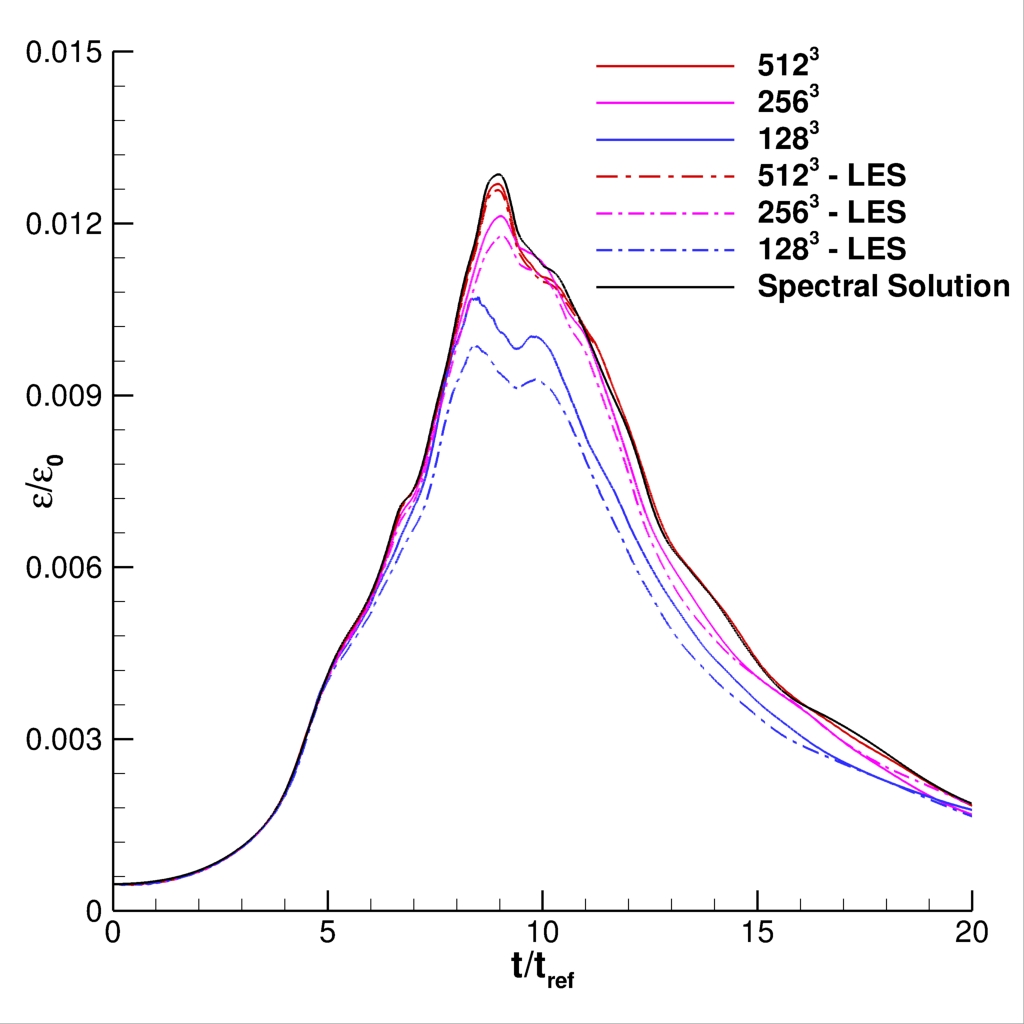}}
\caption{Taylor--Green Vortex at $\mathrm{Re} = 1600$, CPU solutions. LES are performed using standard LSQ filter width.}
\label{fig:tgv1}
\end{figure}
\begin{figure}[htbp]
 \centering
 \subfigure[Solution on $256^3$ cells using different architectures.]
 {\includegraphics[width=0.485\textwidth]{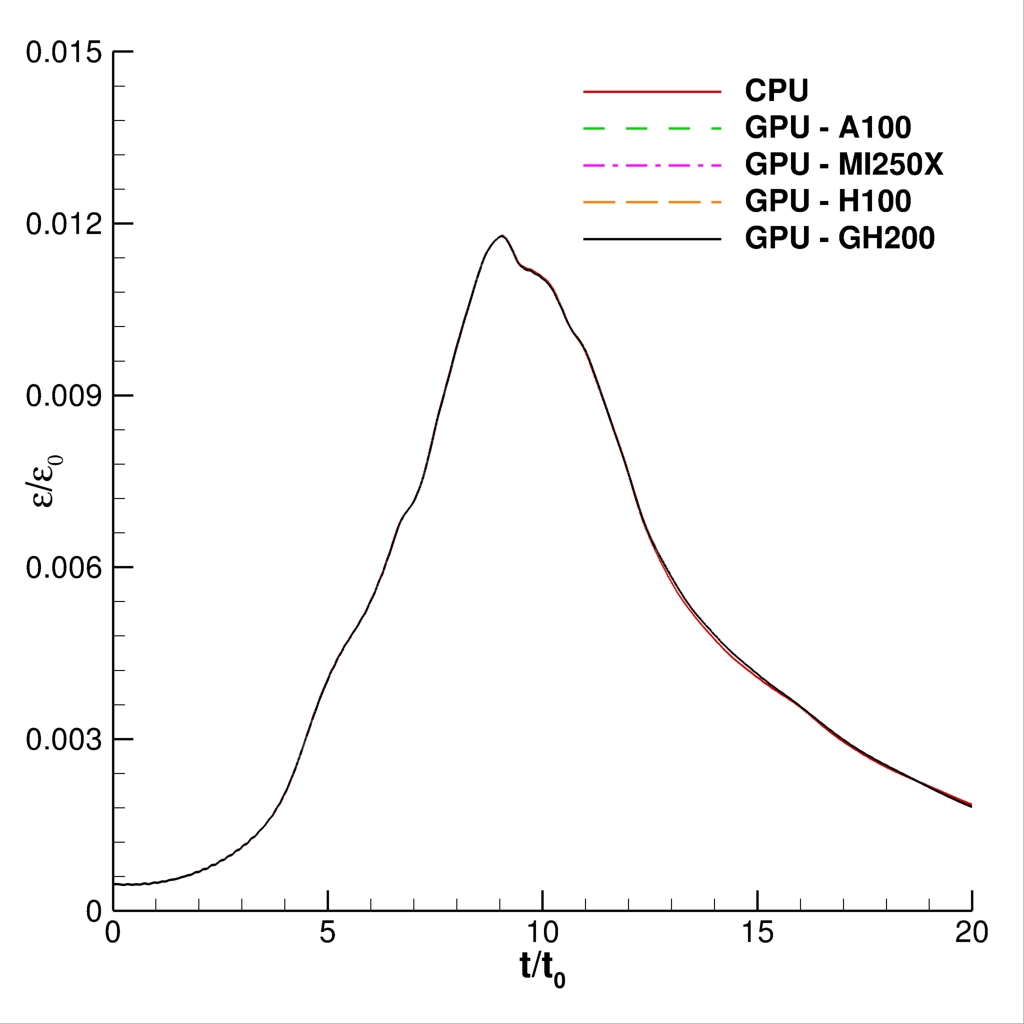}}
 \subfigure[NVIDIA A100 GPUs.]
 {\includegraphics[width=0.485\textwidth]{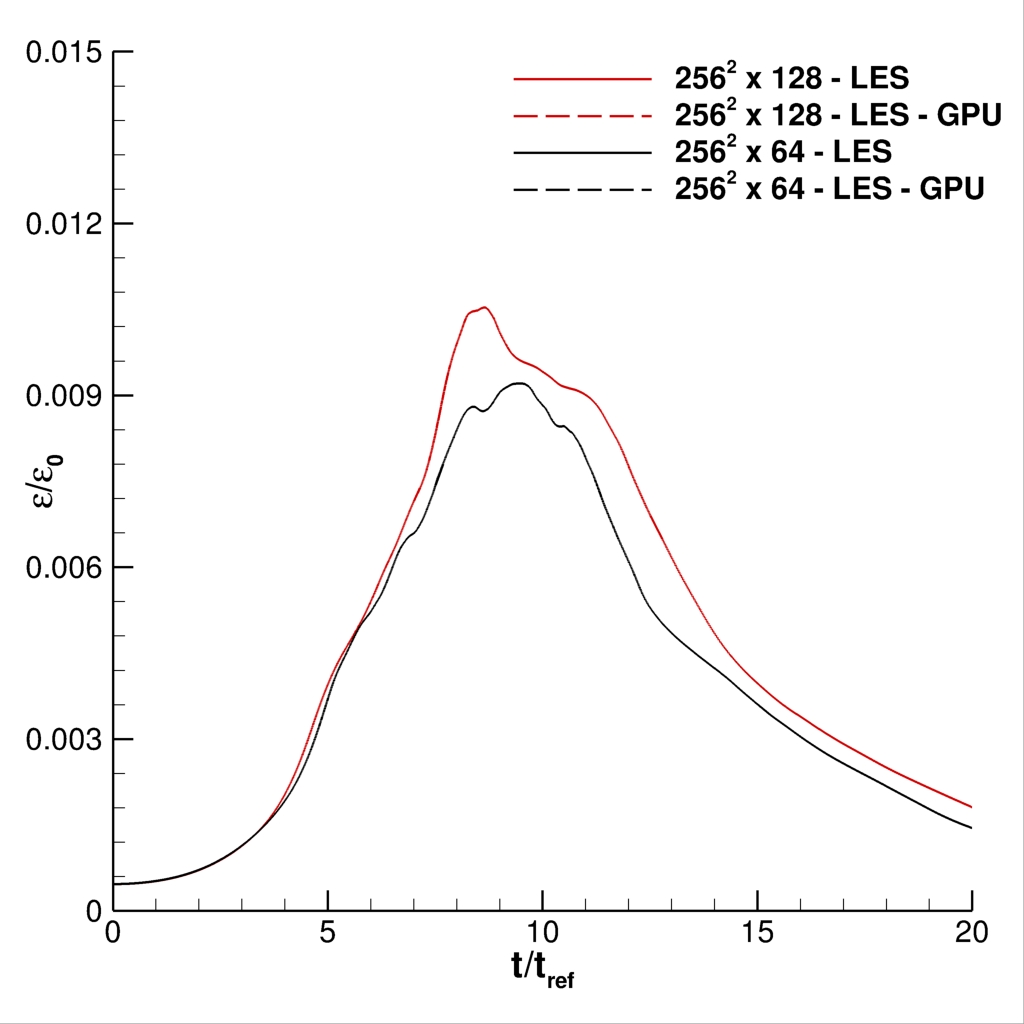}}
\caption{Taylor--Green Vortex at $\mathrm{Re} = 1600$. Time history of kinetic energy dissipation rate, standard LSQ filter--width. GPU solutions are obtained with face--based accumulations.}
\label{fig:tgv2}
\end{figure}
%
The sensitivity of the standard LSQ formulation to the computing architecture is first examined on the isotropic $256^3$ grid, as shown in Fig.~\ref{fig:tgv2}(a). It is clear that the kinetic energy dissipation time histories obtained on the CPU and on all the considered GPU architectures are virtually indistinguishable over the entire flow evolution. 
It is very important to note that for these isotropic meshes, only the two faces normal to the $x_i$-direction contribute to the corresponding directional metric sum in eq.~\ref{eq:directional-scale-accumulator}:
\[
\mathcal{A}_{i}
=
|S_{x_i^-}| + |S_{x_i^+}| \,~.
\]
%
Since the sum involves only two non--zero terms, its value is independent of the accumulation order. Consequently, the non--associative perturbation mechanism identified for the directional metric assembly is absent on the Cartesian TGV grids.
Note that this observation does not imply bitwise reproducibility of the complete CFD solution. Indeed, other architecture dependent floating--point differences may arise from the remaining numerical operations, including flux evaluation, iterative linear solvers, reductions, and time integration. 
Therefore, the small residual differences visible in Fig.~\ref{fig:tgv2}(a) are not unexpected. More importantly, once the specific non--associative metric assembly pathway is removed by the Cartesian mesh structure, no comparable modification of the resolved flow develops.
%
Results for the $256^2\times128$ and $256^2\times64$ grids are shown in Fig.~\ref{fig:tgv2}(b). Reducing the spanwise resolution modifies the kinetic energy dissipation time--history, particularly around the dissipation peak, while the CPU and NVIDIA A100 solutions remain closely matched for each grid. Notably,  as expected, spanwise coarsening changes the magnitude and anisotropy of the directional length scales but does not alter the two contribution structure of their Cartesian accumulation.

%
\begin{figure}[htbp]
 \centering
 \subfigure[CPU solutions.]
 {\includegraphics[width=0.485\textwidth]{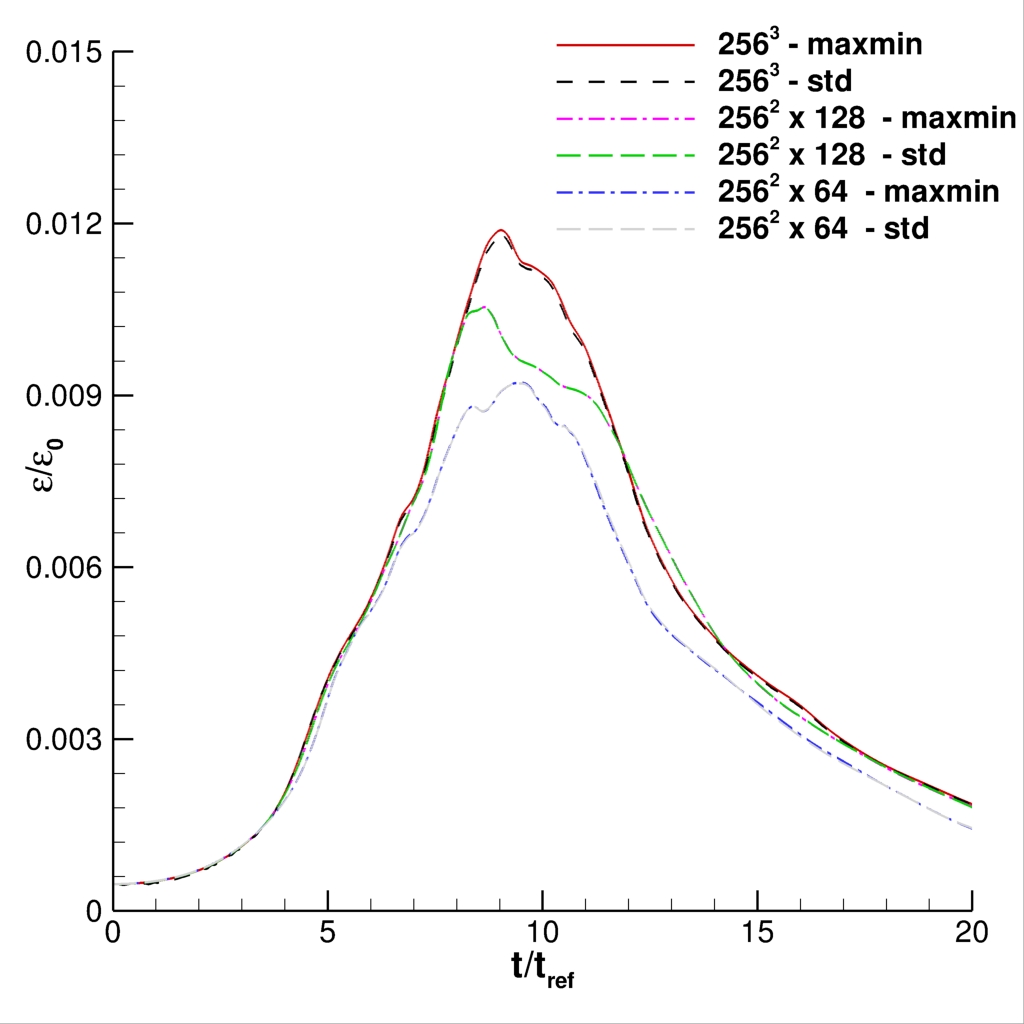}}
 \subfigure[CPU vs NVIDIA A100 solutions. LSQ max-min filter width.]
 {\includegraphics[width=0.485\textwidth]{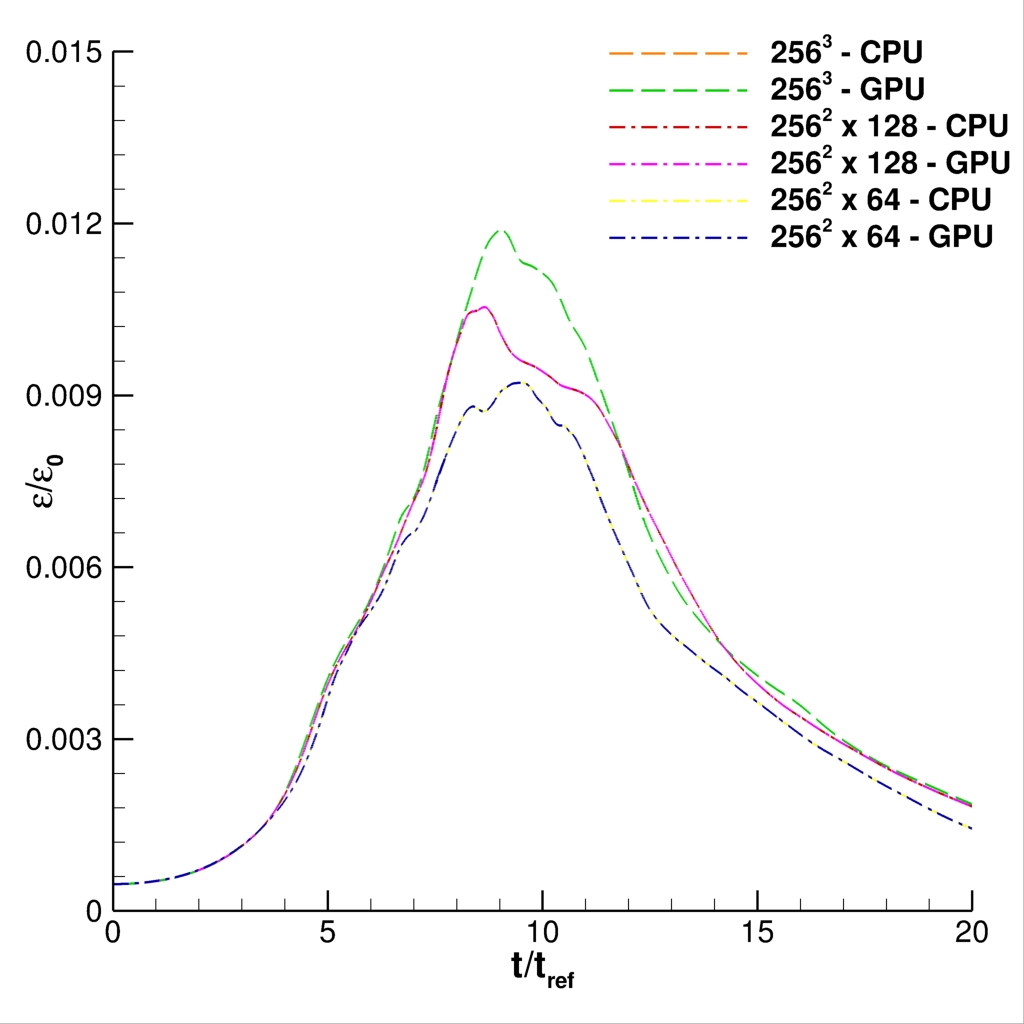}}
\caption{Taylor--Green Vortex at $\mathrm{Re} = 1600$. Comparison of  standard LSQ and LSQ--maxmin filter--width. Face--based accumulations.}
\label{fig:tgv-maxmin}
\end{figure}
The analysis is then extended to the proposed LSQ--maxmin formulation, with particular attention to CPU--GPU consistency. Specifically, Fig.~\ref{fig:tgv-maxmin}(a) compares the kinetic energy dissipation time--histories obtained on the CPU using the standard LSQ and LSQ--maxmin filter widths.
On the isotropic $256^3$ grid, the standard and LSQ--maxmin formulations produce almost identical solutions. A similarly close agreement is observed on the anisotropic $256^2\times128$ and $256^2\times64$ grids, where $\varepsilon$ values are very close at each resolution.
As expected, reducing the spanwise resolution modifies the kinetic energy dissipation time history, particularly around the peak dissipation stage. However, the response to increasing grid anisotropy is nearly identical for the two filter-width formulations. For this reason, the proposed metric compression can be considered to preserve, at the solution level, the response to grid anisotropy observed with the standard LSQ formulation.
Finally, Fig.~\ref{fig:tgv-maxmin}(b) further compares the LSQ--maxmin results
obtained on the CPU and on the NVIDIA A100 accelerator including also an isotropic grid.  The corresponding kinetic energy dissipation histories remain closely
matched over the complete flow evolution. The agreement is preserved
also for the most anisotropic $256^2\times64$ configuration, confirming
that the proposed formulation maintains CPU--GPU consistency on isotropic grids without
altering the characteristic response of the filter to spanwise grid
coarsening. Finally, it is important to note that for the LSQ--maxmin formulation, $C_\Delta=1.3$ is used throughout these TGV computations. This value is not calibrated on the TGV, but corresponds to the setting for which CPU and NVIDIA A100 simulations of the cylinder flow provide essentially identical predictions.\\
%
%
An additional set of TGV computations was performed on a smoothly distorted mesh constructed as follows.
Starting from the uniform Cartesian grid, the nodal coordinates were perturbed according to
$x'=x+\delta x$, $y'=y+\delta y$, and $z'=z+\delta z$~; where
\begin{equation}
\begin{aligned}
\delta x
&=
\varepsilon_g \, \mathcal{B}
\left[
\sin(2 Y)+\alpha\sin(3 Z)
\right]L~, \\
\delta y
&=
\varepsilon_g \, \mathcal{B}
\left[
\sin(2 Z)+\alpha\sin(3 X)
\right]L~, \\
\delta z
&=
\varepsilon_g \, \mathcal{B}
\left[
\sin(2 X)+\alpha\sin(3 Y)
\right]L~,
\end{aligned}
\label{eq:TGV-pert-nodes}
\end{equation}
$\mathcal{B} = \sin(X)\sin(Y)\sin(Z)~$, and $\alpha=0.35$~.
In the above equations the normalized coordinates are defined as follows:
\[
X=\frac{x+\pi L}{2 L}~,
\qquad
Y=\frac{y+\pi L}{2 L}~,
\qquad
Z=\frac{z+\pi L}{2 L}~.
\]
%
The parameter $\varepsilon_g$ controls the distortion amplitude. Note that the condition $\mathcal{B}=0$ at the domain boundaries ensures that the external geometry and the original number of cells are preserved as the interior mesh is smoothly deformed away from the initial isotropic setup.
The results reported below were obtained using $\varepsilon_g=5\cdot10^{-3}$ on the $256^3$ mesh. 
The value for $\varepsilon_g$ was selected since it is already sufficient to activate the full directional metric-assembly pathway: approximately $98\%$ of the cells exhibit six non-zero face contributions in each Cartesian direction.\\
It is clear from Fig.~\ref{fig:tgv-dis} that when the distorted grid is used, the global evolution of the Taylor--Green vortex remains essentially unchanged with respect to the orthogonal--grid solution,~\emph{i.e.} $\varepsilon_g=0$.
This result indicates that the presence of a pervasive non--associative accumulation pathway is not sufficient to produce a macroscopically observable modification of the resolved dynamics. 
Indeed, in the cylinder flow, the strongly concentrated kinematic weighting within the separated shear layers allows relative metric perturbations to be transmitted with near--unit gain. 
By contrast, in the TGV, the perturbation is distributed among the directional contributions, reducing the efficiency of the transmission pathway.
\begin{figure}[htbp]
 \centering
 \subfigure[Time--history of kinetic energy.]
 {\includegraphics[width=0.485\textwidth]{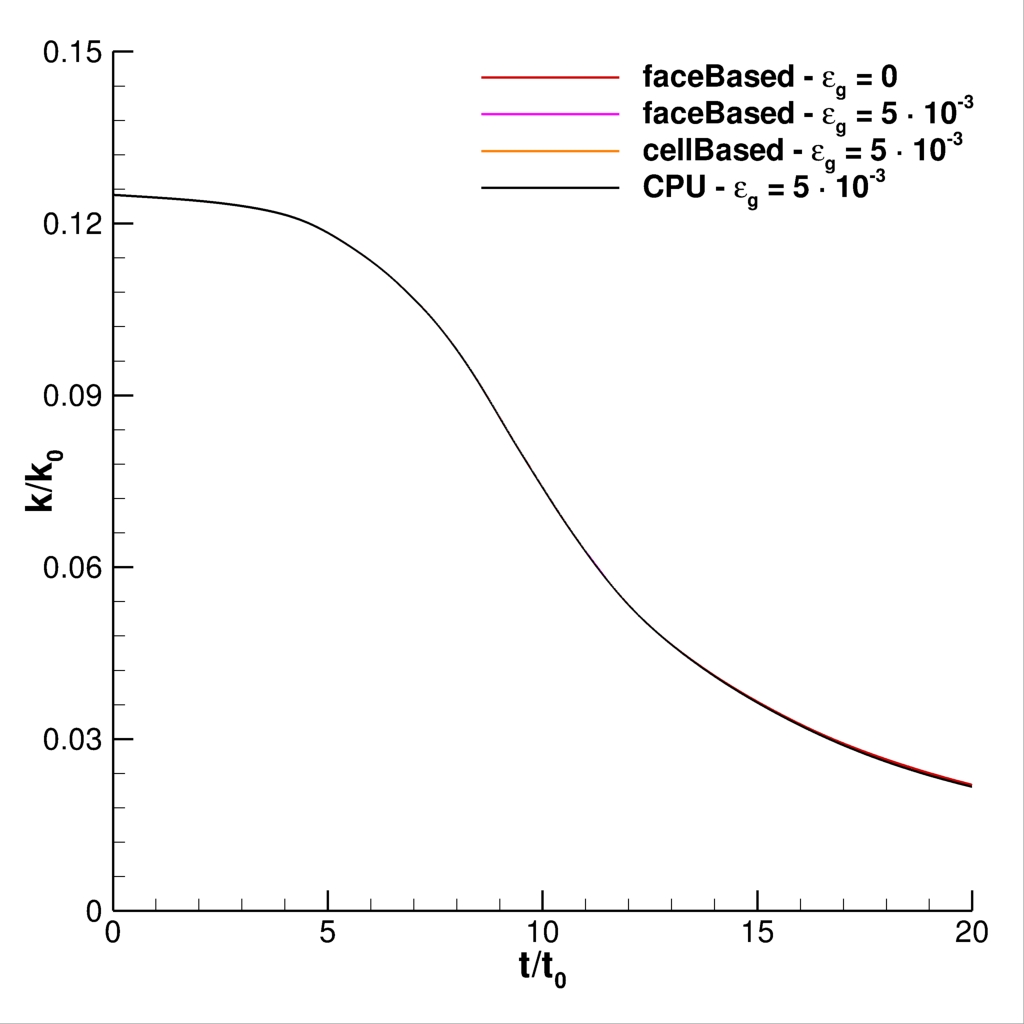}}
 \subfigure[Time--history of kinetic energy dissipation rate]
 {\includegraphics[width=0.485\textwidth]{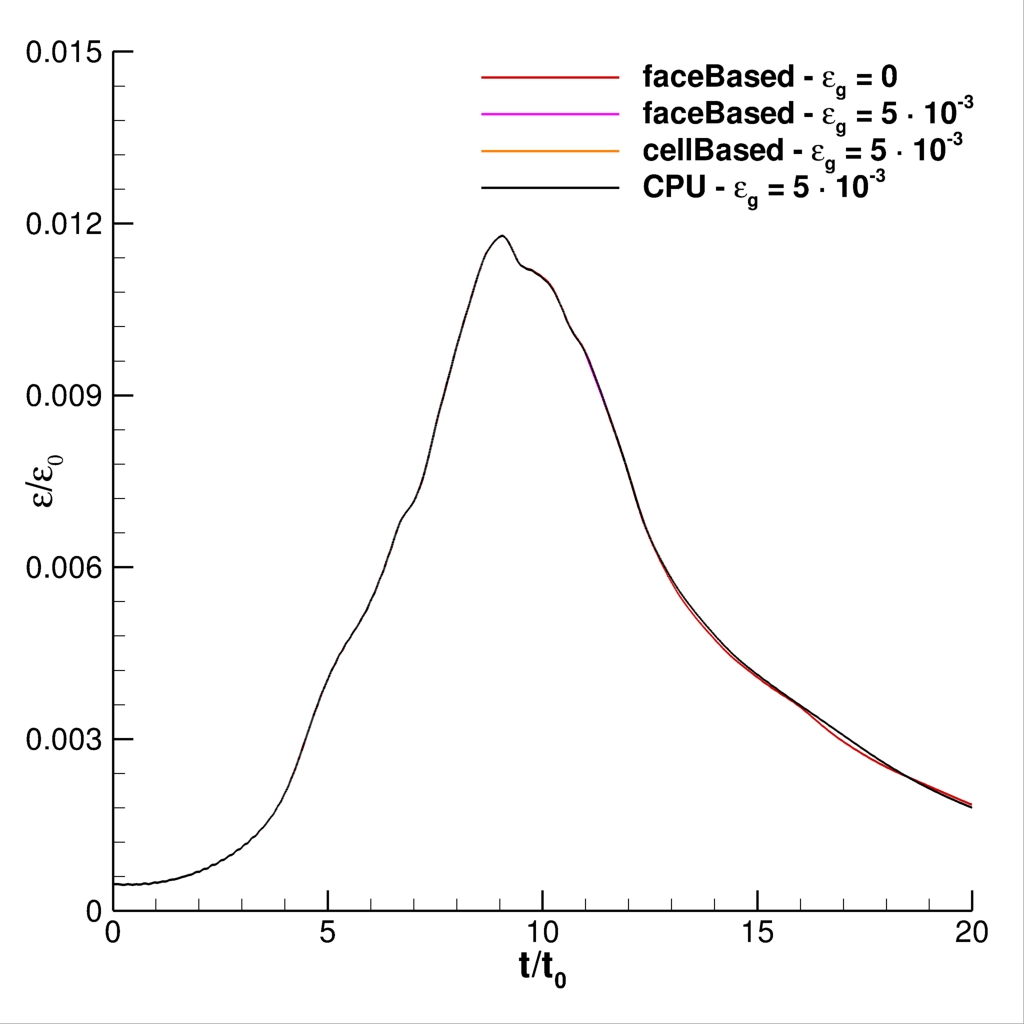}}
\caption{Taylor--Green Vortex at $\mathrm{Re} = 1600$. Results obtained on a distorted grid using NVIDIA A100 accelerators.  LSQ filter, face--based accumulations. }
\label{fig:tgv-dis}
\end{figure}
\subsection{Impact on other flow configurations} 
The numerical results suggest that three conditions must occur together for floating-point perturbations to become dynamically relevant: they must be introduced during the assembly of the directional metrics, efficiently transmitted through the filter-width operator, and act on a flow sufficiently receptive to the resulting SGS variations.\\
The cylinder and TGV cases help separate these effects. In the cylinder flow, the strongly concentrated LSQ weighting within the separated shear layers provides an efficient transmission pathway, while the transitional near-wake dynamics is sufficiently sensitive for the resulting SGS variations to affect the resolved flow. In the distorted grid TGV, metric-assembly perturbations are also present, but their effect is distributed among the directional contributions and no comparable macroscopic response develops. The orthogonal grid TGV provides the complementary limiting case, where the relevant non-associative accumulation pathway is absent.
These observations suggest that separated shear layers, free shear layers, and mixing layers are natural candidates for this mechanism, since the LSQ weighting may become strongly concentrated along selected directions. If such regions are also transitional, small variations in SGS dissipation may affect shear-layer breakdown, transition onset, or wake development. However, flow receptivity alone is not sufficient. On perfectly orthogonal Cartesian grids, the specific metric-assembly pathway considered here is suppressed. 
%
Therefore, flow fields combining transitional or strongly directional shear regions with geometrically non--orthogonal meshes appear to be the most plausible candidates for exhibiting the mechanism identified in this work.

\section{Conclusions}
\label{sec:conclusions}
The present work investigated the sensitivity of flow-dependent LES
filter--widths to floating--point perturbations introduced during the
parallel evaluation of directional mesh metrics. 
The perturbation transmissivity is established as a relevant design property 
of flow-dependent LES operators for heterogeneous computing.
For the standard LSQ
formulation, the analysis showed that, for fixed resolved kinematics,
the filter width is logarithmically non-expansive but not strictly
contractive. Consequently, directional metric perturbations may be
transmitted with little attenuation when the LSQ weighting becomes
strongly concentrated along one direction.\\
The circular cylinder results at $\mathrm{Re}=3900$ provide strong evidence for
this mechanism. 
The standard LSQ formulation transmits architecture dependent perturbations 
from the face--based metric assembly to the near--wake transition, whereas a 
deterministic cell--based metric evaluation, used here solely as a diagnostic control, 
suppresses this perturbation source and restores the physically consistent wake state.
%
Furthermore, a small diagnostic perturbation of the
kinematic weighting leads to the same qualitative recovery while leaving
the original face-based metric assembly unchanged. These results indicate
that the observed sensitivity arises from the interaction among metric
perturbations, the local LSQ kinematic weighting, and the strongly
transition sensitive separated shear-layer.
In order to control this mechanism, a max--min compression of the directional
mesh scales was introduced together with a bounded kinematic modulation,
leading to the factorized formulation
$\Delta_{\mathrm{LSQ,c}}=\Delta\,\mathrm{R}(\mathbf{G})$.
Here, the compressed metric scale $\Delta$ depends on the parameter
$C_\Delta$, which controls the activation of the minimum scale branch.
Note that the proposed LSQ--maxmin filter recovers the physically consistent
near-wake state on all the CPU and GPU architectures considered while
retaining the original face--based metric assembly. Although the
best performing value of $C_\Delta$ is not universal, all architectures
recover the physical wake state within the narrow interval
$1.1\leq C_\Delta\leq1.3$.
Moreover, the architecture specific values identified from the near--wake streamwise velocity profiles yield tightly clustered integral quantities, $\langle C_D\rangle\simeq1.00$, $C_{L,\mathrm{rms}}\simeq0.157$, and $\mathrm{St}\simeq0.214$, confirming a consistent macroscopic flow state.\\
The Taylor--Green vortex provides a complementary control case. For the
Cartesian meshes considered here, each directional metric is assembled
from only the two faces normal to the corresponding coordinate direction.
Therefore,  the order dependent accumulation mechanism identified on the cylinder
mesh is structurally absent. Hence, only small residual
differences between computing architectures are observed, while the
proposed LSQ--maxmin formulation preserves, at the solution level, the
response of the standard filter on both isotropic and spanwise--coarsened
grids. The distorted grid TGV results further show that multi--face directional metric accumulation alone is insufficient to produce an architecture dependent response, emphasizing the role of flow receptivity.\\
Overall, the obtained results indicate that relevant
architecture dependence requires the simultaneous occurrence of three
elements: a numerical source of floating--point perturbations, an
efficient transmission pathway through the filter--width operator, and
a receptive resolved flow field.

\section{Acknowledgements}
We acknowledge the EuroHPC Joint Undertaking for awarding this project access to the
EuroHPC supercomputers LEONARDO hosted by CINECA (Italy), JUPITER hosted by
JSC (Germany), MareNostrum5 hosted by BSC (Spain) and
LUMI hosted by CSC (Finland) through EuroHPC Development Access calls.\\
Part of the computing resources and the related technical support used for this work have been provided by CRESCO/ENEAGRID High Performance Computing infrastructure and its staff,~\cite{iannone2019cresco}. CRESCO/ENEAGRID High Performance Computing infrastructure is funded by ENEA, the Italian
National Agency for New Technologies, Energy and Sustainable Economic Development and by Italian
and European research programmes, see http://www.cresco.enea.it/english for information.
\appendix
\renewcommand{\thetheorem}{\Alph{section}.\arabic{theorem}}
\section{Perturbation stability of the max--min metric mapping}
\label{app:metric-compression}
This appendix provides the mathematical results underlying the
metric-perturbation analysis presented in Sec.~\ref{sec:sub-maxmim}.
In particular, it establishes the structural properties and logarithmic
stability of the max--min mapping used to compress the directional scales
into the scalar metric length $\Delta$.
%
\label{app:admissible-compression}

\begin{definition}[Logarithmically stable scalar mapping]
\label{def:log-stable-mapping}
A function $\mathcal{C}:(0,+\infty)\rightarrow(0,+\infty)$
is called a logarithmically stable scalar mapping if it is non decreasing and,
for every $s_0>0$, there exists $L_{\mathcal{C}}(s_0)<+\infty$ such that
\begin{equation}
\left|
\mathcal{C}(s_2)-\mathcal{C}(s_1)
\right|
\leq
L_{\mathcal{C}}(s_0)
\left|
s_2-s_1
\right|~,
\qquad
s_1,s_2\geq s_0~.
\label{eq:compression-lipschitz}
\end{equation}
Moreover, its logarithmic sensitivity is bounded, i.e. there exists
$\kappa_{\mathcal{C}}<+\infty$ such that
\begin{equation}
\left|
\log
\frac{\mathcal{C}(s_2)}
     {\mathcal{C}(s_1)}
\right|
\leq
\kappa_{\mathcal{C}}
\left|
\log
\frac{s_2}{s_1}
\right|,
\qquad
s_1,s_2>0.
\label{eq:compression-log}
\end{equation}
The mapping is called logarithmically non--expansive if
$\kappa_{\mathcal{C}}\leq 1$, and logarithmically contractive if
$\kappa_{\mathcal{C}}<1$.
\end{definition}

\label{app:lower-bound-composition}

\begin{proposition}[Perturbation stability under lower-bound composition]
\label{prop:lower-bound-composition}
Let $\mathcal{C}:(0,+\infty)\rightarrow(0,+\infty)$ be a
logarithmically stable scalar compression and define
\begin{equation}
\Delta
=
\mathcal{C}\!\left(\max\{\Psi,B\}\right)~,
\qquad
\Psi,B>0~.
\label{eq:lower-bound-composition}
\end{equation}
For perturbed quantities $\widehat{\Psi},\widehat{B}>0$, let
\begin{equation}
\widehat{\Delta}
=
\mathcal{C}\!\left(\max\{\widehat{\Psi},\widehat{B}\}\right)~.
\end{equation}
Then the composed mapping is non-decreasing in both arguments and preserves the absolute and logarithmic perturbation bounds of 
$\mathcal{C}$. In particular, if
\begin{equation}
\max\{\Psi,B\}\geq s_0~,
\qquad
\max\{\widehat{\Psi},\widehat{B}\}\geq s_0~,
\qquad
s_0>0~,
\end{equation}
then
\begin{equation}
\left|\widehat{\Delta}-\Delta\right|
\leq
L_{\mathcal{C}}(s_0)
\max\left\{
\left|\widehat{\Psi}-\Psi\right|~,
\left|\widehat{B}-B\right|
\right\},
\label{eq:composed-absolute-bound}
\end{equation}
and
\begin{equation}
\left|
\log
\frac{\widehat{\Delta}}{\Delta}
\right|
\leq
\kappa_{\mathcal{C}}
\max\left\{
\left|
\log
\frac{\widehat{\Psi}}{\Psi}
\right|,
\left|
\log
\frac{\widehat{B}}{B}
\right|
\right\}~.
\label{eq:composed-log-bound}
\end{equation}
\end{proposition}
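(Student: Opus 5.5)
The plan is to reduce everything to two elementary facts about the scalar function $M(\Psi,B)=\max\{\Psi,B\}$ and then compose with the properties of $\mathcal{C}$ from Definition~\ref{def:log-stable-mapping}.

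\textbf{Step 1: monotonicity.} $M$ is non-decreasing in each argument. Composing it with the non-decreasing $\mathcal{C}$ gives monotonicity of $\Delta$ in $\Psi$ and in $B$.

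\textbf{Step 2: absolute non-expansiveness of $M$.} We show
\begin{equation*}
\left|\max\{\widehat{\Psi},\widehat{B}\}-\max\{\Psi,B\}\right|
\leq
\max\left\{\left|\widehat{\Psi}-\Psi\right|,\left|\widehat{B}-B\right|\right\}.
\end{equation*}
By symmetry, assume $\max\{\widehat{\Psi},\widehat{B}\}\geq\max\{\Psi,B\}$, and suppose the left maximum is attained at $\widehat{\Psi}$ (the case $\widehat{B}$ is identical). Then
\begin{equation*}
0\leq \widehat{\Psi}-\max\{\Psi,B\}\leq\widehat{\Psi}-\Psi\leq|\widehat{\Psi}-\Psi|.
\end{equation*}

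\textbf{Step 3: logarithmic non-expansiveness of $M$.} Since $\log$ is increasing, $\log\max\{\Psi,B\}=\max\{\log\Psi,\log B\}$. Applying Step 2 to the variables $\log\Psi$ and $\log B$ gives
\begin{equation*}
\left|\log\frac{\max\{\widehat{\Psi},\widehat{B}\}}{\max\{\Psi,B\}}\right|
\leq
\max\left\{\left|\log\frac{\widehat{\Psi}}{\Psi}\right|,\left|\log\frac{\widehat{B}}{B}\right|\right\}.
\end{equation*}
This holds for all positive arguments; no $s_0$ is needed here.

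\textbf{Step 4: composition.} Set $s_1=\max\{\Psi,B\}$ and $s_2=\max\{\widehat{\Psi},\widehat{B}\}$.
\begin{itemize}
\item For the absolute bound, the hypothesis gives $s_1,s_2\geq s_0$. Eq.~\eqref{eq:compression-lipschitz} then yields $|\widehat{\Delta}-\Delta|\leq L_{\mathcal{C}}(s_0)|s_2-s_1|$, and Step 2 turns this into eq.~\eqref{eq:composed-absolute-bound}.
\item For the logarithmic bound, eq.~\eqref{eq:compression-log} gives $|\log(\widehat{\Delta}/\Delta)|\leq\kappa_{\mathcal{C}}|\log(s_2/s_1)|$, and Step 3 gives eq.~\eqref{eq:composed-log-bound}.
\end{itemize}

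The phrase ``preserves the bounds of $\mathcal{C}$'' is then exactly these two inequalities, with the same constants $L_{\mathcal{C}}(s_0)$ and $\kappa_{\mathcal{C}}$.

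\textbf{Main obstacle.} The only nontrivial step is the case analysis in Step 2. In particular, we must handle the case where the active branch switches between the unperturbed and perturbed configurations. The argument above covers it: the larger of the two maxima is bounded using the same branch it is attained on, and that branch's value in the other configuration never exceeds the other configuration's maximum. This is why branch switching costs nothing.
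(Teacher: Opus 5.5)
Your proof is correct and follows the same route as the paper. Both arguments use non-expansiveness of the scalar maximum, in absolute terms and in logarithmic variables via $\log\max\{\Psi,B\}=\max\{\log\Psi,\log B\}$, composed with the Lipschitz and logarithmic-sensitivity bounds of $\mathcal{C}$; the only difference is that you spell out the short case argument for the max inequality (including branch switching), which the paper simply asserts.
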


\begin{proof}
The monotonicity of the composed mapping follows directly from the
monotonicity of both \(\mathcal{C}\) and the scalar maximum. Moreover, the
maximum operator is non-expansive, since
\[
\left|
\max\{\widehat{\Psi},\widehat{B}\}
-
\max\{\Psi,B\}
\right|
\leq
\max
\left\{
|\widehat{\Psi}-\Psi|,
|\widehat{B}-B|
\right\}~.
\]
Combining this inequality with the Lipschitz continuity of
\(\mathcal{C}\) gives the absolute perturbation bound. For the logarithmic
estimate, we use the identity
\[
\log\max\{\Psi,B\}
=
\max\{\log\Psi,\log B\},
\]
valid for positive \(\Psi\) and \(B\). The non--expansiveness of the maximum
in logarithmic variables, together with the logarithmic sensitivity bound of
\(\mathcal{C}\), then yields the stated result.
\end{proof}
%
\label{app:power-law-compression}

\begin{proposition}[Power-law mappings]
\label{prop:power-law}
Let $\mathcal{C}_{\alpha}(s)=s^{\alpha}~~,0<\alpha\leq 1~$. 
Then $\mathcal{C}_{\alpha}$ is an admissible metric-compression mapping with
\begin{equation}
L_{\mathcal{C}_{\alpha}}(s_0)
=
\alpha s_0^{\alpha-1}~,
\qquad
\kappa_{\mathcal{C}_{\alpha}}
=
\alpha~.
\label{eq:power-law-bounds}
\end{equation}
Hence, $\mathcal{C}_{\alpha}$ is logarithmically contractive for
$0<\alpha<1$.
\end{proposition}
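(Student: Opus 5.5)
We check the three requirements of Definition~\ref{def:log-stable-mapping} for $\mathcal{C}_\alpha(s)=s^\alpha$ one at a time: positivity and monotonicity, the local Lipschitz bound~\eqref{eq:compression-lipschitz}, and the logarithmic sensitivity bound~\eqref{eq:compression-log}. Each reduces to an elementary property of the power function on $(0,+\infty)$. The logarithmic bound is the one that delivers the contraction claim.

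\textbf{Monotonicity and positivity.} For $s>0$ and $0<\alpha\leq1$ we have $s^\alpha>0$. Moreover $\mathcal{C}_\alpha'(s)=\alpha s^{\alpha-1}>0$, so $\mathcal{C}_\alpha$ maps $(0,+\infty)$ into itself and is strictly increasing, hence non-decreasing.

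\textbf{Lipschitz bound.} Fix $s_0>0$ and take $s_1,s_2\geq s_0$. By the mean value theorem there is $\theta$ between $s_1$ and $s_2$, so $\theta\geq s_0$, with
\[
\left|s_2^\alpha-s_1^\alpha\right|=\alpha\,\theta^{\alpha-1}\left|s_2-s_1\right|.
\]
Since $\alpha-1\leq0$, the map $\theta\mapsto\theta^{\alpha-1}$ is non-increasing. Therefore $\theta^{\alpha-1}\leq s_0^{\alpha-1}$, which gives $L_{\mathcal{C}_\alpha}(s_0)=\alpha s_0^{\alpha-1}$. This constant is attained in the limit $s_1,s_2\to s_0$, so it is sharp. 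The only delicate point in the whole proof is here: the bound must be taken at the left endpoint $s_0$ precisely because the derivative is decreasing. This is also why the constant has to depend on $s_0$ and blows up as $s_0\to0$ when $\alpha<1$.

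\textbf{Logarithmic bound.} For all $s_1,s_2>0$ we have the exact identity
\[
\log\frac{\mathcal{C}_\alpha(s_2)}{\mathcal{C}_\alpha(s_1)}=\alpha\log\frac{s_2}{s_1}.
\]
Taking absolute values yields~\eqref{eq:compression-log} with $\kappa_{\mathcal{C}_\alpha}=\alpha$, and this value is optimal. Hence $\kappa_{\mathcal{C}_\alpha}<1$ exactly when $0<\alpha<1$, so the mapping is logarithmically contractive in the sense of Definition~\ref{def:log-stable-mapping}. In the case $\alpha=1$ it is logarithmically non-expansive only.

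Finally, this result can be combined with Proposition~\ref{prop:lower-bound-composition} to transfer the same constants to compositions of the form $\mathcal{C}_\alpha(\max\{\Psi,B\})$.
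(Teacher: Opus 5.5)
Your proof is correct and follows essentially the same route as the paper. You check positivity and monotonicity, obtain $L_{\mathcal{C}_\alpha}(s_0)=\alpha s_0^{\alpha-1}$ from the non-increasing derivative $\mathcal{C}_\alpha'(s)=\alpha s^{\alpha-1}$ evaluated at $s_0$ (you state the mean value theorem step explicitly, where the paper simply takes $\sup_{s\geq s_0}|\mathcal{C}_\alpha'(s)|$), and use the exact identity $\log(\mathcal{C}_\alpha(s_2)/\mathcal{C}_\alpha(s_1))=\alpha\log(s_2/s_1)$ to get $\kappa_{\mathcal{C}_\alpha}=\alpha$. Your sharpness remarks are correct but not needed.
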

\begin{proof}
For \(0<\alpha\leq1\), the mapping \(\mathcal{C}_{\alpha}\) is positive and
non-decreasing. Since $\mathcal{C}_{\alpha}'(s) = \alpha s^{\alpha-1}$ is non increasing, its maximum over \([s_0,+\infty)\) is attained at
\(s=s_0\). Hence,
\[
L_{\mathcal{C}_{\alpha}}(s_0)
=
\sup_{s\geq s_0}
\left|
\mathcal{C}_{\alpha}'(s)
\right|
=
\alpha s_0^{\alpha-1}~.
\]
Moreover, $ \log \mathcal{C}_{\alpha}(s) = \alpha\log s~, $ which gives
\[
\left|
\log
\frac{\mathcal{C}_{\alpha}(s_2)}
     {\mathcal{C}_{\alpha}(s_1)}
\right|
=
\alpha
\left|
\log
\frac{s_2}{s_1}
\right|~.
\]
Therefore,
\(\kappa_{\mathcal{C}_{\alpha}}=\alpha\), and the mapping is logarithmically
contractive for \(0<\alpha<1\)~.
\end{proof}
%
\subsection{Three-dimensional geometric-mean construction}
\label{app:geometric-construction}
To specialize the previous results to the max--min metric mapping adopted in the
present work, we introduce the aggregate quantities
\begin{equation}
\Psi
=
\Delta_x\Delta_y\Delta_z~,
\qquad
B
=
\left(C_{\Delta}\Delta_{\min}\right)^3~,
\qquad
\Delta_{\min}
=
\min\{\Delta_x,\Delta_y,\Delta_z\}~.
\label{eq:psi-b-definition}
\end{equation}
The max--min construction can then be obtained by choosing the power--law mapping $\mathcal{C}(s)=s^{1/3}$
and applying it to $Q=\max\{\Psi,B\}$~.
This yields
\begin{equation}
\begin{aligned}
\Delta
&=
\left[
\max\left\{
\Delta_x\Delta_y\Delta_z,
\left(C_{\Delta}\Delta_{\min}\right)^3
\right\}
\right]^{1/3}
\\
&=
\max\left\{
\left(\Delta_x\Delta_y\Delta_z\right)^{1/3},
C_{\Delta}\Delta_{\min}
\right\}.
\end{aligned}
\label{eq:three-dimensional-map}
\end{equation}
The first branch provides a symmetric scalar aggregation of the directional
length scales through their geometric mean, whereas the second branch introduces
an explicit dependence on the smallest directional scale by imposing the lower bound $\Delta \geq C_{\Delta}\Delta_{\min}$~.
Accordingly, $C_{\Delta}$ controls the activation of the minimum-scale branch
and therefore the location of the branch-switching surface.
Proposition~\ref{prop:power-law} gives
\begin{equation}
L_{\mathcal{C}}(s_0)
=
\frac{1}{3}s_0^{-2/3}~,
\qquad
\kappa_{\mathcal{C}}
=
\frac{1}{3}~.
\label{eq:cube-root-bounds}
\end{equation}
Hence, for $Q,\widehat{Q}\geq s_0>0$~,
\begin{equation}
\left|\widehat{\Delta}-\Delta\right|
\leq
\frac{1}{3}s_0^{-2/3}
\max\left\{
\left|\widehat{\Psi}-\Psi\right|~,
\left|\widehat{B}-B\right|
\right\},
\label{eq:cube-root-absolute}
\end{equation}
while the corresponding logarithmic perturbation satisfies
\begin{equation}
\left|
\log
\frac{\widehat{\Delta}}{\Delta}
\right|
\leq
\frac{1}{3}
\max\left\{
\left|
\log
\frac{\widehat{\Psi}}{\Psi}
\right|,
\left|
\log
\frac{\widehat{B}}{B}
\right|
\right\}~.
\label{eq:cube-root-log}
\end{equation}
These estimates describe the response of the cube-root mapping to perturbations
of the aggregate, volume-like quantities $\Psi$ and $B$. They do not yet provide
a stability bound directly in terms of perturbations of the directional scales,
since both aggregate quantities depend on $\Delta_x$, $\Delta_y$, and $\Delta_z$.
The perturbation properties of the complete three-variable mapping are therefore
examined next.
%
\subsection{Stability with respect to directional-scale perturbations}
\label{app:directional-scale-stability}

Define
\begin{equation}
F(\Delta_x,\Delta_y,\Delta_z)
=
\max\left\{
\left(\Delta_x\Delta_y\Delta_z\right)^{1/3},
C_{\Delta}\min\{\Delta_x,\Delta_y,\Delta_z\}
\right\}.
\label{eq:complete-maxmin-map}
\end{equation}
The mapping $F$ is positive, symmetric with respect to permutations of the
coordinate directions, non-decreasing in each argument, and positively
homogeneous:
\begin{equation}
F\!\left(
\lambda\Delta_x,
\lambda\Delta_y,
\lambda\Delta_z
\right)
=
\lambda F(\Delta_x,\Delta_y,\Delta_z),
\qquad
\lambda>0.
\label{eq:positive-homogeneity}
\end{equation}

\begin{proposition}[Logarithmic non-expansiveness]
\label{prop:directional-nonexpansiveness}
For any positive directional scales $\Delta_i$ and their perturbed
counterparts $\widehat{\Delta}_i$, with $i\in\{x,y,z\}$,
\begin{equation}
\left|
\log
\frac{
F(\widehat{\Delta}_x,\widehat{\Delta}_y,\widehat{\Delta}_z)
}{
F(\Delta_x,\Delta_y,\Delta_z)
}
\right|
\leq
\max_{i\in\{x,y,z\}}
\left|
\log
\frac{\widehat{\Delta}_i}{\Delta_i}
\right|.
\label{eq:directional-nonexpansive-bound}
\end{equation}
Therefore, the complete max--min construction is non-expansive in the
componentwise logarithmic $\ell_{\infty}$ metric.
\end{proposition}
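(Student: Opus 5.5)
Let $M=\max_{i}|\log(\widehat{\Delta}_i/\Delta_i)|$. The plan is to use the monotonicity and positive homogeneity of $F$ stated just above, which avoids any branchwise case analysis. By definition of $M$, for each $i\in\{x,y,z\}$
\begin{equation*}
e^{-M}\Delta_i\leq\widehat{\Delta}_i\leq e^{M}\Delta_i~.
\end{equation*}
Since $F$ is non-decreasing in each argument, we can apply it to these componentwise inequalities. Positive homogeneity, eq.~\ref{eq:positive-homogeneity} with $\lambda=e^{\pm M}$, then gives
\begin{equation*}
e^{-M}F(\Delta_x,\Delta_y,\Delta_z)\leq F(\widehat{\Delta}_x,\widehat{\Delta}_y,\widehat{\Delta}_z)\leq e^{M}F(\Delta_x,\Delta_y,\Delta_z)~.
\end{equation*}
Because $F>0$, we may take logarithms, and this is exactly eq.~\ref{eq:directional-nonexpansive-bound}.

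The only steps needing justification are the two structural properties, which the paper asserts but which I would check briefly. For monotonicity: $(\Delta_x\Delta_y\Delta_z)^{1/3}$ is non-decreasing in each positive argument, $C_\Delta\min\{\cdot\}$ is non-decreasing because $C_\Delta>0$, and the maximum of two non-decreasing functions is non-decreasing. For homogeneity: both branches are positively homogeneous of degree one, since $(\lambda^3\Delta_x\Delta_y\Delta_z)^{1/3}=\lambda(\Delta_x\Delta_y\Delta_z)^{1/3}$ and $\min\{\lambda\Delta_i\}=\lambda\min\{\Delta_i\}$, and $\max$ commutes with multiplication by $\lambda>0$. Positivity of $F$ is immediate from positivity of the $\Delta_i$.

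I expect no real obstacle. The one tempting wrong route is to go through Proposition~\ref{prop:lower-bound-composition} with $\Psi$ and $B$ from eq.~\ref{eq:psi-b-definition}. The factor $1/3$ there does not translate directly into a bound in the directional scales, because $|\log(\widehat{\Psi}/\Psi)|$ can reach $3M$, and $|\log(\widehat{B}/B)|=3|\log(\widehat{\Delta}_{\min}/\Delta_{\min})|$. I would control the latter by noting that $\min$ is itself monotone and homogeneous, so the same sandwich argument gives $|\log(\widehat{\Delta}_{\min}/\Delta_{\min})|\leq M$ even if the minimizing direction switches. This alternative route recovers the bound, since $\tfrac13\cdot 3M=M$. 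The sandwich argument above is shorter and handles branch switching automatically, so I would present it as the proof and mention this alternative only as a consistency check.
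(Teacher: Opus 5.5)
Your proof is correct, but it takes a different route from the paper's. The paper passes to logarithmic variables $u_i=\log\Delta_i$ and writes $\log F=\max\{A(\mathbf{u}),\,\log C_\Delta+\min_i u_i\}$, where $A$ is the arithmetic mean of the $u_i$. It notes that each branch is $1$-Lipschitz in the $\ell_\infty$ norm and concludes from the non-expansiveness of the scalar maximum.

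You instead use only monotonicity and degree-one positive homogeneity. The paper states both properties just before the proposition but does not use them in its proof. In logarithmic coordinates your sandwich is the standard argument that monotone, additively homogeneous maps are $\ell_\infty$-non-expansive.

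\textbf{What your route buys.} It is more general and shorter. It applies unchanged to any positive, monotone, $1$-homogeneous compression, needs no branch decomposition, and absorbs switches of the active branch or of the minimizing direction automatically. The paper's argument also covers these switches, but only through the $\ell_\infty$ non-expansiveness of $\min$, which it asserts without proof and which your sandwich would justify.

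\textbf{What the paper's route buys.} It keeps the branch structure explicit. Writing $\log F$ as the maximum of a mean and a shifted minimum is exactly what supports the branchwise discussion in Sec.~\ref{sec:sub-maxmim}. There, single-direction perturbations are reduced by $1/3$ on the geometric branch and transmitted one-to-one on the lower-bound branch.

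Both arguments need $C_\Delta>0$, which you correctly flag. Your side remark is also right: Proposition~\ref{prop:lower-bound-composition}, combined with $|\log(\widehat{\Psi}/\Psi)|\le 3M$, $|\log(\widehat{B}/B)|\le 3M$ (your $M$) and $\kappa_{\mathcal{C}}=1/3$, recovers the same bound.
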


\begin{proof}
Introduce the logarithmic variables
\begin{equation}
 u_i=\log\Delta_i,
 \qquad
 \widehat{u}_i=\log\widehat{\Delta}_i,
\end{equation}
and define
\begin{equation}
A(\mathbf{u})
=
\frac{u_x+u_y+u_z}{3},
\qquad
M(\mathbf{u})
=
\log C_{\Delta}+\min\{u_x,u_y,u_z\}.
\end{equation}
Then
\begin{equation}
\log F(\Delta_x,\Delta_y,\Delta_z)
=
\max\{A(\mathbf{u}),M(\mathbf{u})\}.
\end{equation}
Both $A$ and $M$ are non-expansive in the $\ell_{\infty}$ norm:
\begin{equation}
\left|A(\widehat{\mathbf{u}})-A(\mathbf{u})\right|
\leq
\left\|\widehat{\mathbf{u}}-\mathbf{u}\right\|_{\infty},
\end{equation}
and
\begin{equation}
\left|M(\widehat{\mathbf{u}})-M(\mathbf{u})\right|
\leq
\left\|\widehat{\mathbf{u}}-\mathbf{u}\right\|_{\infty}.
\end{equation}
The non-expansiveness of the scalar maximum therefore gives
\begin{equation}
\left|
\log F(\widehat{\boldsymbol{\Delta}})
-
\log F(\boldsymbol{\Delta})
\right|
\leq
\left\|\widehat{\mathbf{u}}-\mathbf{u}\right\|_{\infty},
\end{equation}
which is equivalent to eq.~\ref{eq:directional-nonexpansive-bound}.
\end{proof}
Eq.~\ref{eq:directional-nonexpansive-bound} provides the global
stability estimate used in Sec.~\ref{sec:sub-maxmim}. The branchwise
interpretation of the mapping and its response to directional metric
perturbations are discussed there.

\bibliographystyle{elsarticle-num}
\bibliography{main}

\end{document}